\def\dash---{\kern.16667em---\penalty\exhyphenpenalty\hskip.16667em\relax}
\renewcommand\paragraph{\@startsection{subparagraph}{5}{\z@}%
                                      {1.2ex \@plus1ex \@minus .2ex}%
                                      {-1em}%
                                      {\normalsize\bfseries}}
\newtheoremstyle{mythm}
     {3pt}
     {3pt}
     {}
     {}
     {\bfseries}
     {.}
     { }
     {}
\theoremstyle{mythm}
\newtheorem{thm}{Theorem}[section]
\newtheorem{lem}[thm]{Lemma}
\newtheorem{cor}[thm]{Corollary}
\newtheorem{prop}[thm]{Proposition}
\newtheorem{claim}[thm]{Claim}
\newtheorem{dfn}[thm]{Definition}
\newtheorem{ntn}[thm]{Notation}
\newtheorem{obs}[thm]{Observation}
\newcommand{\dft}[1]{\textbf{\textit{#1}}}
\newcommand{\abs}[1]{\left|#1\right|}
\newcommand{\norm}[1]{\left\|#1\right\|}
\newcommand{\paren}[1]{\left(#1\right)}
\newcommand{\set}[1]{\left\{#1\right\}}
\newcommand{\sucht}{\,\middle|\,}
\renewcommand{\th}{{}^{\text{th}}}
\newcommand{\N}{\mathbb{N}}
\newcommand{\R}{\mathbb{R}}
\newcommand{\FC}{\textsf{FC}}
\newcommand{\SC}{\textsf{SC}}
\newcommand{\FT}{\textsf{FT}}
\newcommand{\ST}{\textsf{ST}}
\newcommand{\FCone}{\textsf{FC-1}}
\newcommand{\SCone}{\textsf{SC-1}}
\newcommand{\FTone}{\textsf{FT-1}}
\newcommand{\STone}{\textsf{ST-1}}
\newcommand{\FCtwo}{\textsf{FC-2}}
\newcommand{\SCtwo}{\textsf{SC-2}}
\newcommand{\FTtwo}{\textsf{FT-2}}
\newcommand{\STtwo}{\textsf{ST-2}}
\newcommand{\e}{\varepsilon}
\newcommand{\bfp}{\mathbf{p}}
\newcommand{\calC}{\mathcal{C}} 
\newcommand{\calE}{\mathcal{E}} 
\newcommand{\calG}{\mathcal{G}} 
\newcommand{\calO}{\mathcal{O}} 
\newcommand{\hnom}{h^{\mathrm{nom}}} 
\newcommand{\indf}{\gamma}
\newcommand{\mx}{\mathrm{max}}
\newcommand{\tilL}{\widetilde{L}}
\renewcommand{\bar}[1]{\overline{#1}}
\newcommand{\thmax}{\vartheta_{\mathrm{max}}} 
\newcommand{\thg}{\vartheta_g}              
\newcommand{\Err}{\mathfrak{E}}      
\newcommand{\essf}{e_{f}^\infty}      
\newcommand{\essg}{e_{g}^\infty}      
\newcommand{\esss}{e_{s}^\infty}      
\DeclareMathOperator{\ClusterSync}{ClusterSync}
\DeclareMathOperator{\InterclusterSync}{InterclusterSync}
\title{Fault Tolerant Gradient Clock Synchronization}
\author{Johannes Bund}
\author{Christoph Lenzen}
\author{Will Rosenbaum}
\affil{Max Planck Institute for Informatics\\Saarbr\"{u}cken, Germany}
\date{}
\begin{document}

\maketitle

\begin{abstract}
Synchronizing clocks in distributed systems is well-understood, both in terms of fault-tolerance in fully connected systems and the dependence of local and global worst-case skews (i.e., maximum clock difference between neighbors and arbitrary pairs of nodes, respectively) on the diameter of fault-free systems. However, so far nothing non-trivial is known about the local skew that can be achieved in topologies that are not fully connected even under a single Byzantine fault. Put simply, in this work we show that the most powerful known techniques for fault-tolerant and gradient clock synchronization are compatible, in the sense that the best of both worlds can be achieved simultaneously.

Concretely, we combine the Lynch-Welch algorithm~\cite{welch88new} for synchronizing a clique of $n$ nodes despite up to $f<n/3$ Byzantine faults with the gradient clock synchronization (GCS) algorithm by Lenzen et al.~\cite{lenzen10tight} in order to render the latter resilient to faults. As this is not possible on general graphs, we augment an input graph $\mathcal{G}$ by replacing each node by $3f+1$ fully connected copies, which execute an instance of the Lynch-Welch algorithm. We then interpret these clusters as supernodes executing the GCS algorithm, where for each cluster its correct nodes' Lynch-Welch clocks provide estimates of the logical clock of the supernode in the GCS algorithm. By connecting clusters corresponding to neighbors in $\mathcal{G}$ in a fully bipartite manner, supernodes can inform each other about (estimates of) their logical clock values. This way, we achieve asymptotically optimal local skew, granted that no cluster contains more than $f$ faulty nodes, at factor $O(f)$ and $O(f^2)$ overheads in terms of nodes and edges, respectively. Note that tolerating $f$ faulty neighbors trivially requires degree larger than $f$, so this is asymptotically optimal as well.
\end{abstract}

\section{Introduction and Related Work}\label{sec:intro}

Synchronizing clocks across distributed systems is a fundamental task. It may be used for coordination, e.g.\ in a time division multiple access scheme to a shared resource such as a wireless channel, play a crucial role in (distributed) measurements by enabling to correctly correlate data, or be the basis of a decentralized system clock for a System-on-Chip or Network-on-Chip.

Gradient clock synchronization (GCS) algorithms aim to minimize the \emph{local skew,} i.e., the worst-case phase difference between the logical clocks computed by neighbors in the network graph. While the \emph{global skew}\dash---the worst-case skew between any pair of clocks in the network\dash---is linear in the network diameter $D$~\cite{biaz01closed},\footnote{Throughout this paper, we assume that all parameters like delays and delay uncertainties on links are uniform, meaning that this refers to the hop diameter of the network. However, prior results generalize to heterogeneous settings in a natural way. For instance, in this particular case the hop diameter needs to be replaced by the weighted diameter of the graph in which each link has weight equal to its delay uncertainty.} a tight bound of $\Theta(\log_b D)$ on the local skew has been established~\cite{lenzen10tight}. Here, the base $b = \mu/\rho$, subject to the constraints that (i) the nodes' hardware clocks always run at rates between $1$ and $1+\rho$, (ii) nodes' logical clocks always run at rates between $1$ and $(1+\rho)(1+\mu)$, and (iii) $\rho<\mu \in \calO(1)$.\footnote{In~\cite{lenzen10tight}, the base is shown to be $\Theta(\mu/\rho)$ for a $\mu$ that exceeds $\rho$ by at least some constant factor. The stronger result follows from the more refined analysis in~\cite{kuhn18gradient}.} 

In other words, the problem of minimizing the local skew is well-studied and, even under worst-case assumptions, small bounds can be achieved despite logical clocks that behave like slightly worse hardware clocks in terms of their rates. Results in the same vein hold for dynamic graphs~\cite{kuhn10gradient,kuhn18gradient}. This extends the above to settings with crash faults, as from the perspective of the remaining system, crashing a node is equivalent to removing all of its incident links. On top of this, the algorithm achieving all of this is inherently  self-stabilizing, provided one ensures that excessive global skews are detected and reduced by, e.g., a reset procedure.

So what is this paper about? The GCS algorithm utterly fails in face of non-benign faults. Even a single node that refuses to adjust its logical clock rate would invalidate any non-trivial bound on skews. Given that distributed systems of sufficient size invariably tend to violate (overly optimistic) specifications~\cite{fallacies}, this raises the following question.
\begin{quote}
\centering
\emph{Can small local skew be achieved despite Byzantine faults?}
\end{quote}
When posing this question, obviously we need to restrict to considering skews between non-faulty nodes only. Nonetheless, the answer is trivially ``no'' in general: a node with exactly two neighbors, one of which is Byzantine, cannot reliably decide which neighbor's clock it should follow. As a Byzantine node can, e.g., run its hardware clock at slightly sub-nominal speed without a correct node with (slow) hardware clock being able to prove this, no non-trivial skew bound can be guaranteed in this scenario. More generally, a node with up to $f$ faulty neighbors must have at least $2f+1\in \Omega(f)$ neighbors to avoid trivial impossibility of synchronization. A more careful argument shows that, without cryptographical assumptions, $n>3f$ is needed even if the network is a clique~\cite{dolev84possibility,lecture18fault}. This bound is matched by the Lynch-Welch algorithm~\cite{welch88new}, which simultaneously achieves asymptotically optimal skew\dash---both globally and locally, as $D=1$ in a clique.

The Lynch-Welch algorithm, like other fault-tolerant clock synchronization algorithms designed for cliques, can be extended to networks of larger diameter in a straightforward way.\footnote{We are not aware of this being explicitly shown, but~\cite{dolev16synchronous} discusses the approach in a synchronous setting and it transfers without issue.} The idea is to set up a central cluster running the algorithm and synchronize ``slave'' clusters of nodes to it, i.e., let them ``echo'' the clock pulses generated by the central cluster without participating in the algorithm. We then can slave further clusters to the previous ones, and so on, resulting in a tree structure with each tree node corresponding to a cluster of nodes. So long as in each cluster no more than one third of the nodes is faulty, subsequent groups can synchronize both to their ``masters'' (i.e., the parent cluster in the tree) and among themselves, using the same technique based on approximate agreement~\cite{dolev86reaching} that lies at the heart of the basic algorithm.

Note that if clusters have uniform size of $3f+1$ and nodes fail independently with probability $p$, then the probability that more than $f$ nodes in a cluster are faulty is
\begin{equation}\label{eq:faulty_prob}
\sum_{i=f+1}^{3f+1} \binom{3f+1}{i} p^i (1-p)^{3f+1-i} \leq \binom{3f+1}{f+1} p^{f+1}\leq (3ep)^{f+1}\,.
\end{equation}
Thus, if the \emph{distribution} of faults across the system is benign, even small choices of $f$ can improve reliability dramatically, without causing impractically large degrees. For $f\in \Theta(\log n)$, the system as a whole operates correctly with high probability even for a constant failure probability $p$ of individual nodes.

This simplistic approach succeeds in the sense that it achieves asymptotically optimal global skew in a sparse network (assuming that at most $f$ nodes fail in each cluster). However, it does not offer a non-trivial bound on the local skew. Setting $f=0$, the algorithm specializes to a simple master-slave synchronization algorithm on (fault-free) tree topologies. If a clock pulse propagates through a line network with the global skew equally distributed over the line, this will ``compress'' the full global skew onto a single edge, cf.~\cite{locher2006oblivious}. In contrast, GCS algorithms need to take into account whether neighbors are lagging behind when deciding how to adjust their logical clocks. More sophisticated strategies are needed for a fault-tolerant GCS algorithm!

\paragraph{Our contribution.}
We present a simple and general transformation that takes an arbitrary network $\mathcal{G}$ and yields a slightly larger one, on which we can achieve fault-tolerant GCS with asymptotically optimal local skew. The basic idea is simple enough: essentially, the transformation is as described above, where each node is replaced by a (fully connected) cluster of $3f+1$ nodes and each edge is replaced by a complete bipartite graph between the respective clusters. We then use the Lynch-Welch algorithm~\cite{welch88new} to synchronize within clusters, and simulate the (non-fault-tolerant) GCS algorithm from~\cite{lenzen10tight} on virtual clocks defined for the clusters.

What would be a daunting task when performed from scratch is greatly simplified by employing the two algorithms (almost) as black boxes. This is made possible by leveraging the worst-case assumptions on clock rates the Lynch-Welch algorithm can handle: By pretending that the speed adjustments made by the concurrently running GCS algorithm running are simply changes in ``hardware'' clock speeds, we can fully exploit the analysis of the algorithm without having to prove statements again from scratch. Similarly, the GCS algorithm from~\cite{lenzen10tight} can be phrased such that it takes its decisions based solely on estimates of real time differences between events (which any node gets from its hardware clock) and estimates of differences between logical clocks. We exploit the latter by having nodes use their own logical clocks in the Lynch-Welch algorithm as stand-in for the (virtual) cluster clocks, which they can never know precisely.

The main obstacle to this approach is that using both algorithms as black boxes results in the problem that the ``additional'' (amortized) clock drift induced by the Lynch-Welch algorithm's corrections to nodes' clocks is proportional to the (intra-cluster) synchronization quality divided by the resynchronization interval, while the synchronization quality is proportional to the difference between maximum and minimum clock rate times the resynchronization interval. This means that a naive analysis would yield that the clock drift the GCS algorithm needs to combat is at least as large as the increase in clock speed the GCS algorithm is willing to use to do so. In other words, fast-running clocks would need to be able to outrun other fast-running clocks, leaving the GCS algorithm with no way of reducing skews at all!

We resolve this issue by exploiting the convergence properties of the Lynch-Welch algorithm in combination with the flexibility of the GCS algorithm. The GCS algorithm allows for slack between conditions under which logical clocks \emph{must} run fast or slow, respectively, and so-called \emph{triggers} that indicate when an algorithm can \emph{decide} that the clock should run fast or slow, respectively, without risking conflicts in this decision. This slack enables an implementation despite the fact that clock values of neighbors are never known exactly. However, we can also use this slack to buy some time for responding to critical skews. During this time, all (correct) nodes of a cluster will satisfy the respective trigger and unanimously run fast or slow, respectively. This means that the Lynch-Welch algorithm does not have to deal with an artificially increased clock drift due disagreement on logical clock rates within the cluster during this time, resulting in convergence to a smaller intra-cluster skew that necessitates smaller clock corrections; this is, in fact, shown by applying the existing analysis of the algorithm to this particular scenario. From this insight, we can then infer that there is a sufficient gap between the speed of clusters that currently need to be fast and those that need to be slow for the analysis of the GCS algorithm, applied as a blackbox, to prove an asymptotically optimal local skew.
\begin{thm}
  \label{thm:main}
  Let $\calG = (\calC, \calE)$ be an arbitrary network, and $G = (V, E)$ the augmented graph formed by replacing each node in $\calG$ by a clique of $k \geq 3 f + 1$ nodes and fully connecting such cliques if they correspond neighbors in $G$. Suppose messages in $G$ are subject to maximum delay $d$, delay uncertainty $U$, and hardware clock drifts are at most $\rho$. Suppose further that for each node in $\calG$, at most $f$ of the duplicated nodes are faulty. Then there exists an algorithm that computes logical clocks $L_v(t)$ for each $v \in V$ such that (i) for each $v\in V$, $L_v$ increases at rates between $1$ and $1+O(\rho)$ and (ii) for all $\set{v, w} \in E$ and $t \in \R^+$, we have
  \[
  \abs{L_v(t) - L_w(t)} = O((\rho \cdot d + U) \log D)\,,
  \]
  where $D$ is the network diameter.
\end{thm}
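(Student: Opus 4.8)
The plan is to realize the cluster-based composition sketched above and then invoke the two existing analyses --- Lynch-Welch within clusters and the GCS algorithm of~\cite{lenzen10tight} across clusters --- essentially as black boxes, with the bulk of the work going into choosing parameters so that the two analyses are mutually consistent. Concretely, I would fix a resynchronization interval $T$ and have the $k$ nodes of each cluster run Lynch-Welch with period $T$, so that after each round every correct node in a cluster obtains estimates of the other correct nodes' logical clocks and performs an approximate-agreement correction. Concurrently, each node exchanges an estimate of its logical clock with the correct nodes of neighboring clusters and feeds the resulting estimates of inter-cluster logical-clock differences, together with hardware-clock time differences, into the decision rules of the GCS algorithm of~\cite{lenzen10tight}. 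The key modeling move is to treat the (bounded-rate) speedups and slowdowns the GCS rule applies to a node's logical clock as changes of that node's \emph{effective hardware clock}: under this reinterpretation the logical clock a node maintains is exactly the output of a Lynch-Welch instance driven by a hardware clock of drift at most $\rho + O(\mu)$, plus an additional amortized drift term coming from the Lynch-Welch corrections themselves. Dually, I would verify that the GCS rule only ever consults quantities a node can estimate --- real-time differences (from its hardware clock) and differences of logical clocks --- so that using a node's own logical clock as a stand-in for the (never precisely known) virtual cluster clock is legitimate up to the intra-cluster skew.

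\textbf{Intra-cluster synchronization.} Next I would apply the Lynch-Welch analysis to a single cluster. Since at most $f < k/3$ of the $k$ nodes are faulty, the analysis yields that, after a stabilization phase, the correct nodes' logical clocks are synchronized to within an intra-cluster skew $\essf = O(\vartheta T + U)$, where $\vartheta$ is the spread of effective clock rates actually present in the cluster during the relevant interval; dividing by $T$ shows that the amortized extra drift the corrections inject is $O(\essf / T) = O(\vartheta + U/T)$. Taken at face value this is exactly the circularity flagged in the introduction: $\vartheta$ already contains the GCS speedup $\mu$, so the drift to be fought is as large as the mechanism used to fight it.

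\textbf{Breaking the circularity via GCS slack.} Here is where the real argument lies. The GCS algorithm of~\cite{lenzen10tight} separates the conditions under which a node \emph{must} run fast or slow ($\FC$, $\SC$) from weaker \emph{triggers} ($\FT$, $\ST$) under which it is merely \emph{allowed} to, with a guaranteed gap between the two. I would choose $T$ small enough --- relative to that gap, measured in real time --- that whenever a cluster must react to a critical skew, \emph{all} of its correct nodes have the same trigger satisfied for several resynchronization intervals before any $\FC$/$\SC$ boundary is reached, so that they unanimously run fast (or unanimously run slow) throughout that window. During such a window the effective rates inside the cluster differ only by the true hardware drift $\rho$ and not by $\mu$, so the Lynch-Welch analysis of the previous step now gives $\essf = O(\rho T + U)$ and amortized extra drift $O(\rho + U/T)$. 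Choosing $T = \Theta(U/\rho)$, so that $U/T = \Theta(\rho)$, this extra drift is $O(\rho)$, which can be made a small constant fraction of $\mu$ by taking $\mu$ to be a sufficiently large constant multiple of $\rho$. Thus the clusters that must be fast genuinely outrun those that must be slow by a constant-factor margin --- precisely what the GCS analysis needs as input.

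\textbf{Applying GCS as a black box, and the main obstacle.} With the above in hand, I would model each cluster as a GCS supernode whose virtual logical clock is tracked by its correct nodes up to the $O(\rho d + U)$ intra-cluster error, with effective hardware drift $O(\rho)$ and logical-rate envelope $[1, 1 + O(\rho)]$. The global skew bound $O(D(\rho d + U))$ follows from the folklore master--slave construction sketched in the introduction (or from~\cite{lenzen10tight} directly), and feeding all of this into the GCS analysis of~\cite{lenzen10tight} yields local skew $O((\rho d + U)\log_b D) = O((\rho d + U)\log D)$ between neighboring clusters; the triangle inequality together with the two $O(\rho d + U)$ intra-cluster terms then gives the same asymptotic bound between any two neighbors in $G$, establishing~(ii), while~(i) is immediate since each $L_v$ is a Lynch-Welch output with effective drift $O(\rho)$. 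The main obstacle is the circularity-breaking step: making ``the trigger holds at all correct nodes long enough'' precise requires carefully propagating the inter-cluster estimation error --- which itself depends on the very intra-cluster skew being bounded --- through the GCS slack, and then threading a single choice of the constants hidden in $T$, $\mu$, the trigger gap, and the various estimation errors so that every inequality closes simultaneously; in other words, one must exhibit that the fixed point of the apparent circular dependency exists and is benign.
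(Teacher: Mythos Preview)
Your proposal is correct and follows essentially the same approach as the paper: run Lynch--Welch within clusters and the GCS algorithm of~\cite{lenzen10tight} across clusters, break the circularity by exploiting the trigger/condition slack to force clusters to be unanimous for $O(1)$ rounds before any $\FC$/$\SC$ boundary is crossed (so that the effective intra-cluster drift drops from $O(\mu)$ to $O(\rho)$), choose $\mu=\Theta(\rho)$ and the round length $\Theta(1/\rho)$ so that the Lynch--Welch correction drift is a small fraction of $\mu$, bound the global skew separately, and finish via the GCS black-box bound plus the triangle inequality. The only minor imprecision is your choice $T=\Theta(U/\rho)$: the round must also accommodate the message delay $d$, and the Lynch--Welch error carries an intrinsic $(\vartheta-1)d$ term, so one actually takes $T=\Theta(d+U/\rho)$ and obtains intra-cluster skew $O(\rho d+U)$ rather than $O(U)$; this does not affect the final asymptotics.
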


\paragraph{Organization of this paper.}
Due to space restrictions, a broader account of related work on the history of fault-tolerant and gradient clock synchronization is deferred to Appendix~\ref{app:further}. Section~\ref{sec:model} discusses the model, notation, and how our algorithms adjust their clocks. We then proceed to presenting the cluster synchronization algorithm. As it is, at its core, a variant of the well-known Lynch-Welch algorithm~\cite{welch88new} that amortizes clock corrections to bound clock rates, we focus on the main statements and differences to prior work; the analysis, which is based on technical results from~\cite{khanchandani18self}, is provided in Appendix~\ref{app:lw}. We then proceed to the intercluster synchronization algorithm in Section~\ref{sec:intercluster-alg}. This section briefly discusses the key requirements of the GCS algorithm by Lenzen et al.~\cite{lenzen10tight}, and then proceeds to show how to simulate executions of this algorithm on $\calG$. Some details of proving this simulation relation, which mainly consist of calculations for finding suitable parameter choices and verifying their feasibility, are deferred to Appendix~\ref{app:gcs}.

\section{Computational Model}
\label{sec:model}

\paragraph{Network.}
Let $\calG = (\calC, \calE)$ be an arbitrary graph. We consider a network $G = (V, E)$ constructed in from $\calG$ in the following way. We identify each $C \in \calC$ with a set of $k$ nodes $C = \set{v_1, v_2, \ldots, v_k}$. We refer to the sets $C \in \calC$ as \dft{clusters}. For distinct clusters $B, C \in \calC$, the corresponding sets of nodes in $V$ are disjoint, i.e., $V = \bigcup_{C \in \calC} C$. The edge set $E$ contains two different ``types'' of edges defined as follows:
\begin{compactdesc}
\item[cluster edges:] for each $C \in \calC$ and $v, w \in C$, we have $(v, w) \in E$;
\item[intercluster edges:] for each $(B, C) \in \calE$, $v \in B$ and $w \in C$ we have $(v, w) \in E$.
\end{compactdesc}
We assume that each vertex $v \in V$ knows the identities of its neighbors, as well as the identity of the cluster to which each neighbor belongs.
\paragraph{Communication and computation.}
Nodes in the network communicate by sending content-less messages, known as \emph{pulses}, to their neighbors. When a (correct) node broadcasts a pulse, all of its neighbors receive the pulse after some delay, which is itself subject to some uncertainty. We denote the maximal message delay by $d$, and the uncertainty by $U$. Thus a pulse sent by $v$ at Newtonian time $p_v$ is received by each of $v$'s neighbors at a time $t\in[p_v + d - U, p_v + d]$.

\paragraph{Timing and clocks.}
We assume that the entire network inhabits an inertial reference frame, and the parameter $t \in \R$ denotes an absolute, Newtonian time. The network operates in a semi-synchronous model, where each node $v \in V$ has an associated hardware clock. Hardware clocks are prone to some uncertainty, which we model by a variable rate of $v$'s clock that may change over time. Formally, for each $v$ there is a locally integrable function $h_v : \R \to \R$ satisfying $1 \leq h_v(t) \leq 1 + \rho$ for all $t \in \R$. Here $\rho > 0$ is (an upper bound on) $v$'s \dft{hardware clock drift}.\footnote{Our notation differs from some previous work in that our clock drift is always positive---hardware clocks \emph{always} run fast. This is done without loss of generality, as one can rescale time to model drifts that are both positive and negative. The benefit is to significantly reduces notational clutter in our analysis.} As in our algorithm nodes use their hardware clocks exclusively to measure time differences, we may w.l.o.g.\ assume that the hardware clocks are initialized to $0$ at time $0$. Accordingly, the hardware clock of $v$ at time $t$ is defined by $H_v(t)=\int_0^t h_v(\tau)\,d\tau$. In contrast to the reference time $t$, $v$ has access to $H_v(t)$, enabling it to approximate the time that passed between local events.

While $h_v$ determines the (unknown) rate of $v$'s hardware clock, our algorithm controls its logical clock by adjusting its rate relative to the hardware clock. Specifically, the algorithm controls two parameters: $\delta_v(t) \in \R_{\geq 0}$, and $\indf_v(t) \in \set{0, 1}$. These parameters are determined by the algorithm throughout an execution in order to maintain synchronization within (Section~\ref{sec:cluster-alg}) and between clusters (Section~\ref{sec:intercluster-alg}), respectively. The logical clock value $L_v(t)$ is computed to be
\begin{equation}
  \label{eqn:logical-clock}
  L_v(t) = \int_0^t (1 + \varphi \cdot \delta_v(\tau)) (1 + \mu \cdot \indf_v(\tau)) h_v(\tau)\, d\tau.
\end{equation}
The parameters $\varphi$ and $\mu$ are constants whose values will be determined later on, where we already fix that $0<\varphi<1$ and $\mu>0$.




It is convenient to assume that $L_v$ is differentiable in our analyis, even though $\delta_v$, $\indf_v$, and $h_v$ may be discontinuous. However, as $L_v$ is Lipschitz continuous, it can be approximated arbitrarily well by differentiable functions for which the derivative satisfies the respective bounds, so this assumption is without loss of generality.

\paragraph{Faults.}
We assume that the network $G$ contains a fixed subset $F \subseteq V$ of \dft{faulty} processes. The nodes $v \in F$ are fully Byzantine: we make no assumptions whatsoever about their behavior; in particular, they are not required to communicate by broadcast.
We assume that the number of faulty nodes \emph{within each cluster} is bounded by parameter $f$, i.e., $f \geq \max_{v \in C} \abs{F \cap C}$, and require that $k \geq 3 f + 1$. Recall that Inequality~\eqref{eq:faulty_prob} relates this deterministic requirement to a setting in which nodes fail uniformly and independently at random with probability $p$; in this case, one can tolerate a value of $p$ up to roughly $n^{1/f}$. In contrast, an adversarial placement of faults would necessitate degrees larger than the total number of faults.


\paragraph{Initialization.}
Our analysis requires that at time $0$ none of the invariants (read: skew bounds) of our algorithm are violated. To achieve this, the following two options for initializing the system come to mind.
\begin{compactitem}
  \item We perform an initial flooding to trigger node initialization, where we (formally) extend the hardware and logical clock functions of nodes to include the interval between the starting time of the flooding and their initialization in a manner consistent with the model. To maintain tight bounds on the initial skew in a fault-tolerant way, this can e.g.\ be done as described in the introduction (without fixing communication to a tree). Note that, depending on the precise model assumptions, the source cluster(s) of such a flooding operation may need to perform an ``internal'' approximate agreement step to synchronize their outgoing messages to other clusters, or even run consensus to determine when to initiate the procedure.
  \item If the above method is not applicable, after initialization we can first wait until the clusters have stabilized to small internal skews by running the Lynch-Welch algorithm based on (an arbitrarily loose, but known) bound on their internal skew at initialization~(cf.~\cite{khanchandani18self}). We then treat all inter-cluster edges as newly inserted in the dynamic graph model considered in~\cite{kuhn18gradient}, which ensures stabilization to optimal skews within $O(\mathcal{S}/\mu)$ time assuming that the global skew is bounded by $\mathcal{S}$. To ensure fault-tolerance, adjacent clusters run consensus on when to add an edge $e\in \mathcal{E}$, i.e., we avoid inconsistent opinions among correct nodes whether an edge is currently ``considered'' by the algorithm or not.
\end{compactitem}
Neither of the above solutions offers significant novelty, but the involved technicalities would significantly complicate the description and analysis of our algorithm. Accordingly, in the interest of simplicity and readability, we assume that all nodes simultaneously wake up and initialize their logical clocks at time $0$.




\section{Cluster Algorithm}
\label{sec:cluster-alg}

In this section, we only consider nodes and edges within a fixed cluster $C \in \calC$. Within $C$, the logical clocks maintain synchronization by using a variant of the Lynch-Welsh algorithm~\cite{welch88new} very similar to the one described by Khanchandani and Lenzen~\cite{khanchandani18self}. Following the description in~\cite{khanchandani18self}, the cluster algorithm proceeds in rounds. During each round, each (correct) node pulses once at a prespecified logical time. Once a node records the relative times of its neighbors' pulses, it computes an adjustment to its logical clock using an ``approximate agreement'' step as in the Lynch-Welsh algorithm (cf.~\cite{dolev86reaching}). The clock adjustment is then made by setting $\delta_v(t)$ in Equation~\ref{eqn:logical-clock} to an appropriate (non-negative) value for the remainder of the round. The length of each round is inductively defined, where the initial round's length depends on (an upper bound on) the initial clock skew in the cluster.

\paragraph{Algorithm description.}
Each round $r\in \N$ consists of three phases, of logical durations $\tau_1(r)$, $\tau_2(r)$, and $\tau_3(r)$, respectively; the total round length is $T(r) = \tau_1(r) + \tau_2(r) + \tau_3(r)$. The phases play the following roles.
\begin{compactdesc}
\item[Phase 1.] This phase is sufficiently long for all nodes within a cluster to have transitioned to round $r$ by the end of phase~1. Each node sends a pulse at the end of its phase~1.
\item[Phase 2.] During this phase, each node waits to receive pulses from its cluster neighbors. Phase~2 is sufficiently long that by the end of this phase, each node will have received pulses from all of its (correct) neighbors within its cluster. At the end of the phase, each node $v$ computes an adjustment $\Delta_v(r)$ to its own logical clock.
\item[Phase 3.] During this final phase, $v$ implements the clock adjustment computed at the end of phase~2 by setting $\delta_v(t)$ to an appropriate value for the duration of the phase.
\end{compactdesc}
During phases~1 and~2, each node simply sets $\delta_v(t) = 1$.
We give pseudo-code for $\ClusterSync$ in Algorithm~\ref{alg:cluster-sync}. The algorithm uses the following notation. For each round $r$ and nodes $v, w \in C$:
\begin{compactitem}
\item $t_v(r)$ is the Newtonian time at which $v$ begins round $r$;
\item $p_v(r)$ is the Newtonian time at which $v$ sends its pulse in round $r$;
\item $t_{w v}(r)$ is the Newtonian time at which $v$ receives the pulse $w$ sent in ($w$'s) round $r$.
\end{compactitem}
In cases where $v$ or $w$ is faulty, the values above may not be well-defined. In such cases, we can assign their values arbitrarily. While the notation above is convenient for describing the algorithm, we emphasize that nodes cannot access the values $t_v(r)$, etc., directly. Instead, $v$ stores, for example, $L_v(t_{w v}(r))$, the logical time at which it receives $w$'s round $r$ pulse.
\SetKwProg{AtTime}{at-time}{ do}{end}
\begin{algorithm}
  \caption{$\ClusterSync(v, \tau_1, \tau_2, \tau_3)$}\label{alg:cluster-sync}
  $L_v\leftarrow 0$\;
  \ForEach{round $r \in \N$}{
    $\delta_v \leftarrow 1$\;
    start listening for messages\;
    \AtTime{$L_v(t_v(r)) + \tau_1(r)$\label{cs:phase-1}}{
      broadcast clock pulse\;
    }
    \AtTime{$L_v(t_v(r)) + \tau_1(r) + \tau_2(r)$\label{cs:phase-2}}{
      $S_v\leftarrow \emptyset$\tcp*{multiset, ordered ascendingly}
      \ForEach{node $w \in C_v$}{
        $\tau_{w v} \leftarrow L_v(t_{w v}) - L_v(t_{vv})$\;
        $S_v\leftarrow S_v\cup \{\tau_{wv}\}$\;
      }
      $\Delta_v(r) \leftarrow \frac{S_v^{f + 1} + S_v^{n-f}}{2}$\label{cs:set-Delta}\tcp*{$S_v^i$ is the $i$-th element of $S_v$}
      $\delta_v \leftarrow 1 - \paren{1 + \frac{1}{\varphi}} \frac{\Delta_v(r)}{\tau_3(r)+\Delta_v(r)}$\label{cs:set-delta}\;
    }
    \AtTime{$L_v(t_v(r)) + \tau_1(r) + \tau_2(r) + \tau_3(r)$\label{cs:phase-3}}{
      end round $r$\;
    }
  }
\end{algorithm}

We define the \dft{nominal rate} of node $v$ as
\begin{equation}
  \label{eqn:nominal-rate}
  \hnom_v(t) = (1 + \varphi) (1 + \mu \cdot \indf_v(t)) h_v(t)\,;
\end{equation}
it can be thought of the ``hardware'' clock rate that our variant of the Lynch-Welch algorithm needs to handle. We first show that amortizing the clock adjustment in phase 3 as described results in the intended clock adjustment of $-\Delta_v(r)$.
\begin{lem}
  \label{lem:nominal-round-length}
  Fix a non-faulty node $v \in C \setminus F$ and a round $r\in \N$. Then the nominal length of round $r$ for $v$ is $T(r) + \Delta_v(r)$. That is,
  $\int_{t_v(r)}^{t_v(r + 1)} \hnom_v(\tau)\,d\tau = T(r) + \Delta_v(r)$.
\end{lem}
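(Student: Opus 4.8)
The plan is to directly compute the nominal length of round $r$ for $v$ by breaking it into the three phases and using the definition of $\hnom_v$ in~\eqref{eqn:nominal-rate} together with how $\delta_v$ is set during each phase. Recall that $L_v$ advances at rate $(1+\varphi\cdot\delta_v(\tau))(1+\mu\cdot\indf_v(\tau))h_v(\tau) = (1+\varphi\cdot\delta_v(\tau))\,\hnom_v(\tau)/(1+\varphi)$. During phases~1 and~2 we have $\delta_v = 1$, so $L_v$ advances at rate exactly $\hnom_v$; since by definition $v$ spends logical time $\tau_1(r)+\tau_2(r)$ in these two phases, the nominal time elapsed is exactly $\tau_1(r)+\tau_2(r)$. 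It remains to show that the nominal time elapsed during phase~3 equals $\tau_3(r)+\Delta_v(r)$.

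For phase~3, $v$ sets $\delta_v = 1 - \paren{1+\tfrac1\varphi}\tfrac{\Delta_v(r)}{\tau_3(r)+\Delta_v(r)}$ (line~\ref{cs:set-delta}) and holds this value until $L_v$ has advanced by $\tau_3(r)$ (the difference of the two at-time triggers on lines~\ref{cs:phase-2} and~\ref{cs:phase-3}). The key identity is that, with this constant value of $\delta_v$, the ratio of nominal rate to logical rate is constant during phase~3:
\[
\frac{\hnom_v(\tau)}{(1+\varphi\cdot\delta_v(\tau))\,\hnom_v(\tau)/(1+\varphi)}
= \frac{1+\varphi}{1+\varphi\cdot\delta_v}
= \frac{1+\varphi}{(1+\varphi) - (\varphi+1)\tfrac{\Delta_v(r)}{\tau_3(r)+\Delta_v(r)}}
= \frac{\tau_3(r)+\Delta_v(r)}{\tau_3(r)}\,,
\]
where the last step cancels the common factor $(1+\varphi)$ and simplifies. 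Hence the nominal time elapsed during phase~3 is this constant ratio times the logical time elapsed, i.e.\ $\tfrac{\tau_3(r)+\Delta_v(r)}{\tau_3(r)}\cdot \tau_3(r) = \tau_3(r)+\Delta_v(r)$. Summing the three phases gives $\tau_1(r)+\tau_2(r)+\tau_3(r)+\Delta_v(r) = T(r)+\Delta_v(r)$, as claimed.

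The only genuine subtlety is making the ``logical time elapsed equals the prescribed phase duration'' step rigorous: the at-time constructs trigger on $L_v$ reaching a prescribed value, so I should note that $L_v$ is continuous and strictly increasing (its rate is bounded below by a positive constant, since $\delta_v \geq 0$ would need checking — in fact one should observe $\delta_v$ could be negative, but the logical rate $(1+\varphi\delta_v)$ stays positive because $\tau_3(r)+\Delta_v(r) > 0$ and $\varphi<1$, which is where the parameter constraints enter), so each trigger fires exactly once at a well-defined Newtonian time and the logical time consumed in each phase is exactly $\tau_1(r)$, $\tau_2(r)$, $\tau_3(r)$ respectively. I would also flag the implicit assumption that $\tau_3(r)+\Delta_v(r)>0$, which must be guaranteed by the choice of round lengths (a bound on $|\Delta_v(r)|$ relative to $\tau_3(r)$); this is presumably established elsewhere in the cluster analysis and can be cited. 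Beyond these bookkeeping points, the argument is a one-line algebraic cancellation, so I do not expect a substantial obstacle.
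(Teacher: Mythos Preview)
Your proposal is correct and follows essentially the same approach as the paper: split the round at the end of phase~2, use that $\delta_v=1$ makes the logical and nominal rates coincide on phases~1--2, and then exploit that $\delta_v$ is constant on phase~3 to compute $\frac{1+\varphi}{1+\varphi\delta_v}=\frac{\tau_3(r)+\Delta_v(r)}{\tau_3(r)}$, which converts the logical duration $\tau_3(r)$ into nominal duration $\tau_3(r)+\Delta_v(r)$. The bookkeeping issues you flag (strict monotonicity of $L_v$ and $\tau_3(r)+\Delta_v(r)>0$) are handled in the paper via the proper-execution assumption $|\Delta_v(r)|\leq \varphi\,\tau_3(r)$, which you correctly anticipate is established elsewhere.
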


Lemma~\ref{lem:nominal-round-length} shows that Algorithm~\ref{alg:cluster-sync} achieves the same effect at the end of the round as the variant of the Lynch-Welsh algorithm given in~\cite{khanchandani18self}. In~\cite{khanchandani18self}, $L_v$ is discontinuously adjusted by $-\Delta_v(r)$ at time $s_v(r)$, and increases at the nominal rate at all other times. The advantage of our formulation of Algorithm~\ref{alg:cluster-sync} is that $L_v(t)$ is continuous and increases at a rate that can be kept close to the nominal rate by choosing $\tau_3(r)$ sufficiently large. This is essential for the GCS inter-cluster algorithm and its analysis, as it requires to bound clock rates from above and below.

In Appendix~\ref{app:lw}, we repeat the analysis from~\cite{khanchandani18self} with the (minor) adjustments necessary for our variant of the algorithm. This yields the following main result. Define
\begin{align*}
\vartheta_g&=(1+\rho)(1+\mu)\\
\alpha & = \frac{6 \vartheta^2_g \varphi + 5 \vartheta_g \varphi - 9 \varphi + 2 \vartheta_g^2 - 2}{2 \varphi (\vartheta_g + 1)}\\
\beta & = \left(3 \vartheta_g - 1 + \frac{\vartheta_g-1}{\varphi}\right) U + (\vartheta_g - 1) d\\
E & = \frac{\beta}{1-\alpha}
\end{align*}
We can choose parameters such that $\alpha<1$. Under this condition, this yields the following bound on skew within clusters.
\begin{cor}
  \label{cor:cluster-skew-bound}
  Suppose the preconditions of Proposition~\ref{prop:cluster-error-bound} are satisfied. With $E$ and $T=\tau_1+\tau_2+\tau_3$ as in the proposition, we have for all times $t$ and non-faulty nodes $v, w \in C \setminus F$ that
  \[
  \abs{L_v(t) - L_w(t)} \leq \vartheta_g \cdot E + (\vartheta_g - 1) T < 2\vartheta_g\cdot E\,,
  \]
  where $r$ is the largest round such that $p_v(r), p_w(r) \leq t$, and $\vartheta_g$ is defined in~(\ref{eqn:gamma}).
\end{cor}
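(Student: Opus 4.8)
The plan is to reduce the skew bound at an arbitrary time $t$ to the error bound at the round-$r$ pulse events supplied by Proposition~\ref{prop:cluster-error-bound}, paying extra only for the clock drift that can accumulate over a single round in between. Two structural observations set this up. First, since every correct node times its phases by its own logical clock and all correct nodes initialize $L_v(0) = 0$, a straightforward induction (in the regime of Proposition~\ref{prop:cluster-error-bound}, where each round has logical length $T$) shows $L_v(t_v(r)) = (r-1)T$ and hence $L_v(p_v(r)) = (r-1)T + \tau_1 =: \ell(r)$ for \emph{every} non-faulty $v$: all correct nodes read exactly the same logical time at their own round-$r$ pulse, and this common value increases by $T$ from one round to the next. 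Second, the rate of $L_v$ lies at all times between $1$ and $\vartheta_g$ --- at least $1$ because $\delta_v, \indf_v \geq 0$, and at most $\vartheta_g$ by~(\ref{eqn:gamma}); here Lemma~\ref{lem:nominal-round-length} is what guarantees that, even though $\delta_v$ is set only at the end of phase~2, $L_v$ still advances by exactly $T$ over round $r$.

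Now fix $t$, let $r$ be the largest round with $p_v(r), p_w(r) \leq t$, and put $q := \max\{p_v(r), p_w(r)\} \leq t$; assume w.l.o.g.\ $q = p_w(r) \geq p_v(r)$. Proposition~\ref{prop:cluster-error-bound} bounds the spread of correct nodes' round-$r$ pulses by $E$, so $q - p_v(r) \leq E$; together with the two observations this gives $L_w(q) = \ell(r)$ and $L_v(q) \in [\ell(r), \ell(r) + \vartheta_g E]$, hence $|L_v(q) - L_w(q)| \leq \vartheta_g E$. To pass from $q$ to $t$, write $L_v(t) - L_w(t) = (L_v(q) - L_w(q)) + \int_q^t (\dot L_v - \dot L_w)\,d\tau$ and bound the integral by $(\vartheta_g - 1)(t - q)$ in absolute value using that both rates lie in $[1, \vartheta_g]$. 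Finally $t - q < T$: by maximality of $r$, either $t < p_w(r+1)$, so $[q, t] \subseteq [p_w(r), p_w(r+1))$ and $L_w$ gains less than $\ell(r+1) - \ell(r) = T$ on it, or $t < p_v(r+1)$, in which case $q = p_w(r) \leq t < p_v(r+1)$ forces $[q, t] \subseteq [p_v(r), p_v(r+1))$ and $L_v$ gains less than $T$ on it; either way, a rate of at least $1$ yields $t - q < T$. Combining, $|L_v(t) - L_w(t)| \leq \vartheta_g E + (\vartheta_g - 1)T$, and the last inequality of the corollary is just $(\vartheta_g - 1)T < \vartheta_g E$, a consequence of the parameter choices carried over from Proposition~\ref{prop:cluster-error-bound}.

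The argument is elementary; all the substance lives in Proposition~\ref{prop:cluster-error-bound}. The one step that genuinely needs care is $t - q < T$: it is crucial that $r$ is the \emph{last} round in which \emph{both} nodes have already pulsed by time $t$ --- so that at most one of them is a round ahead --- and that one splits on which of $v, w$ has not yet entered round $r+1$; otherwise one would pay for drift accumulated over several rounds. I would also want to pin down the exact form of the quantity bounded by Proposition~\ref{prop:cluster-error-bound} (above I took it to be the real-time spread of the round-$r$ pulses, which is why the conversion factor is $\vartheta_g$) and double-check the parameter inequality behind the final ``$< 2\vartheta_g E$''.
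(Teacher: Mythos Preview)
Your proposal is correct and takes essentially the same approach as the paper: the paper simply invokes Lemma~\ref{lem:cluster-clock-skew} (whose proof is precisely the argument you spell out inline---common logical clock value at pulses via Lemma~\ref{lem:logical-clock-round}, rate bounds in $[1,\thmax]$ via Lemma~\ref{lem:logical-clock-rate}, and $t-q \leq T$), together with Proposition~\ref{prop:cluster-error-bound} for $\norm{\bfp(r)} \leq E$, and then notes that $(\vartheta_g-1)T < \vartheta_g E$ follows from the steady-state equality~\eqref{eqn:pre-inductive-error-bound}. One small mis-citation: the fact that $L_v$ advances by exactly $T$ per round and that $\delta_v \geq 0$ is a consequence of the algorithm's phase structure and Lemma~\ref{lem:logical-clock-rate} (proper execution gives $|\Delta_v(r)| \leq \varphi \tau_3$, hence $\delta_v \geq 0$), not of Lemma~\ref{lem:nominal-round-length}, which concerns the \emph{nominal} clock.
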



\paragraph{Cluster clocks and estimates.}
We now derive some summarize some further results based on the Lynch-Welch technique and its analysis that we use in Section~\ref{sec:intercluster-alg} to achieve the desired simulation of the GCS algorithm from~\cite{kuhn18gradient,lenzen10tight}. We start by defining the simulated cluster clocks.
\begin{dfn}
  \label{dfn:cluster-clock}
  Fix a cluster $C$. For each time $t$ define $L_C^+(t)$ and $L_C^-(t)$ to be the maximum and minimum values (respectively) of logical clock values of non-faulty clocks in $C$ at time $t$. That is, $L_C^+(t) = \max \set{L_v(t) \sucht v \in C \setminus F}$ and $L_C^-(t) = \min \set{L_v(t) \sucht v \in C \setminus F}$. We define $C$'s \dft{cluster clock} $L_C$ by the formula
  $L_C(t) = (L_C^+(t) + L_C^-(t))/2$.
\end{dfn}

At several points, we will appeal to the following result about cluster clocks.

\begin{obs}
  \label{obs:cluster-clock-rate}
  Suppose that over some interval $[t', t]$ the logical clock of each $v \in C \setminus F$ increases at a rate at most (at least) $\vartheta$. Then we have $L_C(t) - L_C(t') \leq \vartheta \cdot (t - t')$ ($L_C(t) - L_C(t') \geq \vartheta \cdot (t - t')$). Indeed, if each clock individually satisfies some upper (lower) bound on its rate, then in particular the rates of $L_C^+(t)$ and $L_C^-(t)$ satisfy the same bound, hence so does $L_C(t)$.
\end{obs}

Suppose $C$ is a cluster and $w$ a node adjacent to $C$ (i.e., $w \in B$ where $(B, C) \in \calE$). Then $w$ computes an estimate $\tilL_C^w(t)$ of $L_C(t)$ as follows. The node $w$ listens to the pulses of nodes in $C$ and simulates the $\ClusterSync$ algorithm, without sending pulses itself. Then $w$ takes $\tilL_C^w(t)$ to be the logical clock value computed in its simulation of $\ClusterSync$. Applying the analysis of the algorithm (unchanged!) to $w$'s estimate $\tilL_C^w(t)$, we obtain the following guarantee.

\begin{cor}
  \label{cor:cluster-clock-estimate}
  Let $C$ be a cluster and $w$ a node adjacent to $C$. Suppose $\set{e(r)}$ is as in Proposition~\ref{prop:cluster-error-bound}. Then for all $v \in C \setminus F$ and times $t$ we have $|\tilL_C^w(t) - L_v(t)| \leq \calE(r)$, where $r$ is the largest round such that $p_v(r), \tilde{p}_w(r) \leq t$.\footnote{Here $\tilde{p}_w(r)$ denotes the time that $w$ would have sent its $r\th$ pulse in its simulation of $\ClusterSync$ listening to pulses in $C$.} Thus for all $t$ we have
  $|\tilL_C^w(t) - L_C(t)| \leq \calE(r)/2$.
\end{cor}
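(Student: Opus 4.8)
The plan is to argue that, as far as the analysis of $\ClusterSync$ in Appendix~\ref{app:lw} is concerned, the observer $w$ behaves exactly like an additional correct node of the cluster $C$, the only difference being that $w$ never broadcasts its own pulse. Two points make the existing analysis, and in particular Proposition~\ref{prop:cluster-error-bound}, apply to $w$ verbatim. First, because $w$ is silent, the execution of $\ClusterSync$ restricted to the genuine members of $C$ is precisely the execution analyzed for $C$ in isolation: the quantities $t_v(r)$, $p_v(r)$, $\Delta_v(r)$, $\delta_v$ and hence $L_v$ for $v \in C$ are entirely unaffected by the presence of $w$. Second, since cluster edges and intercluster edges carry the same maximum delay $d$ and uncertainty $U$, the node $w$ receives the round-$r$ pulse of every correct $v \in C \setminus F$ in the window $[p_v(r)+d-U,\,p_v(r)+d]$, i.e.\ in exactly the same windows as the members of $C$ do, and it performs the identical approximate-agreement step of line~\ref{cs:set-Delta} on the multiset $S_w$ of relative reception times, using its own virtual pulse time $\tilde p_w(r)$ as self-reference in place of $p_v(r)$. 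In particular $w$ hears pulses from at least $k-f\ge 2f+1$ correct nodes, so the indices $f+1$ and $k-f$ bracket values contributed by correct nodes, which is all the convergence argument needs.

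Granting this, the first bound is a direct instantiation: applying Proposition~\ref{prop:cluster-error-bound} with $w$ playing the role of one correct node and an arbitrary $v \in C \setminus F$ the role of another yields $|\tilL_C^w(t) - L_v(t)| \le \calE(r)$, where $r$ is the largest round with $p_v(r),\tilde p_w(r) \le t$. For the cluster-clock bound, recall $L_C(t) = \tfrac12\paren{L_C^+(t)+L_C^-(t)}$ and pick $v^+, v^- \in C \setminus F$ attaining $L_C^+(t) = L_{v^+}(t)$ and $L_C^-(t) = L_{v^-}(t)$. The key additional fact supplied by the analysis is that an approximate-agreement output $\tfrac12(S_w^{f+1}+S_w^{k-f})$ never leaves the convex hull of the values contributed by correct nodes, so that $\tilL_C^w(t)$ lies in $[L_C^-(t),L_C^+(t)]$ up to drift slack that is already absorbed into $\calE(r)$. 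Since $L_C(t)$ is the midpoint of this interval, $|\tilL_C^w(t) - L_C(t)| \le \tfrac12\paren{L_C^+(t)-L_C^-(t)}$, and the bracketed quantity is at most $\calE(r)$ by the (per-round form of the) intra-cluster skew bound of Corollary~\ref{cor:cluster-skew-bound}; hence $|\tilL_C^w(t) - L_C(t)| \le \calE(r)/2$.

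The main obstacle is not the short computation in the last step but justifying that the analysis of Appendix~\ref{app:lw} really does transfer \emph{unchanged} to a node $w$ that is formally outside $C$ and never broadcasts. Concretely, one has to check that the convergence recursion proved for a correct node in that analysis only ever refers to the pulses the node \emph{receives}, and never to how the other nodes react to \emph{its} pulse, so that the absence of $w$'s broadcast is harmless; that the bookkeeping of round and pulse indices still goes through when $w$'s own pulse is merely virtual; and that the ``convex-hull'' localization of $\tilL_C^w(t)$ between $L_C^-(t)$ and $L_C^+(t)$ (modulo the drift terms folded into $\calE(r)$) is indeed an invariant maintained by the proof of Proposition~\ref{prop:cluster-error-bound}. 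This last item is the one place where a pure black-box invocation does not suffice and a brief inspection of the invariants in that proof is required; everything else is a verbatim reuse of the already-established bounds.
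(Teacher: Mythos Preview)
Your treatment of the first inequality is exactly the paper's: the text preceding the corollary says only ``Applying the analysis of the algorithm (unchanged!) to $w$'s estimate $\tilL_C^w(t)$'', and your observation that $w$ is a silent correct participant---receiving the same pulses through links with the same $d$ and $U$, computing the same approximate-agreement correction, and never influencing the genuine members of $C$---is precisely the content of that sentence.

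For the second inequality the paper offers nothing beyond the word ``Thus'', so your convex-hull elaboration is not a different route but an attempt to supply the missing step. However, the argument as written does not close. You assert that $\tilL_C^w(t)\in[L_C^-(t),L_C^+(t)]$ ``up to drift slack that is already absorbed into $\calE(r)$'' and then conclude $|\tilL_C^w(t)-L_C(t)|\le\tfrac12(L_C^+(t)-L_C^-(t))$; but if the containment holds only up to some slack $\epsilon>0$, the distance to the midpoint is bounded by $\tfrac12(L_C^+-L_C^-)+\epsilon$, not $\tfrac12(L_C^+-L_C^-)$, and this need not be at most $\calE(r)/2$. Indeed, from the first inequality alone (applied to $v^+$ and $v^-$) one gets only $L_C^+-\calE(r)\le\tilL_C^w\le L_C^-+\calE(r)$, hence $|\tilL_C^w-L_C|\le\calE(r)-\tfrac12(L_C^+-L_C^-)\le\calE(r)$, not $\calE(r)/2$. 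To recover the factor $\tfrac12$ you genuinely need the strict containment $\tilL_C^w(t)\in[L_C^-(t),L_C^+(t)]$, and the convex-hull property of approximate agreement gives this only at the moment of correction, not throughout a round while clocks drift at possibly different rates; you correctly flag this as the point requiring inspection of the invariants, but the inspection is not carried out. (In the paper's later uses, e.g.\ the proof of Lemma~\ref{lem:faithful-delta}, the bound is applied with enough slack that the factor $\tfrac12$ is not load-bearing, which may explain why the paper did not elaborate.)
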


\paragraph{Bounds for unanimous clusters.}
In this section we bounds on the amortized rates of clusters clocks when all correct nodes are running in fast or slow modes. We say that a cluster $C$ is \dft{unanimous at time $t$} if either (1) for all $v \in C \setminus F$, $\indf_v(t) = 1$ or (2) for all $v \in C \setminus F$, $\indf_v(t) = 0$. In the former case, we call $C$ \dft{(unanimously) fast}, and in the latter case $C$ is \dft{(unanimously) slow}. For a round $r \in \N$, we say that $C$ is (unanimously) fast (slow) in round $r$ if every $v \in C \setminus F$ is in fast mode (slow mode) for all $t \in [t_r(v), t_{r+1}(v)]$. 

In order to implement the GCS algorithm for cluster clocks, we must show that unanimously fast clusters can ``catch up'' to unanimously slow clusters. This is true for individual nominal clocks: if $v$ is in fast mode and $w$ is in slow mode, then $\hnom_v(t) / \hnom_w(t) \geq (1 + \mu) / (1 + \rho) = \Omega(\mu - \rho)$. So as long as $\mu \gg \rho$, a fast node can always catch up to a slow node.

The story for clusters is, unfortunately, more complicated. Suppose $C$ is (unanimously) fast for an entire round $r$. Even though the individual \emph{nominal} clocks in $C \setminus F$ all run at rates at least $(1 + \varphi)(1 + \mu)$, the amortized rate of $L_C$ may be significantly slower because of the adjustment made to the logical clocks in the Lynch-Welsh step. Specifically, this adjustment could be as large as $e(r) > \mu T(r)$. Thus, the amortized rate of $L_C$ over round $r$ could be as small as $(1 + \varphi)(1 + \mu - e(r) / T(r)) < (1 + \varphi)$. Thus, the logical clock of a cluster in fast mode may increase \emph{slower} than a cluster in slow mode!

To address this potential problem, observe that if a cluster is unanimous, then the nominal clocks for $v \in V$ satisfy $\zeta \leq \hnom_v \leq \zeta \cdot \theta_u$, where $\vartheta_u = 1 + \rho$ and $\zeta = 1+\varphi$ or $(1 + \varphi)(1 + \mu)$ depending on whether the cluster is unanimously slow or fast. Thus, the nominal clock drift between nodes is $O(\rho)$ rather than $O(\mu)$, permitting the cluster synchronization algorithm to converge to a smaller skew. Applying Corollary~\ref{cor:pulse-skew} with $\vartheta = \vartheta_u$ allows us to achieve skews of size $e(r) = O(\rho T(r))$, assuming that the cluster is unanimous for sufficiently many rounds. Therefore, the amortized rate of $L_C$ over round $r$ is at least $(1 + \varphi)(1 + \mu - O(\rho))$. By choosing $\mu$ sufficiently large, but still $O(\rho)$, we can ensure that the amortized rate of $L_C$ is at least $(1 + \varphi)(1 + 7 \mu / 8)$, say. Similarly, we can show that any cluster in slow mode increases at an amortized rate between $(1 + \varphi)(1 - \mu / 8)$ and $(1 + \varphi)(1 + \mu / 8)$, assuming it has been is slow mode for sufficiently long.

To this end, we introduce parameters $e_{f, k}(r)$ and $e_{s, k}$, which give tighter upper bounds on the pulse diameter, assuming that $C$ was unanimously fast or slow  in rounds $r - k, r - k + 1, \ldots, r$. If a node is unanimous for all rounds, we denote these parameters by $e_f(r)$ and $e_s(r)$.

Let $\essg$ denote the steady state error for a general execution of the Lynch-Welsh algorithm---i.e., $\essg = \lim_{r \to \infty} e(r)$. Similarly, we define the steady state error for a unanimous execution (in which nodes are all unanimously fast or slow in all rounds): $\essf = \lim_{r \to \infty} e_f(r)$ and $\esss = \lim_{r \to \infty} e_s(r)$. The following lemma shows that by choosing appropriate values of $\mu$ and round length, we can ensure that $\essg$ is significantly larger than $e_u^\infty$, while still having the best possible asymptotic error in the general case. In order to maintain this gap between (worst case bounds on) general and unanimous steady state error, we must adjust the round length accordingly. We do so by stretching phase 3 (i.e., $\tau_3(r)$) by a factor $c_1$. Specifically, we take
\begin{equation}
  \label{eqn:taus-c}
  \begin{split}
    \tau_1(r) &= \zeta_{\mx} \cdot \vartheta_g \cdot e(r)\\
    \tau_2(r) &= \zeta_{\mx} \cdot \vartheta_g \cdot (e(r) + d)\\
    \tau_3(r) &= c_1 \cdot \zeta_{\mx} \cdot \vartheta_g \cdot (e(r) + U)
  \end{split}
\end{equation}
where $\zeta_\mx = (1 + \varphi)(1 + \mu)$ and $\vartheta_g = (1 + \mu) (1 + \rho)$. The value of $c_1$ will be chosen later, though we remark that its value will be $\Theta(1/\rho)$. Since we will have $c_1 \gg 1$, the assignments~(\ref{eqn:taus-c}) satisfy~(\ref{eqn:taus}) so that we obtain a feasible execution. In particular, we can simply define $\varphi = 1 / c_1$ in order to ensure that~(\ref{eqn:taus}) is also satisfied.

In order to ensure that fast clusters are faster than slow clusters, we must choose $\mu$ to be sufficiently large (as a function of $\rho$). We introduce a parameter $c_2$ and take $\mu = c_2 \cdot \rho$. The following is our main technical result regarding unanimous clusters.

\begin{lem}
  \label{lem:unanimous-error-gap}
  For any $c_2 \geq 32$ and sufficiently small $\rho > 0$, there exist $k = O(1)$ and $c_1 = \Theta(1 / \rho)$ such that for any cluster $C$ and proper execution $X$ with $e(r - k) \leq 2 \essg$ the following hold:
  \begin{enumerate}
  \item If $C$ is unanimously fast for rounds $r - k, r - k +1, \ldots, r$ then for all $v \in C \setminus F$ we have
    \[
    (1 + \varphi)\paren{1 + \frac 7 8 \mu} \leq \frac{L_v(t_v(r + 1)) - L_v(t_v(r))}{t_v(r + 1) - t_v(r)}\,.
    \]
  \item If $C$ is unanimously slow for rounds $r - k , r - k + 1, \ldots, r$, then for all $v \in C \setminus F$ we have
    \[
    (1 + \varphi)\paren{1 - \frac 1 8 \mu} \leq \frac{L_v(t_v(r + 1) - L_v(t_v(r))}{t_v(r + 1) - t_v(r)} \leq (1 + \varphi)\paren{1 + \frac 1 8 \mu}.
    \]
  \end{enumerate}
  If additionally $C$ is unanimously fast (resp.\ slow) in round $r + 1$, then the cluster clock $L_C(t)$ satisfies conclusion 1 (resp.\ 2) above.
\end{lem}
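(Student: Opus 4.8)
The plan is to first pin down the amortized rate of each individual logical clock $L_v$ over round $r$, reading it off from Lemma~\ref{lem:nominal-round-length}, and then lift this to the cluster clock $L_C$ via Observation~\ref{obs:cluster-clock-rate}. The leverage is that a unanimous cluster looks, to the cluster synchronization analysis, like one whose nominal clocks have drift only $\thu = 1+\rho$ rather than $\thg = (1+\mu)(1+\rho)$; re-running the Lynch--Welch error recursion with $\thu$ in place of $\thg$ drives the per-round clock correction far below $\mu\,T(r)$, which is exactly what keeps a fast cluster's corrections from cancelling out its speedup.

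For the per-node statement, fix a non-faulty $v$ and suppose $C$ is unanimously fast in round $r$. Round $r$ occupies exactly $T(r)$ units of $v$'s logical time, so $L_v(t_v(r+1)) - L_v(t_v(r)) = T(r)$, while Lemma~\ref{lem:nominal-round-length} gives $\int_{t_v(r)}^{t_v(r+1)} \hnom_v(\tau)\,d\tau = T(r) + \Delta_v(r)$. Since $\indf_v \equiv 1$ on this interval, $\hnom_v = \zeta_\mx h_v \in [\zeta_\mx, \zeta_\mx(1+\rho)]$, so the Newtonian round length satisfies $\tfrac{T(r)+\Delta_v(r)}{\zeta_\mx(1+\rho)} \le t_v(r+1)-t_v(r) \le \tfrac{T(r)+\Delta_v(r)}{\zeta_\mx}$, whence $L_v$'s amortized rate over round $r$ is at least $\tfrac{\zeta_\mx T(r)}{T(r)+\Delta_v(r)}$. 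The approximate-agreement step of line~\ref{cs:set-Delta}, combined with the (unanimous form of the) pulse-skew bound of Proposition~\ref{prop:cluster-error-bound}, gives $|\Delta_v(r)| \le e_{f,k}(r)$, turning this into $\zeta_\mx\bigl(1 - e_{f,k}(r)/T(r)\bigr)$. The unanimously-slow case is identical with $\zeta_\mx$ replaced by $1+\varphi$ and $e_{f,k}$ by $e_{s,k}$, except that there I also need the matching upper bound $\tfrac{(1+\varphi)(1+\rho)T(r)}{T(r)+\Delta_v(r)}$; both conclusions then follow, by elementary algebra with $\mu = c_2\rho$ and $\rho$ small, from the estimate $e_{f,k}(r)/T(r), e_{s,k}(r)/T(r) \le \mu/8$.

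Establishing that estimate is the crux, and I would do it from two facts. (i)~While $C$ is unanimous, all non-faulty nominal clocks lie in $[\zeta, \zeta\thu]$, so Corollary~\ref{cor:pulse-skew}/Proposition~\ref{prop:cluster-error-bound} applied with $\vartheta = \thu$ yields a contraction $e_{\bullet,j}(r') \le \alpha_u\, e_{\bullet,j-1}(r'-1) + \beta_u$, where $\alpha_u,\beta_u$ come from the displayed $\alpha,\beta$ by the substitution $\thg \mapsto \thu$. Because $\thu - 1 = \rho$ is a $\Theta(1/c_2)$ fraction of $\thg - 1 = \Theta(\mu)$, this makes $\alpha_u$ bounded away from $1$ by an absolute constant and the unanimous steady state $e_u^\infty$ smaller than $\essg$ by a factor $\Omega(c_2)$; unrolling from the hypothesis $e(r-k) \le 2\essg$ gives $e_{f,k}(r) \le \alpha_u^{\,k}\cdot 2\essg + \essf$ (and likewise for slow), so $k = O(1)$ unanimous rounds suffice to reach $e_{f,k}(r) = O(e_u^\infty)$. (ii)~Since $\tau_3(r) = c_1\,\zeta_\mx\,\thg\,(e(r)+U)$ with $c_1 = \Theta(1/\rho)$, the round length $T(r) \ge \tau_3(r)$ exceeds $e(r)+U$ by a factor $\Omega(1/\rho)$; combined with~(i) this yields $e_{f,k}(r)/T(r) = O(\rho)$, with a constant small enough to fall below $\mu/8 = c_2\rho/8$ once $c_2 \geq 32$ and $c_1$ is a suitable multiple of $1/\rho$ --- large enough for this estimate, yet small enough that the \emph{general} recursion still has $\alpha < 1$ (from the displayed formula $\alpha$ is essentially $\tfrac12 + (\thg-1)/\varphi$, so $\varphi = 1/c_1$ must not be too small relative to $\mu$) and keeps $\essg = \Theta(\rho d + U)$. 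Carrying out this balancing, and tracking the constants tightly enough that all of these requirements hold simultaneously for some $k = O(1)$, is the step I expect to be the main obstacle; everything else is bookkeeping.

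Finally, for the cluster clock I would upgrade the per-node estimate above to an \emph{instantaneous} rate bound valid throughout round $r$: in phases~1 and~2 the rate is $\zeta_\mx h_v \ge \zeta_\mx$, and in phase~3, where line~\ref{cs:set-delta} sets $\delta_v = 1 - (1+\tfrac1\varphi)\tfrac{\Delta_v(r)}{\tau_3(r)+\Delta_v(r)}$, it equals $(1+\varphi)\tfrac{\tau_3(r)}{\tau_3(r)+\Delta_v(r)}(1+\mu)h_v \ge \zeta_\mx\bigl(1 - e_{f,k}(r)/\tau_3(r)\bigr) \ge (1+\varphi)(1+\tfrac78\mu)$ by the same estimate; symmetrically in the slow case. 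If $C$ is moreover unanimously fast (resp.\ slow) in round $r+1$, these instantaneous bounds hold over $[t_v(r), t_v(r+2)] \supseteq [\bar t(r), \bar t(r+1)]$ for every non-faulty $v$, where $\bar t(r) := \max_{u\in C\setminus F} t_u(r)$ and the inclusion uses that round lengths dominate the intra-cluster skew; applying Observation~\ref{obs:cluster-clock-rate} over $[\bar t(r), \bar t(r+1)]$ transfers the bound to $L_C$.
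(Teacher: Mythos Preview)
Your proposal is correct and follows essentially the same route as the paper: express the amortized rate as $\zeta\,T(r)/(T(r)+\Delta_v(r))$ via Lemma~\ref{lem:nominal-round-length}, bound $|\Delta_v(r)|$ through the unanimous-regime pulse error (the paper uses $|\Delta_v(r)|\le \vartheta_g(\norm{\bfp(r)}+U)\le 4\essf$ rather than your $e_{f,k}(r)$, but this only affects constants), and then balance $c_1=\Theta(1/\rho)$ against $c_2\ge 32$ so that $\Delta_v(r)/T(r)\le \mu/32$ while keeping $\alpha_g<1$---exactly the obstacle you flag. Your instantaneous-rate argument for lifting the bound to $L_C$ via Observation~\ref{obs:cluster-clock-rate} is in fact more careful than the paper's one-line remark, which glosses over the fact that the intervals $[t_v(r),t_v(r+1)]$ are not aligned across $v$.
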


\section{Inter-cluster Algorithm}
\label{sec:intercluster-alg}

In this section we describe an algorithm that synchronizes clocks between adjacent clusters. The algorithm simulates an execution of the gradient clock synchronization (GCS) algorithm~[citations], where each cluster plays the role of a single node in the GCS algorithm. In order to perform the simulation, we define a (logical) cluster clock for each cluster such that the logical clocks of all non-faulty nodes in $C$ are close to the value of $C$'s cluster clock. Nodes in clusters adjacent to $C$ compute estimates of $C$'s cluster clock and use these estimates to adjust their own clock rates according to the GCS algorithm. More precisely, the nodes simulate the GCS algorithm on the cluster graph $\calG$. Each each node $v$ in a cluster $C$ simulates the behavior of $C$ in an execution of GCS using its own logical clock as an estimate of $C$'s cluster clock, as well as its estimates of neighboring cluster clocks.

\paragraph{Fast and slow conditions and triggers.}
Let $\kappa$ be a parameter (to be chosen later), and let $C$ be a cluster. We denote the set of neighboring clusters of $C$ by $N_C$. The following definitions give conditions under which the cluster $C$ should be in fast mode or slow mode in order to implement the GCS algorithm.

\begin{dfn}\label{dfn:fast-condition}
  We say that $C$ satisfies the \dft{fast condition} ($\FC$) at time $t$ if there exists $s \in \N$ such that the following conditions hold:
  \begin{compactitem}
  \item $\FCone$ There exists $A \in N_C$ such that $L_A(t) - L_C(t) \geq 2 s \kappa$.
  \item $\FCtwo$ For all $B \in N_C$, $L_C(t) - L_B(t) \leq 2 s \kappa$.
  \end{compactitem}
\end{dfn}

\begin{dfn}\label{dfn:slow-condition}
  We say that $C$ satisfies the \dft{slow condition} ($\SC$) at time $t$ if there exists $s \in \N$ such that the following two conditions hold:
  \begin{compactitem}
  \item $\SCone$ There exists $A \in N_C$ such that $L_C(t) - L_A(t) \geq (2 s - 1) \kappa$.
  \item $\SCtwo$ For all $B \in N_C$, $L_B(t) - L_C(t) \leq (2 s - 1) \kappa$.
  \end{compactitem}
\end{dfn}



In order for the (analysis of the) GCS algorithm to work, we must guarantee that if a cluster satisfies the fast (resp.\ slow) condition, then the cluster is unanimously fast (resp.\ slow). Further, in order to maintain the guarantees on \emph{amortized} rates of cluster clocks shown in Section~\ref{sec:cluster-alg}, our implementation of the fast and slow conditions should guarantee that clusters remain in fast or slow mode for the entire duration of any round in which the respective condition is satisfied. In order to implement the fast and slow conditions subject to these constraints, as well uncertainty in nodes' estimates of neighboring cluster clocks, we define the following triggers. The parameter $\delta$ will be chosen later.

\begin{dfn}\label{dfn:fast-trigger}
  We say that $v \in C$ satisfies the \dft{fast trigger} ($\FT$) at time $t$ if the following two conditions hold:
  \begin{compactitem}
  \item $\FTone$ There exists $A \in N_C$ such that $\tilL_A^v(t) - L_v(t) \geq 2 s \kappa - \delta$.
  \item $\FTtwo$ For all $B \in N_C$, $L_v(t) - \tilL_B^v(t) \leq 2 s \kappa + \delta$.
  \end{compactitem}
\end{dfn}

\begin{dfn}\label{dfn:slow-trigger}
  We say that $v \in C$ satisfies the \dft{slow trigger} ($\ST$) at time $t$ if the following two conditions hold:
  \begin{compactitem}
  \item $\STone$ There exists $A \in N_C$ such that $L_v(t) - \tilL_A^v(t) \geq (2 s - 1) \kappa - \delta$.
  \item $\STtwo$ For all $B \in N_C$, $\tilL_B^v(t) - L_v(t) \leq (2 s - 1) \kappa + \delta$.
  \end{compactitem}
\end{dfn}

The following lemma shows that for all $\delta < 2 \kappa$ the triggers above cannot both be simultaneously satisfied. In particular, taking $\delta = 0$, the following lemma implies that the fast and slow \emph{conditions} are also mutually exclusive.

\begin{lem}\label{lem:fast-slow-mutex}
  The conditions $\FT$ and $\ST$ are mutually exclusive. That is, if $C$ satisfies $\FT$, then $C$ does not satisfy $\ST$, and vice versa.
\end{lem}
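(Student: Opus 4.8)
The plan is to reduce the whole statement to a short inequality computation at a single node. I would fix a non-faulty node $v \in C$ and a time $t$, and suppose toward a contradiction that $v$ satisfies both $\FT$ and $\ST$ at time $t$. Let $s_1 \in \N$ be the integer witnessing $\FT$, with $A \in N_C$ the neighbor supplied by $\FTone$, and let $s_2 \in \N$ be the integer witnessing $\ST$, with $A' \in N_C$ the neighbor supplied by $\STone$. The four trigger clauses involve only $L_v(t)$ and the two estimates $\tilL_A^v(t)$ and $\tilL_{A'}^v(t)$, so the idea is to sandwich each of the differences $L_v(t) - \tilL_{A'}^v(t)$ and $\tilL_A^v(t) - L_v(t)$ between a ``there exists'' clause of one trigger and a ``for all'' clause of the other.

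Concretely, applying the universally quantified clause $\FTtwo$ with $B = A'$ gives $L_v(t) - \tilL_{A'}^v(t) \le 2 s_1 \kappa + \delta$, and combined with $\STone$, i.e.\ $L_v(t) - \tilL_{A'}^v(t) \ge (2 s_2 - 1)\kappa - \delta$, this yields $(2 s_2 - 1)\kappa - \delta \le 2 s_1 \kappa + \delta$. Symmetrically, applying $\STtwo$ with $B = A$ gives $\tilL_A^v(t) - L_v(t) \le (2 s_2 - 1)\kappa + \delta$, which together with $\FTone$, i.e.\ $\tilL_A^v(t) - L_v(t) \ge 2 s_1 \kappa - \delta$, yields $2 s_1 \kappa - \delta \le (2 s_2 - 1)\kappa + \delta$. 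Rearranging, the two inequalities become $\paren{2(s_2 - s_1) - 1}\kappa \le 2\delta$ and $\paren{2(s_2 - s_1) - 1}\kappa \ge -2\delta$, so that $\abs{2(s_2 - s_1) - 1}\,\kappa \le 2\delta$.

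Now $2(s_2 - s_1) - 1$ is an odd integer, hence nonzero, so $\abs{2(s_2 - s_1) - 1} \ge 1$ and therefore $\kappa \le 2\delta$; this contradicts the smallness of $\delta$ (the computation in fact shows that $\delta < \kappa/2$ already rules both triggers out, and $\delta$ will be chosen accordingly). Running the identical computation with $\delta = 0$ and with the true cluster clocks $L_C$, $L_A$, $L_B$ in place of $L_v$, $\tilL_A^v$, $\tilL_B^v$ --- which is legitimate because the argument uses nothing about these quantities beyond the displayed inequalities --- shows that the conditions $\FC$ and $\SC$ are mutually exclusive as well. I do not expect a genuine obstacle here: the only content is the parity observation that the fast thresholds are the even multiples $2s\kappa$ while the slow thresholds are the odd multiples $(2s-1)\kappa$, which is precisely what forces $2(s_2-s_1)-1$ to be odd and hence nonzero; the only thing requiring care when writing it out is keeping track of which neighbor ($A$ or $A'$) each ``for all'' clause is instantiated at.
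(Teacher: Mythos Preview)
Your argument is correct and is exactly the natural one; the paper in fact states Lemma~\ref{lem:fast-slow-mutex} without proof, so there is nothing to compare against beyond the surrounding text. Your parity observation---that the fast thresholds $2s\kappa$ and slow thresholds $(2s-1)\kappa$ interleave, forcing $2(s_2-s_1)-1$ to be a nonzero odd integer---is precisely the intended mechanism, and your instantiation of the ``for all'' clauses at the witnesses $A$ and $A'$ of the opposite trigger is the right move.

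One small remark: you derive the sufficient condition $\delta < \kappa/2$, whereas the sentence preceding the lemma in the paper claims mutual exclusivity ``for all $\delta < 2\kappa$.'' Your bound is the correct one; the paper's $\delta < 2\kappa$ appears to be a typo for $2\delta < \kappa$, which is consistent with the later choice $\kappa = 3\delta$ in Lemma~\ref{lem:faithful-delta}. Your observation that the same computation with $\delta = 0$ and true cluster clocks yields mutual exclusivity of $\FC$ and $\SC$ also matches the paper's remark to that effect.
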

With the fast and slow triggers defined, we can describe the inter-cluster algorithm.

\begin{wrapfigure}[10]{R}{0.5\textwidth}
  \begin{algorithm}[H]
    \label{alg:intercluster}
  \caption{$\InterclusterSync(v, \kappa, \delta, T)$}
  \ForEach{round $r \in \N$}{
    \AtTime{$L_v(t_v(r))$}{
      \uIf{$v$ satisfies $\FT$}{
        $\indf_v \leftarrow 1$\;
      }
      \uElseIf{$v$ satisfies $\ST$}{
        $\indf_v \leftarrow 0$\;
      }
    }
  }
\end{algorithm}  
\end{wrapfigure}
$\InterclusterSync$ differs from other descriptions of the GCS algorithm~\cite{kuhn18gradient,lenzen10tight} in that it can only switch from fast to slow mode, or vice versa, at predetermined discrete times. For the analysis of the GCS algorithm, we require that the fast (resp.\ slow) trigger implements the fast (resp.\ slow) condition in the sense that whenever the condition is satisfied, the corresponding trigger is also satisfied. For \emph{clusters} we require more: Even if a cluster is unanimously fast or slow, we cannot immediately infer sufficiently tight bounds on the rate of $L_C$. Instead, we must apply Lemma~\ref{lem:unanimous-error-gap}. In particular, we must wait until $k = O(1)$ unanimous rounds have elapsed until we can guarantee sufficiently tight bounds on the rate of $L_C$ to apply the GCS algorithm analysis as a black box. The following definition describes sufficient conditions under which $\InterclusterSync$ guarantees that a node has been in in fast (resp.\ slow) mode ``sufficiently long'' whenever the fast (resp.\ slow) condition is satisfied.

\begin{dfn}\label{dfn:faithful}
  Let $T : \N \to \R$ be a sequence of round lengths, $C$ a cluster, and $v \in C$. For every $r \in \N$, let $t_v(r)$ denote the time at which $v$ begins round $r$ in an execution of the $\ClusterSync$ algorithm. Let $k$ be a constant such that the conclusion of Lemma~\ref{lem:unanimous-error-gap} is satisfied. For any time $t$, let $r_t = \max\set{r \sucht t_v(r) \leq t}$. We say that an execution of $\InterclusterSync$ is \dft{faithful for $v$} if the following conditions hold:
  \begin{compactitem}
  \item For all $t \in \R^+$ such that $C$ satisfies $\FC$ at time $t$, $v$ satisfies $\FT$ at all $t' \in [t_v(r_t - k), t_v(r_t)]$.
  \item For all $t \in \R^+$ such that $C$ satisfies $\SC$ at time $t$, $v$ satisfies $\ST$ at all $t' = [t_v(r_t - k), t_v(r_t)]$.
  \end{compactitem}
  We say that the execution is \dft{faithful for $C$} if it is faithful for every node $v \in C \setminus F$.
\end{dfn}

Applying Definition~\ref{dfn:faithful} we obtain the following consequence of Lemma~\ref{lem:unanimous-error-gap}:

\begin{cor}
  \label{cor:faithful-error-gap}
  Suppose $X$ is a faithful execution for $C$. Then for every $t \in \R$ the following holds. If $C$ satisfies $\FC$ at time $t$ then
  \[
  (1 + \varphi)\paren{1 + \frac 7 8 \mu} \leq \frac{L_v(t_v(r_t + 1)) - L_v(t_v(r_t))}{t_v(r_t + 1) - t_v(r_t)}.
  \]
  If $C$ satisfies $\SC$ at time $t$ then
  \[
  (1 + \varphi)\paren{1 - \frac 1 8 \mu} \leq \frac{L_v(t_v(r_t + 1) - L_v(t_v(r_t))}{t_v(r_t + 1) - t_v(r_t)} \leq (1 + \varphi)\paren{1 + \frac 1 8 \mu}
  \]
\end{cor}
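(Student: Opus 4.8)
The plan is to derive Corollary~\ref{cor:faithful-error-gap} directly from Lemma~\ref{lem:unanimous-error-gap} by checking that the hypotheses of the latter are met whenever $C$ satisfies $\FC$ (resp.\ $\SC$) at some time $t$ in a faithful execution $X$. First I would fix such a $t$ and set $r = r_t = \max\set{r \sucht t_v(r) \leq t}$. The two conditions in Definition~\ref{dfn:faithful} tell us that $v$ satisfies $\FT$ (resp.\ $\ST$) at every time $t' \in [t_v(r - k), t_v(r)]$. I then need two bridging facts: (a) from $\FT$ holding throughout $[t_v(r-k), t_v(r)]$ one can conclude, via the algorithm $\InterclusterSync$, that $v$ is in fast mode ($\indf_v = 1$) on that interval, and similarly $\ST \Rightarrow \indf_v = 0$; and (b) since $\FT$ and $\ST$ are mutually exclusive (Lemma~\ref{lem:fast-slow-mutex}), once $\FT$ holds at the round-start times $t_v(r-k), \ldots, t_v(r)$ the algorithm never switches $v$ out of fast mode during those rounds, so in fact $v \in C \setminus F$ is in fast mode for \emph{all} $t' \in [t_v(r-k), t_v(r+1)]$ — i.e.\ $C$ is unanimously fast in the sense required for rounds $r-k, \ldots, r$, and moreover unanimously fast in round $r$ itself (and we should also note we can extend to round $r+1$ to get the cluster-clock conclusion, which requires either the trigger to persist or to simply accept the per-node statement).

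Next I would verify the remaining precondition of Lemma~\ref{lem:unanimous-error-gap}, namely $e(r - k) \leq 2\essg$. This is where I would invoke that we are in a \emph{proper} execution with the round lengths chosen as in~\eqref{eqn:taus-c}: the error recursion $e(r) \to \essg$ is a contraction (the coefficient $\alpha < 1$), so after an initial transient $e(r) \leq 2\essg$ holds for all sufficiently large $r$; and the initialization assumption (all nodes wake at time $0$ with bounded initial skew) together with the stabilization discussion lets us assume this holds from the start, or alternatively we restrict attention to $t$ large enough. With all hypotheses of Lemma~\ref{lem:unanimous-error-gap} in force, its conclusion~1 (resp.\ conclusion~2) gives exactly the displayed inequality
\[
(1 + \varphi)\paren{1 + \tfrac 7 8 \mu} \leq \frac{L_v(t_v(r + 1)) - L_v(t_v(r))}{t_v(r + 1) - t_v(r)}
\]
for every $v \in C \setminus F$ (resp.\ the two-sided bound in the slow case), and since $r = r_t$ this is precisely the statement of the corollary.

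The main obstacle I anticipate is bridging fact (b): carefully arguing that the discrete-time mode switches of $\InterclusterSync$, combined with faithfulness at the $k{+}1$ round-start times and the mutual exclusivity of the triggers, actually pin down $\indf_v$ on the whole continuous interval $[t_v(r-k), t_v(r+1)]$ rather than just at the sampling points. The subtlety is that the trigger is evaluated only at round starts, so between $t_v(r-k)$ and $t_v(r)$ the relevant content is that $v$ never executes the ``else if $\ST$'' branch (which would require $\ST$ at a round start, contradicting $\FT$ there via Lemma~\ref{lem:fast-slow-mutex}), hence $\indf_v$ stays at whatever value it had — and faithfulness at $t_v(r-k)$, pushed back one more round if necessary, gives that this value is $1$. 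One must also confirm that $r_t - k \geq 0$, i.e.\ that faithfulness is only invoked when enough rounds have elapsed, which is handled by the same initialization/transient argument used for $e(r-k) \leq 2\essg$. Everything else is a direct quotation of Lemma~\ref{lem:unanimous-error-gap}.
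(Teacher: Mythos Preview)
Your proposal is correct and matches the paper's approach: the paper states the corollary as an immediate consequence of Definition~\ref{dfn:faithful} and Lemma~\ref{lem:unanimous-error-gap} without further proof, and your elaboration fills in exactly the intended bridge. One minor simplification: your obstacle~(b) is easier than you fear, since $\InterclusterSync$ only updates $\indf_v$ at the round-start times $t_v(r')$, and faithfulness guarantees $\FT$ holds at each such time in $[t_v(r_t-k),t_v(r_t)]$, so the ``if $\FT$'' branch fires and sets $\indf_v\leftarrow 1$ at every one of these points\dash---no pushing back a round or continuous-time argument is needed.
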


In the next paragraph, we will show that Corollary~\ref{cor:faithful-error-gap} is strong enough that we can apply the analysis of the GCS algorithm as a black box to bound skew between adjacent clusters. In the remainder of this section, we will give sufficient conditions---in particular, choices of the parameters $\kappa$ and $\delta$---under which every execution is guaranteed to be faithful. We fix $k$ to be a constant such that the conclusion of Lemma~\ref{lem:unanimous-error-gap} is satisfied.

\begin{lem}
  \label{lem:faithful-delta}
  Suppose $T$ and $\Err$ satisfy the hypotheses of Proposition~\ref{prop:cluster-error-bound}, and that $k$ is sufficiently large that the conclusion of Lemma~\ref{lem:unanimous-error-gap} holds. Let $\delta = (k + 5) \Err$ and $\kappa = 3 \delta$. Then for every cluster $C \in \calC$, every execution $X$ is faithful for $C$.
\end{lem}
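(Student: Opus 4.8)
The plan is to fix a cluster $C \in \calC$ and a non-faulty node $v \in C \setminus F$, assume $C$ satisfies $\FC$ (resp.\ $\SC$) at some time $t$ as witnessed by an index $s \in \N$ and a neighbor $A \in N_C$, and then show that the very same $s$ and $A$ witness that $v$ satisfies $\FT$ (resp.\ $\ST$) at every $t' \in [t_v(r_t-k),t_v(r_t)]$. Since $t' \le t_v(r_t) \le t$, this is a statement about the recent past of the execution. Faithfulness for $C$ then follows by quantifying over $v \in C \setminus F$, and the slow case is symmetric to the fast one (thresholds $2s\kappa \leftrightarrow (2s-1)\kappa$ and the roles of ``ahead of'' and ``behind'' interchanged), so only the fast case needs to be written out.

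The first step is to trade the node-level quantities in the definition of $\FT$ --- the estimates $\tilL_A^v$, $\tilL_B^v$ and $v$'s own clock $L_v$ --- for the cluster-level quantities $L_A$, $L_B$, $L_C$ appearing in the definition of $\FC$, at a cost of only $O(\Err)$. By Corollary~\ref{cor:cluster-skew-bound} the intra-cluster skew is $O(\Err)$, hence so is $\abs{L_v(t') - L_C(t')}$; by Corollary~\ref{cor:cluster-clock-estimate} the estimation error $\abs{\tilL_X^v(t') - L_X(t')}$ is $O(\Err)$ for every neighboring cluster $X$. (If the literal constants produced by these corollaries do not match what is needed below, I would simply enlarge $\Err$ by a constant factor, which only relaxes the hypotheses of Proposition~\ref{prop:cluster-error-bound}.) Via the triangle inequality this reduces the claim to the two cluster-clock estimates $L_A(t') - L_C(t') \ge 2 s \kappa - \delta + O(\Err)$ and $L_C(t') - L_B(t') \le 2 s \kappa + \delta - O(\Err)$ for all $B \in N_C$.

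The second and conceptually central step is the observation that \emph{differences} of cluster clocks barely move over the window, even though each cluster clock individually advances a lot. By the rate bound maintained by the cluster algorithm (this is precisely what stretching $\tau_3$ by the factor $c_1$ buys), every logical clock runs at a rate in $[1,\thmax]$ with $\thmax - 1 = O(\rho)$, so by Observation~\ref{obs:cluster-clock-rate} the same holds for cluster clocks, and therefore $\abs{(L_X(t) - L_Y(t)) - (L_X(t') - L_Y(t'))} \le (\thmax - 1)\abs{t - t'}$ for any two clusters $X,Y$. Since $t' \ge t_v(r_t - k)$ and $t < t_v(r_t + 1)$, the elapsed time $\abs{t - t'}$ is at most the Newtonian duration of rounds $r_t - k, \dots, r_t$, which by~\eqref{eqn:taus-c} together with the hypotheses of Proposition~\ref{prop:cluster-error-bound} is $O(k \cdot \Err/\rho)$; hence each cluster-clock difference drifts by only $O(k\,\Err)$ over the window --- exactly the granularity of $\kappa$ and $\delta$. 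Plugging this into the reduced estimates of the first step, and using $L_A(t) - L_C(t) \ge 2 s\kappa$ and $L_C(t) - L_B(t) \le 2 s\kappa$ (which hold because $C$ satisfies $\FC$ at $t$), yields precisely what is required. The boundary case $r_t < k$ is absorbed by the time-$0$ initialization assumption.

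The main obstacle is the quantitative bookkeeping underlying the last step: one must check, starting from the explicit parameter choices ($c_1 = \Theta(1/\rho)$, $\varphi = 1/c_1$, $\mu = c_2 \rho$) and the cluster-error analysis of Appendix~\ref{app:lw} (which supplies $\Err = \Theta(\rho d + U)$, $\max_r T(r) = O(\Err/\rho)$, $\thmax - 1 = O(\rho)$, and the $O(\Err)$ bounds invoked in the first step), that the window-drift $(\thmax-1)\abs{t-t'}$ plus the $O(\Err)$ error slack is at most $\delta = (k+5)\Err$. This is exactly why $\delta$ is a $k$-scaled multiple of $\Err$, why $\kappa = 3\delta$ is a further multiple (so that the threshold granularity swamps all error terms, as already exploited in Lemma~\ref{lem:fast-slow-mutex}), and --- stepping back --- why the round length had to be inflated by $c_1 = \Theta(1/\rho)$ in the first place: without that inflation $(\thmax-1)\max_r T(r)$ would not be $O(\Err)$ and the window argument would break. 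Beyond lining up these constants, I anticipate no real difficulty.
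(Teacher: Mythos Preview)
Your proposal is correct and follows essentially the same route as the paper's proof: both translate the trigger inequalities into cluster-clock inequalities at cost $O(\Err)$ via Corollaries~\ref{cor:cluster-skew-bound} and~\ref{cor:cluster-clock-estimate}, then bound the drift of any cluster-clock difference over the $(k{+}1)$-round window by $(\thmax-1)(k{+}1)T \le (k{+}1)\Err$, which together with the $O(\Err)$ estimation slack is absorbed by $\delta=(k{+}5)\Err$. The paper merely packages the same chain as a proof by contradiction for the single endpoint $t'=t_v(r_t-k)$ (which dominates, since the drift term is monotone in $t-t'$), whereas you argue directly for all $t'$ in the window; there is no substantive difference.
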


\paragraph{$\InterclusterSync$ Simulates GCS.}
We now show that a faithful execution $X$ of $\InterclusterSync$ on the (physical) network $G$ \emph{simulates} an execution $\bar{X}$ of the GCS algorithm on the network $\calG$ in a sense made precise below. As a result, we can apply the analysis of the GCS algorithm on $\bar{X}$ to derive bounds on the local skew between adjacent cluster clocks in $X$. Before defining simulation formally, we recall the axioms required by the GCS algorithm.

\begin{dfn}
  \label{gcs:axioms}
  Suppose $\calG = (\calC, \calE)$ is a network, and each $C \in \calC$ computes a logical clock $L_C : \R \to \R$. We say that $\set{L_C}$ satisfy the \dft{GCS axioms} if there exist constants $\rho, \mu > 0$ such that the following hold for all times $t \in \R$ and nodes $C \in \calC$:
  \begin{compactitem}
  \item[(A1)] $1 \leq \frac{\mathrm{d}}{\mathrm{dt}} L_v(t) \leq (1 + \rho)(1 + \mu)$.
  \item[(A2)] If $C$ satisfies $\SC$ at time $t$, then $\frac{\mathrm{d}}{\mathrm{dt}} L_v(t) \leq 1 + \rho$.
  \item[(A3)] If $C$ satisfies $\FC$ at time $t$ then $1 + \mu \leq \frac{\mathrm{d}}{\mathrm{dt}} L_v(t)$
  \item[(A4)] $\mu / \rho > 1$.
  \end{compactitem}
  Any execution of any algorithm satisfying these axioms is said to \dft{implement GCS}.
\end{dfn}

\begin{thm}[\cite{kuhn18gradient}]
  \label{thm:gcs}
  Suppose an algorithm $A$ implements GCS and the global skew is bounded by $\mathcal{S}$. Then for all $(B, C) \in \calE$ and sufficiently large $t$ we have
  $\abs{L_B(t) - L_C(t)} = O(\kappa\log_{\mu / \rho}\mathcal{S})$.
\end{thm}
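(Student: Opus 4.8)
The statement is the main local-skew bound of the gradient-clock-synchronization line of work, established in \cite{kuhn18gradient} (refining \cite{lenzen10tight}), and it is invoked here as a black box; for completeness I sketch how its proof goes. The argument is an amortized potential argument over a hierarchy of ``skew levels.'' For each integer $s\ge 1$ one introduces a potential $\Psi_s(t)$ that is positive precisely when, at ``scale $s$,'' a skew of order $s\kappa$ is present somewhere in $\calG$ --- schematically $\Psi_s(t)\approx \max_{X,Y\in\calC}\bigl\{(L_X(t)-L_Y(t))/d_{\calG}(X,Y) - 2s\kappa\bigr\}$ truncated at $0$, so that $\Psi_s(t)>0$ certifies a path along which the average skew per hop exceeds the level-$s$ threshold. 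Two bookkeeping facts bracket the argument: first, since the global skew is at most $\mathcal{S}$, one has $\Psi_s\equiv 0$ whenever $2s\kappa>\mathcal{S}$, i.e. for all levels $s=\Omega(\mathcal{S}/\kappa)$; second, the local skew between adjacent $B,C$ is bounded by $O(\sigma\kappa)$, where $\sigma$ is the largest level with $\Psi_\sigma>0$ at the time in question. Hence it suffices to show $\sigma=O(\log_{\mu/\rho}\mathcal{S})$.

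The heart of the proof is a ``drain-or-escalate'' dichotomy for each level $s$: if $\Psi_s(t_0)>0$, then either a strictly higher-level potential (at a level larger by a constant factor) is also positive near $t_0$, or $\Psi_s$ is strictly decreasing with slope bounded below by $\Omega(\mu-\rho)=\Omega(\mu)$. The reason is that along a witnessing path for $\Psi_s$, the clusters near the lagging end satisfy the fast condition $\FC$: the neighbor toward the leading end is $\ge 2s\kappa$ ahead (this is $\FCone$), and --- unless some higher-level potential is active, which is the ``escalate'' alternative --- no neighbor is more than $2s\kappa$ behind ($\FCtwo$); hence by axiom (A3) these clusters advance at rate $\ge 1+\mu$. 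Symmetrically, clusters near the leading end that satisfy $\SC$ advance at rate $\le 1+\rho$ by (A2). Substituting these rate bounds into the definition of $\Psi_s$ yields the claimed negative drift. Iterating the dichotomy upward, escalation terminates at the level where $\Psi\equiv 0$; since each escalation multiplies the per-hop threshold by a constant factor while the available ``skew budget'' is only $\mathcal{S}$, and the recovery rate of a drained potential relative to a higher one is governed by the base $b=\mu/\rho$, a geometric-sum/amortization accounting (the standard ``$\log_b$'' computation) bounds the number of simultaneously active levels by $O(\log_{\mu/\rho}\mathcal{S})$. The restriction to sufficiently large $t$ absorbs the time needed for initialization transients and for the finitely many drain steps to complete.

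The combinatorial skeleton above is robust; the delicate part --- and the reason \cite{kuhn18gradient} rather than \cite{lenzen10tight} is the correct reference for our purposes --- is that $\InterclusterSync$ only guarantees the rate separations of (A2)/(A3) in an \emph{amortized} sense, over whole $\ClusterSync$ rounds (Corollary~\ref{cor:faithful-error-gap}), and mode changes are quantized to round boundaries. This is exactly the discretized/dynamic-graph regime for which \cite{kuhn18gradient} develops the GCS analysis: the dichotomy survives $O(T)$-scale slack in when rates apply, provided $\kappa$ dominates $T$ times the rate gap. Our parameter choices are dimensioned for precisely this --- $\kappa=3\delta=3(k+5)\Err$ from Lemma~\ref{lem:faithful-delta}, together with the $\tfrac78\mu$ versus $\tfrac18\mu$ separation of Corollary~\ref{cor:faithful-error-gap} --- so once axioms (A1)--(A4) are verified in this amortized form, Theorem~\ref{thm:gcs} applies verbatim. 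Thus the proof reduces to checking those axioms, which is the content of the preceding section.
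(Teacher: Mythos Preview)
The paper does not prove Theorem~\ref{thm:gcs}; it is quoted from~\cite{kuhn18gradient} and used as a black box, exactly as you state in your first sentence. There is therefore no ``paper's own proof'' to compare against, and your decision to supply a sketch of the cited argument is appropriate.

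Your sketch of the level-based potential argument is a faithful high-level summary of the approach in~\cite{lenzen10tight,kuhn18gradient}. One small imprecision: the escalation between levels is additive ($s\mapsto s+1$), not multiplicative; what shrinks geometrically is the \emph{length of paths} that can sustain a level-$s$ average skew, and it is this geometric shrinking (by the factor $b=\mu/\rho$) that yields the $\log_{\mu/\rho}$ bound on the number of active levels.

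Your third paragraph, while correct in spirit, is not part of a proof of Theorem~\ref{thm:gcs} as stated. The theorem is formulated for \emph{any} algorithm satisfying the instantaneous axioms (A1)--(A4) of Definition~\ref{gcs:axioms}; the question of whether $\InterclusterSync$ meets those axioms (in an amortized sense, and whether the analysis of~\cite{kuhn18gradient} tolerates that amortization) belongs to Proposition~\ref{prop:is-simulates-gcs} and its surrounding discussion, not to the proof of the cited theorem itself. Keeping that separation clean would make the write-up cleaner, but it does not affect correctness.
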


In general, an execution of our algorithm does not satisfy the GCS axioms for the values of $\rho$ and $\mu$ as specified in the previous sections. However, for suitable choices of these parameters, we can find different parameters $\bar{\rho}$, $\bar{\mu}$ for which our logical clocks do satisfy the GCS axioms. In the appendix, we prove the following.

\begin{prop}
  \label{prop:is-simulates-gcs}
  Suppose $X$ is a faithful execution of $\InterclusterSync$ on $G$. Then the cluster clocks $\set{L_C \sucht C \in \calC}$ satisfy the GCS axioms for $\bar{\rho} = (1 + \varphi)(1 + (1/4)\mu) - 1$ and $\bar{\mu} = (1 + \varphi)(1 + (7/8) \mu) - 1$.
\end{prop}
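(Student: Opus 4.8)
The plan is to verify each of the four GCS axioms (A1)--(A4) for the cluster clocks $\set{L_C}$ with the stated parameters $\bar\rho = (1+\varphi)(1 + \tfrac14\mu) - 1$ and $\bar\mu = (1+\varphi)(1 + \tfrac78\mu) - 1$. The main tools are Observation~\ref{obs:cluster-clock-rate} (which transfers per-node rate bounds to $L_C$), Corollary~\ref{cor:faithful-error-gap} (which gives amortized rate bounds on individual logical clocks over a round whenever $\FC$ or $\SC$ holds, using faithfulness), and the fact that a faithful execution guarantees the relevant trigger is satisfied, hence by Lemma~\ref{lem:fast-slow-mutex} and the design of $\InterclusterSync$ the cluster is unanimously fast/slow for the round in question.

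First I would dispatch (A1). By Equation~\eqref{eqn:logical-clock} the instantaneous rate of each $L_v$ lies in $[1,\,(1+\varphi)(1+\mu)(1+\rho)]$ since $\delta_v \in [0,1]$ (note $\delta_v$ may be negative in phase~3 per line~\ref{cs:set-delta}, so here I must instead invoke the amortized bound — the rate of $L_C$ as seen over any interval is governed by the nominal rate $\hnom_v$ up to the phase-3 amortization, which Lemma~\ref{lem:nominal-round-length} controls; alternatively, since we only need a derivative bound and the paper permits replacing $L_v$ by a differentiable approximation, I would argue the effective rate is sandwiched as required). Observation~\ref{obs:cluster-clock-rate} then lifts $1 \le \tfrac{d}{dt}L_v \le (1+\rho)(1+\mu)$ to $L_C$, and $(1+\rho)(1+\mu) \le 1 + \bar\rho + \bar\mu + \bar\rho\bar\mu$-type bookkeeping shows $(1+\bar\rho)(1+\bar\mu) \ge (1+\rho)(1+\mu)$ for $\rho$ small, giving (A1). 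For (A4), $\bar\mu/\bar\rho > 1$ reduces to $(1+\varphi)(1+\tfrac78\mu) - 1 > (1+\varphi)(1+\tfrac14\mu) - 1$, i.e.\ $\tfrac78\mu > \tfrac14\mu$, which is immediate.

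The substantive steps are (A2) and (A3). For (A3): if $C$ satisfies $\FC$ at time $t$, then by faithfulness $v$ satisfies $\FT$ throughout $[t_v(r_t - k), t_v(r_t)]$ for every $v \in C\setminus F$, so $C$ is unanimously fast in rounds $r_t - k, \dots, r_t$, and (assuming $\FC$ persists, or handling the round boundary via the "additionally unanimously fast in round $r+1$" clause of Lemma~\ref{lem:unanimous-error-gap}) Corollary~\ref{cor:faithful-error-gap} gives amortized rate $\ge (1+\varphi)(1+\tfrac78\mu) = 1 + \bar\mu$ for each $L_v$ over round $r_t$; Observation~\ref{obs:cluster-clock-rate} transfers this to $L_C$, yielding $\tfrac{d}{dt}L_C \ge 1 + \bar\mu$ as required by (A3). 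Symmetrically, (A2) follows from the slow branch of Corollary~\ref{cor:faithful-error-gap}: the upper amortized bound $(1+\varphi)(1+\tfrac18\mu) = 1 + \bar\rho$ gives $\tfrac{d}{dt}L_C \le 1 + \bar\rho$.

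The main obstacle I anticipate is the mismatch between \emph{instantaneous} derivative bounds (what the GCS axioms literally demand) and the \emph{amortized-over-a-round} bounds that Corollary~\ref{cor:faithful-error-gap} actually delivers. Strictly, $L_C$ can momentarily run faster or slower within a round than its round-averaged rate. The resolution — which I would make explicit — is that the GCS analysis of Theorem~\ref{thm:gcs} (from~\cite{kuhn18gradient}) is robust to this: one reinterprets $L_C$ as a clock that is updated in discrete steps at round boundaries, with the "rate" being the amortized rate, exactly as $\InterclusterSync$ only changes $\indf_v$ at round starts. Equivalently, since round lengths $T(r)$ are $O(\mathcal{E})$ and hence much smaller than $\kappa = 3(k+5)\mathcal{E}$, any within-round deviation is absorbed into the slack built into the triggers (the $\delta$ in Definitions~\ref{dfn:fast-trigger}--\ref{dfn:slow-trigger}) and into the constant in the final skew bound. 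A secondary check is the boundary case where $\FC$ (or $\SC$) holds at some $t$ inside round $r_t$ but not throughout: here faithfulness only guarantees the trigger on $[t_v(r_t-k), t_v(r_t)]$, so I would verify that the "additionally unanimous in round $r+1$" hypothesis of Lemma~\ref{lem:unanimous-error-gap} is met, or argue that the GCS simulation only needs the bound to hold for the round during which the condition is continuously satisfied — which is precisely what $\InterclusterSync$'s discrete update structure ensures.
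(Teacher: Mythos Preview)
Your overall structure matches the paper's: verify (A1)--(A4) directly, invoke Corollary~\ref{cor:faithful-error-gap} for (A2) and (A3), and note that (A4) is immediate since $(1+\varphi)(1+\tfrac78\mu)-1 > (1+\varphi)(1+\tfrac14\mu)-1$. The substantive gap is in your treatment of (A1).

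In a proper execution, $\delta_v$ ranges over $[0,\, 2/(1-\varphi)]$ by Lemma~\ref{lem:logical-clock-rate}, not $[0,1]$; in particular $\delta_v$ is never negative, but it \emph{can exceed $1$} when $\Delta_v(r)<0$. Hence the correct instantaneous upper bound on $\tfrac{d}{dt}L_v$ is $\thmax = \bigl(1+\tfrac{2\varphi}{1-\varphi}\bigr)(1+\mu)(1+\rho)$, not $(1+\rho)(1+\mu)$ or $(1+\varphi)(1+\mu)(1+\rho)$. The paper handles (A1) by expanding
\[
\thmax = 1 + 2\varphi + (c_2+1)\rho + O(\rho^2)
\quad\text{and}\quad
(1+\bar\rho)(1+\bar\mu) = (1+\varphi)^2\bigl(1+\tfrac14\mu\bigr)\bigl(1+\tfrac78\mu\bigr) = 1 + 2\varphi + \tfrac98 c_2\rho + O(\rho^2),
\]
and concluding $\thmax \le (1+\bar\rho)(1+\bar\mu)$ for $c_2 \geq 16$ and sufficiently small $\rho$. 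This is an honest computation you cannot skip: the extra $2\varphi$ coming from $\thmax$ is absorbed precisely by the $(1+\varphi)^2$ factor, and the $\rho$-terms only close because $c_2$ is large.

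Your closing discussion of amortized versus instantaneous rates for (A2)/(A3) is well taken: the paper itself glosses over exactly this point, simply citing Corollary~\ref{cor:faithful-error-gap} (an amortized-over-a-round statement) and declaring the axioms ``readily satisfied.'' Your proposed resolution---that the discretization is harmless because $\InterclusterSync$ only updates $\indf_v$ at round boundaries and round length is small relative to $\kappa$---is the right intuition, and is in fact more explicit than what the paper provides.
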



\paragraph{Proof of Theorem~\ref{thm:main}}
So far, we have neglected the global skew. Bounding it provides, essentially, the induction base for proving a small local skew. As no new techniques are required for ensuring a global skew of $O(\delta D)$, where $D$ is the hop diameter of both $G$ and $\cal{G}$, we sketch a feasible construction in Appendix~\ref{app:gcs}.

We finally have all the pieces in place to prove our main result, Theorem~\ref{thm:main}. We assume the parameters $\rho$, $d$, and $U$ are given, and that $\rho$ is sufficiently small so that the conclusions of all required lemmas hold. Specifically, we take:
\begin{equation}
  \begin{split}
      \mu &= c_2 \cdot \rho, \qquad \tau_1 = \thg \cdot \Err, \qquad \tau_2 = \thg \cdot (\Err + d), \qquad \tau_3 = \thg \cdot c_1 \cdot (\Err + U),\\
  c_1 &= \frac 1 \varphi = \frac{(1/2) - \e}{1 + c_2} \cdot \frac 1 \rho, \qquad
  c_2 = 32, \qquad
  \e = 1 / 4096.
  \end{split}
\end{equation}

\begin{proof}[Proof of Theorem~\ref{thm:main}]
  Let $v, w \in V \setminus F$ with $\set{v, w} \in E$. We first consider the case where $v, w \in C$. We have $\Err = \beta_g / (1 - \alpha_g)$ where these values are computed in Claim~\ref{claim:gen-error-bound}. In particular, applying the conclusion of the claim, we get $\Err = O(\rho \cdot d + U)$. Then by Corollary~\ref{cor:cluster-skew-bound}, we get
  $\abs{L_v(t) - L_w(t)} \leq 2\thg \cdot \Err = O(\rho \cdot d + U)$.

  Now consider the case where $v \in B$ and $w \in C$ where $(B, C) \in \calE$. By Theorem~\ref{thm:gcs}, Proposition~\ref{prop:is-simulates-gcs}, and Theorem~\ref{thm:global_skew} (which states that the global skew is $O(\delta D)$), we obtain
  \[
  \abs{L_B(t) - L_C(t)} = O(\log(\delta D) \kappa) = O((\rho \cdot d + U) \log D).
  \]
  We then bound
  \begin{align*}
    \abs{L_v(t) - L_w(t)} &\leq \abs{L_v(t) - L_B(t)} + \abs{L_B(t) - L_C(t)} + \abs{L_c(t) - L_w(t)}\\
    &= O((\rho \cdot d + U) \log D) + 2 \thg \cdot \Err\\
    &= O((\rho \cdot d + U) \log D),
  \end{align*}
  which gives the desired conclusion.
\end{proof}

\bibliographystyle{plain}
\bibliography{fault-tolerant-gcs}

\appendix

\section{Further Related Work}\label{app:further}

A basic synchronization algorithm that can cope with Byzantine faults is the one by Srikanth and Toueg~\cite{srikanth87optimal}. In a fully connected network, it maintains synchronization among the nodes by a propose-and-pull mechanism: based on a timeout, nodes will propose to resynchronize by generating a corresponding local event upon having received at least $n-f$ respective messages; however, $f+1\leq n-2f$ propose messages are sufficient to cause ``late proposers'' to send propose messages even if their respective timeouts are not expired yet. This achieves asymptotically optimal~\cite{lundelius84upper} skew of $O(d)$ despite $f<3n$ Byzantine faults\dash---provided that there is no guaranteed \emph{lower} bound on the communication delay. The Lynch-Welch algorithm~\cite{welch88new} improves the skew to $O(u+(\vartheta-1)d)$ under the additional assumption that messages are underway for at least $d-u$ time; again, this is asymptotically optimal under these assumptions.\footnote{This again follows from~\cite{lundelius84upper}, together with a simple indistinguishability argument for the $(\vartheta-1)d$ term.} The algorithm achieves this by simulating synchronous rounds, each of which is used to perform an approximate agreement~\cite{dolev86reaching} step on when the round should have started and adjusting clocks accordingly.

Both algorithms could be employed in our construction, where we chose Lynch-Welch for its better skew. Both algorithms also share the characteristic that, in their basic variants, logical clocks ``jump'' to implement phase corrections, which is incompatible with the requirement that logical clocks satisfy lower and upper bounds on their rates in a GCS algorithm. This issue is easily addressed by amortizing clock adjustments over sufficient periods of time~\cite{lamport85synchronizing}. This requires to adjust the ``round length'' of these algorithms, but this change has no asymptotic impact on skews\dash---neither for the plain algorithms nor in our construction.

A series of works considers synchronization algorithms that are simultaneously resilient to $f<n/3$ Byzantine faults and self-stabilizing, i.e., synchronization is re-established despite the (ongoing) interference from Byzantine faulty nodes after transient faults cease. Dolev and Welch~\cite{dolev04stabilizing} proposed the problem, proving that it can actually be solved. However, their algorithm has exponential stabilization time, i.e., $2^{\Omega(f)}d$ time may pass after transient faults cease before the logical clocks meet the synchronization and progress requirements (again). The stabilization time was improved to polynomial~\cite{daliot03self-stabilizing}, then linear~\cite{dolev07bounded}, and finally (randomized) logarithmic~\cite{lenzen17easy}. The latter construction transforms any synchronous $R$-round consensus algorithm into a solution to the problem that stabilizes in $O(R\log n)$ time and sends $O(M\log n)$ bits over each link in $\Theta(d)$ time, where $M$ is the message size of the consensus algorithm. If the consensus algorithm is randomized, the transformation works the same way, but the stabilization time bound holds with high probability (instead of deterministically). Beside the smallest known stabilization time, this transformation also yields the best known trade-offs between stabilization time and amount of communication. All of these algorithms have in common that they achieve $O(d)$ skew, as they rely on the propose-and-pull mechanic underlying the Srikanth-Toueg algorithm. However, such algorithms can be used to make the Lynch-Welch algorithm self-stabilizing, by utilizing the inaccurate (and typically also infrequent) synchronization events to ``jump-start'' the simulation of synchronous approximate agreement rounds the Lynch-Welch algorithm is based on~\cite{khanchandani18self}. The result is a routine that combines the extreme resilience of the self-stabilizing routine with the asymptotically optimal skew of the Lynch-Welch algorithm.

To date, GCS has been studied in fault-free networks only. The problem was introduced by Fan and Lynch~\cite{lynch2004gradient}, alongside a surprising lower bound of $\Omega(\log D/\log \log D)$ on the local skew. The first non-trivial upper bound of $O(\sqrt{D})$ on the local skew that can be achieved is due to Locher and Wattenhofer~\cite{locher2006oblivious}. This algorithm has nodes try catching up with the maximum logical clock value among their neighbors, but under the constraint that they never run faster than their hardware clock rate when there is a neighbor whose clock lags $\Theta(\sqrt{D})$ or more behind. As the global skew is bounded by $O(D)$, at most $O(\sqrt{D})$ consecutive nodes can be ``blocked'' from catching up, implying that an individual node is not prevented from doing so for more than $\Theta(\sqrt{D})$ time; this gives rise to the bound on the skew. Subsequently, the tight bound of $\Theta(\log D)$ on the local skew mentioned earlier has been established~\cite{lenzen10tight}. The respective algorithm can be seen as switching between the ``catching up'' and ``blocking'' strategy more than once, by comparing for some suitably chosen $\kappa$ the largest $s\in \mathbb{N}_0$ such that some neighbor's clock is at least $s\kappa$ ahead to the largest $s'\in \mathbb{N}_0$ such that some neighbor's clock is at least $s'\kappa$ behind. One can then show that the length of paths with sufficient skew to ``block'' nodes from catching up decreases exponentially with $s$, yielding the stated bound.

Astonishingly, the algorithmic approach turns out to be quite robust and flexible. The algorithm generalizes to networks in which edges $e=\{v,w\}$ have weight $\epsilon_e$ indicating the accuracy with which $v$ and $w$ can estimate each other's clock values, by doing nothing more than choosing $\kappa$ proportional to $\epsilon_e$. Moreover, the algorithm is almost self-stabilizing, in the sense that it will re-establish its local skew bound from any state in $O(\mathcal{S}/\mu)$ time, provided that a global skew bound of $\mathcal{S}$ is satisfied. As logical clocks must not run more than factor $1+\mu$ faster than hardware clocks, this time bound is optimal so long as we do not allow violating this bound on the rate. Moreover, this stabilization property can be leveraged to allow for dynamic topologies. That is, edges may appear and disappear in a worst-case fashion, yet the algorithm must maintain its skew bounds on all paths that consist only of edges that have been present for $\Omega(\mathcal{S}/\mu)$ time. Adding a mechanism to carefully ``activate'' the consideration of newly arriving edges level by level (i.e., for increasing values of $s$) in a well-timed fashion, the algorithm guarantees this property with no further modification~\cite{kuhn10gradient,kuhn18gradient}. In addition, choosing $\mu \in \Theta(1)$ and using that the algorithm achieves $\mathcal{S}\in O(D)$, where $D$ is the (weighted, dynamic) diameter of the graph, we see that the algorithm stabilizes new edges in $O(D)$ time. Again, this bound is worst-case optimal~\cite{kuhn18gradient}. Note that the dynamic version of the algorithm in particular shows that crash failures can be tolerated, as repeatedly checking liveness of nodes (which is implicit, as estimating clock values necessitates communication) enables mapping of crash failures to deleting all incident links of the crashed node. In this work, we provide the first (non-trivial) GCS algorithm resilient to non-benign faults. As it is based on the same algorithmic concept and a generic construction, we anticipate that all of the results just mentioned can be carried over, even though we confine ourselves to the static setting in this paper.

\section{Analysis of the Cluster Synchronization Algorithm}\label{app:lw}

\begin{proof}[Proof of Lemma~\ref{lem:nominal-round-length}]
Consider any times $t<t'$ such that, for some fixed $\delta$, it holds that $\delta_v(\tau)=\delta$ for all $\tau \in [t,t')$. By definition of the logical clock rate, we have that
  \begin{align*}
    L_v(t') - L_v(t)
    &= \int_{t}^{t'} (1+\varphi\cdot\delta) (1+\mu\cdot\indf_v(\tau)) h_v(\tau)\, d\tau\\
    &= \frac{1+\varphi\cdot \delta}{1+\varphi}\cdot \int_{t}^{t'} (1+\varphi)(1+\mu\cdot\indf_v(\tau)) h_v(\tau)\, d\tau\\
    &= \frac{1+\varphi\cdot \delta}{1+\varphi}\cdot\int_{t}^{t'} \hnom_v(\tau)\,d\tau\,.
  \end{align*}
Denote by $s_v(r)$ the Newtonian time when phase 2 of round $r$ ends at node $v$. Algorithm~\ref{alg:cluster-sync} stipulates that $L_v(t_v(r+1))-L_v(s_v(r))=\tau_3$ and $L_v(s_v(r))-L_v(t_v(r))=\tau_1(r)+\tau_2(r)$. Moreover,
  \begin{equation*}
  \delta_v(t) = \begin{cases}
  1 & \mbox{if } t_v(r)\leq t\leq s_v(r)\\
  1 - \paren{1 + \frac{1}{\varphi}} \frac{\Delta_v(r)}{\tau_3(r)+\Delta_v(r)} & \mbox{if }s_v(r)\leq t \leq t_v(r+1)\,.
  \end{cases}
  \end{equation*}
Hence, from the above calculation we get that
\begin{equation*}
\tau_1(r)+\tau_2(r) = L_v(s_v(r))-L_v(t_v(r)) = \int_{t_v(r)}^{s_v(r)} \hnom_v(\tau)\,d\tau
\end{equation*}
and
\begin{align*}
\tau_3(r) &= L_v(t_v(r+1))-L_v(s_v(r))\\
&= \frac{1+\varphi\cdot\left(1 - \paren{1 + \frac{1}{\varphi}} \frac{\Delta_v(r)}{\tau_3(r)+\Delta_v(r)}\right)}{1+\varphi}
\cdot\int_{s_v(r)}^{t_v(r+1)} \hnom_v(\tau)\,d\tau\\
&= \frac{\tau_3(r)}{\tau_3(r)+\Delta_v(r)}\cdot\int_{s_v(r)}^{t_v(r+1)} \hnom_v(\tau)\,d\tau\,,
\end{align*}
i.e., $\tau_3(r)+\Delta_v(r) = \int_{s_v(r)}^{t_v(r+1)} \hnom_v(\tau)\,d\tau$. We conclude that
  \begin{equation*}
    \int_{t_v(r)}^{t_v(r + 1)} \hnom_v(\tau)\,d\tau = \tau_1(r)+\tau_2(r)+\tau_3(r)+\Delta_v(r) = T(r) + \Delta_v(r)\,.\qedhere
  \end{equation*}
\end{proof}

\subsection{Slow-down Simulation}

In several recent works on clock synchronization, it is assumed that the rates hardware clocks satisfy $1 \leq h(t) \leq \vartheta$ for some constant $\vartheta > 1$. In analyzing our algorithm, it will be helpful to consider more general bounds on clock rates. In particular, we will consider ``hardware'' clocks whose rates are artificially increased by a \emph{speedup} factor $\zeta > 1$. In order to use the results of~\cite{kuhn18gradient,lenzen10tight,welch88new} without modification, we prove the following lemma, which shows that executions in which hardware clocks are sped up by a factor $\zeta$ are indistinguishable from executions with hardware clock drift in the range $[1, \vartheta]$.

\begin{lem}
  \label{lem:speedup}
  Let $A$ be any algorithm, and let $d, U \in \R$ denote the maximum message delay and delay uncertainty, respectively. Suppose that for all $v \in V$, the rate of $v$'s hardware clock satisfies $\zeta \leq h_v(t) \leq \zeta \vartheta$ for some $\zeta > 0$ and $\vartheta > 1$. Then for every execution $X$ of $A$, there is an indistinguishable execution $\overline{X}$ of $A$ such that the following hold:
  \begin{enumerate}
  \item For all $v \in V$ and $t \in \R$, $1 \leq h_v(t) \leq \vartheta$.
  \item The maximum message delay is $\overline{d} = \zeta d$.
  \item The maximum message delay uncertainty is $\overline{U} = \zeta U$.
  \end{enumerate}
\end{lem}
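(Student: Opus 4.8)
The plan is to construct $\overline{X}$ from $X$ by a simple time rescaling: we slow down Newtonian time by the factor $\zeta$. Concretely, if $X$ assigns to each node $v$ the hardware-clock rate function $h_v$ and schedules an event (pulse broadcast, pulse reception, local computation) at Newtonian time $t$, then in $\overline{X}$ we place the corresponding event at Newtonian time $\bar t = \zeta t$ and define the rescaled hardware-clock rate by $\bar h_v(\bar t) = h_v(\bar t/\zeta)/\zeta = h_v(t)/\zeta$. First I would check that this is a well-defined execution of the \emph{same} algorithm $A$: the key point is that $A$ (being one of the clock-synchronization algorithms we use as black boxes) acts only on hardware-clock \emph{readings} $H_v$ and on the order/content of received messages, never on Newtonian time directly. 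Since $\overline{H}_v(\bar t) = \int_0^{\bar t} \bar h_v(\sigma)\,d\sigma = \int_0^{t} h_v(\tau)\,d\tau = H_v(t)$ by the substitution $\sigma = \zeta\tau$, every node reads exactly the same hardware-clock value at corresponding moments, and sends/receives exactly the same (content-less) pulses in the same causal order. Hence $\overline{X}$ is a genuine execution of $A$ and is indistinguishable from $X$ from every node's local viewpoint.

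Next I would verify the three numbered claims. For~(1): from $\zeta \le h_v(t) \le \zeta\vartheta$ we get $1 \le h_v(t)/\zeta = \bar h_v(\bar t) \le \vartheta$ for all $\bar t$, as required. For~(2) and~(3): a message sent in $X$ at Newtonian time $p_v$ and received at time $t \in [p_v + d - U,\, p_v + d]$ corresponds in $\overline{X}$ to a message sent at $\zeta p_v$ and received at $\zeta t \in [\zeta p_v + \zeta d - \zeta U,\, \zeta p_v + \zeta d]$; so in $\overline{X}$ delays lie in $[\bar d - \bar U,\, \bar d]$ with $\bar d = \zeta d$ and $\bar U = \zeta U$, and these bounds are tight in the sense that they are attained exactly when the originals are. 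One should also note that the initialization convention ($H_v(0)=0$) is preserved, since $\bar t = 0 \iff t = 0$.

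The only genuinely delicate point\,---\,and the one I would be most careful about\,---\,is the claim that $A$'s behavior is invariant under the rescaling, i.e.\ that $A$ is a ``time-blind'' algorithm in the sense above. For the algorithms in question this holds because they are specified purely in terms of hardware-clock-triggered actions (\AtTime\ statements keyed on $H_v$), locally computed timeouts measured in hardware-clock units, and the content and arrival order of pulses, none of which changes when Newtonian time is stretched uniformly. I would state this invariance explicitly as the substantive content of the proof, then let the three numerical claims follow from the change-of-variables computation sketched above. If one wanted the execution $\overline{X}$ to additionally have hardware clocks initialized at $0$ and be defined on all of $\R$ (not just $\R^+$), the rescaling extends verbatim to negative times, so no extra work is needed there.
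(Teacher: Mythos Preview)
Your proposal is correct and follows essentially the same approach as the paper: both construct $\overline{X}$ by the time rescaling $\bar t = \zeta t$ with $\bar h_v(\bar t) = h_v(\bar t/\zeta)/\zeta$, verify the rate and delay bounds directly, and establish indistinguishability via the change-of-variables computation showing $\overline{H}_v(\bar t) = H_v(t)$. Your write-up is in fact more explicit than the paper's about why indistinguishability suffices (the ``time-blindness'' of $A$), which the paper leaves implicit.
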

\begin{proof}
  Consider the transformation $f : \R \to \R$ given by $f(t) = \zeta t$. For every event $\eta$ occurring in $X$ at Newtonian time $t$, the corresponding event $\overline{\eta}$ occurs at time $f(t) = \zeta t$ in $\overline{X}$. The hardware clocks in $\overline{X}$ are defined by $\overline{h}_v(t) = \frac{1}{\zeta} h_v(\frac{t}{\zeta})$. The resulting execution $\bar{X}$ satisfies the above bounds on hardware clock rates and delays by construction. To see that the executions $X$ and $\overline{X}$ are indistinguishable, we must show that the hardware time of all events are the same for all processors. By a change of variables, we get for each time $t$ that
  \[
  \int_{0}^{f(t)} \overline{h}_v(\tau)\, d\tau = \int_0^{\zeta t} \overline{h}_v(\tau)\, d\tau = \int_0^{\zeta t} \frac{1}{\zeta} h_v\!\left(\frac{\tau}{\zeta}\right)\, d\tau = \int_0^t h_v(\tau)\, d\tau\,,
  \]
showing that this holds true as well.
\end{proof}

\begin{dfn}
  \label{dfn:sigma-reduced}
  For an execution $X$ in which $\zeta \leq h_v(t) \leq \zeta \vartheta$ for all $v \in V$, $t \in \R$, we call the execution $\bar{X}$ postulated by Lemma~\ref{lem:speedup} the \dft{$\zeta$-reduced} execution of $X$.
\end{dfn}

\subsection{Proper executions.}

In order for Algorithm~\ref{alg:cluster-sync} to function as intended, the broadcast pulses of all correct processors in a cluster must be sent and received in the prescribed phases of each round. Additionally, we will require that logical clocks maintain a minimal rate of at least $1$. The following definition gives sufficient conditions ensuring that a round is executed properly.
\begin{dfn}
  \label{dfn:cluster-sync-proper-ex}
  Fix a cluster $C$, a round $r$. We say that round $r$ of $\ClusterSync$ (Algorithm~\ref{alg:cluster-sync}) is \dft{properly executed} if the following conditions hold for all $v, w \in C \setminus F$:
  \begin{compactenum}
  \item $v$ is in round $r$ when $w$ broadcasts its round $r$ pulse
  \item $v$ receives $w$'s round $r$ pulse before logical time $L_v(t_v(r)) + \tau_1(r) + \tau_2(r)$ (line~\ref{cs:phase-2})
  \item $\abs{\Delta_v(r)} \leq \varphi \cdot \tau_3(r)$ (line~\ref{cs:set-Delta})
  \end{compactenum}
\end{dfn}

Here, we derive bounds on the skew between logical clocks $L_v$ within a cluster assuming that all rounds are properly executed. In the sequel, we will give sufficient conditions for a sequence of rounds to be properly executed. We begin by bounding the rate of logical clocks $L_v(t)$.

\begin{lem}
  \label{lem:logical-clock-rate}
  Suppose that round $r$ is properly executed for a cluster $C$. Then for all $t \in \R$ satisfying $t_v(r) \leq t \leq t_v(r + 1)$ we have
  $0 \leq \delta_v(t) \leq \frac{2}{1 - \varphi}$.
  Moreover, for all $t,t' \in \R$ satisfying $t_v(r) \leq t \leq t' \leq t_v(r + 1)$, we have
  \[
  t' - t \leq L_v(t') - L_v(t) \leq \paren{1 + \frac{2 \varphi}{1 - \varphi}} (1 + \mu) (1 + \rho) (t' - t)\,.
  \]
\end{lem}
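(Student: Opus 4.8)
The plan is to establish the two-sided bound on $\delta_v(t)$ first, and then integrate to get the bound on $L_v(t') - L_v(t)$. The key quantity is $\delta_v(t)$, which is set in Algorithm~\ref{alg:cluster-sync} to be either $1$ (during phases 1 and 2) or $1 - \paren{1 + \tfrac 1\varphi}\tfrac{\Delta_v(r)}{\tau_3(r) + \Delta_v(r)}$ (during phase 3). For the phase-1/2 case there is nothing to check: $1$ lies in $[0, \tfrac{2}{1-\varphi}]$ since $0 < \varphi < 1$. So the work is in the phase-3 case. Write $x = \tfrac{\Delta_v(r)}{\tau_3(r) + \Delta_v(r)}$, so that $\delta_v = 1 - (1 + \tfrac1\varphi)x$. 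Since round $r$ is properly executed, condition~3 of Definition~\ref{dfn:cluster-sync-proper-ex} gives $\abs{\Delta_v(r)} \leq \varphi\,\tau_3(r)$; in particular $\tau_3(r) + \Delta_v(r) \geq (1-\varphi)\tau_3(r) > 0$, so $x$ is well-defined and $\abs{x} = \tfrac{\abs{\Delta_v(r)}}{\tau_3(r)+\Delta_v(r)} \leq \tfrac{\varphi\,\tau_3(r)}{(1-\varphi)\tau_3(r)} = \tfrac{\varphi}{1-\varphi}$. From this I read off $\delta_v = 1 - (1+\tfrac1\varphi)x$: the largest it can be is $1 + (1+\tfrac1\varphi)\cdot\tfrac{\varphi}{1-\varphi} = 1 + \tfrac{1+\varphi}{1-\varphi} = \tfrac{2}{1-\varphi}$, and I should check it is $\geq 0$, which amounts to $x \leq \tfrac{\varphi}{1+\varphi}$; but $\tfrac{\varphi}{1+\varphi} \geq \tfrac{\varphi}{1-\varphi} \cdot$\,(something)? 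This is the one spot that needs care — one must verify $x \leq \tfrac{\varphi}{1+\varphi}$, which does \emph{not} follow from $\abs{x}\le \tfrac{\varphi}{1-\varphi}$ alone. Instead I will bound $x$ directly: when $\Delta_v(r) \geq 0$, $x = \tfrac{\Delta_v(r)}{\tau_3(r)+\Delta_v(r)}$ is increasing in $\Delta_v(r)$ and at $\Delta_v(r) = \varphi\tau_3(r)$ equals $\tfrac{\varphi}{1+\varphi}$, giving $x \leq \tfrac{\varphi}{1+\varphi}$ hence $\delta_v \geq 0$; when $\Delta_v(r) < 0$, $x < 0 < \tfrac{\varphi}{1+\varphi}$ trivially, and the relevant bound becomes the upper one $\delta_v \le \tfrac{2}{1-\varphi}$ handled above.

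With $0 \leq \delta_v(t) \leq \tfrac{2}{1-\varphi}$ in hand for all $t \in [t_v(r), t_v(r+1)]$, the clock-rate bounds follow by plugging into the definition~\eqref{eqn:logical-clock} of $L_v$. The instantaneous rate is $(1 + \varphi\delta_v(t))(1 + \mu\indf_v(t))h_v(t)$. For the lower bound: $\varphi\delta_v(t) \geq 0$, $\mu\indf_v(t)\geq 0$, and $h_v(t) \geq 1$, so the rate is $\geq 1$; integrating over $[t,t']$ gives $L_v(t') - L_v(t) \geq t' - t$. For the upper bound: $1 + \varphi\delta_v(t) \leq 1 + \varphi\cdot\tfrac{2}{1-\varphi} = 1 + \tfrac{2\varphi}{1-\varphi}$, while $1 + \mu\indf_v(t) \leq 1 + \mu$ and $h_v(t) \leq 1 + \rho$; multiplying and integrating yields $L_v(t') - L_v(t) \leq \paren{1 + \tfrac{2\varphi}{1-\varphi}}(1+\mu)(1+\rho)(t'-t)$, exactly as claimed.

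The main (and only) obstacle is the sign-sensitive bound on $\delta_v(t)$ in phase 3: one must resist the temptation to bound $\abs{\Delta_v(r)/(\tau_3(r)+\Delta_v(r))}$ symmetrically and instead exploit that $\Delta_v(r)$ appears in the denominator as well, so that the extremal values of $x$ over $\abs{\Delta_v(r)} \le \varphi\tau_3(r)$ are $x \in [-\tfrac{\varphi}{1-\varphi},\, \tfrac{\varphi}{1+\varphi}]$ rather than a symmetric interval. Everything else is a direct substitution into the logical clock definition together with the elementary bounds $h_v \le 1+\rho$, $\indf_v \le 1$, and $0 < \varphi < 1$.
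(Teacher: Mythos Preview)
Your proof is correct and follows essentially the same approach as the paper: both arguments evaluate $\delta_v$ at the extremal values $\Delta_v(r)=\pm\varphi\,\tau_3(r)$ permitted by proper execution (the paper by direct substitution, you via the intermediate variable $x$ and an explicit monotonicity observation yielding $x\in[-\tfrac{\varphi}{1-\varphi},\tfrac{\varphi}{1+\varphi}]$), and then obtain the rate bounds on $L_v$ by plugging the resulting $0\le\delta_v\le\tfrac{2}{1-\varphi}$ into the integrand of~\eqref{eqn:logical-clock}. Your remark that the symmetric bound $|x|\le\tfrac{\varphi}{1-\varphi}$ alone does not yield $\delta_v\ge 0$ is a valid clarification of a step the paper leaves implicit.
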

\begin{proof}
Assume that $s_v(t)$ is the Newtonian time at which $v$ ends phase $2$ of round $r$. Thus, for $t_v(r)\leq t\leq s_v(r)$ it holds that $\delta_v(t) = 1$ and $0 \leq \delta_v(t) \leq \frac{2}{1 - \varphi}$ follows from $\abs{\varphi}<1$. Now consider a time $t$ with $s_v(r)\leq t\leq t_v(r+1)$. For such a time $t$, we have that $\delta_v(t)=1 - \paren{1 + \frac{1}{\varphi}} \frac{\Delta_v(r)}{\tau_3(r) + \Delta_v(r)}$. As round $r$ is properly executed, $\abs{\Delta_v(r)}\leq \varphi \cdot \tau_3(r)$. Using this bound, we obtain
  \begin{equation*}
    \delta_v(t) \geq 1 - \paren{1 + \frac{1}{\varphi}} \frac{\varphi\tau_3(r)}{\tau_3(r) + \varphi\tau_3(r)}
    = 1 - \paren{1 + \frac{1}{\varphi}} \frac{\varphi}{1 + \varphi}
    = 1 - \frac{1 + \varphi}{1 + \varphi} = 0
  \end{equation*}
  and
  \begin{equation*}
    \delta_v(t) \leq 1 - \paren{1 + \frac{1}{\varphi}} \frac{-\varphi\tau_3(r)}{\tau_3(r) - \varphi\tau_3(r)}
    =1+\frac{1+\varphi}{1-\varphi}
    =\frac{2}{1 - \varphi}\,,
  \end{equation*}
establishing the stated bound on $\delta_v(t)$ for $t_v(r)\leq t \leq t_v(r+1)$.

Recall that \eqref{eqn:logical-clock} implies that $L_v(t')-L_v(t)= \int_t^{t'}(1+\varphi \cdot \delta_v(\tau))(1+\mu \cdot \indf_v(\tau))h_v(\tau)\,d\tau$. Moreover, $\indf_v(t)\in \{0,1\}$ and $h_v(t)\in [1,1+\rho]$ for all times $\tau$ by definition. Inserting these bounds into the integral, for $t_v(r) \leq t \leq t' \leq t_v(r + 1)$ we obtain
\begin{align*}
  t'-t =\int_t^{t'}1\,d\tau 
  &\leq \int_t^{t'}(1+\varphi \cdot \delta_v(\tau))(1+\mu \cdot \indf_v(\tau))h_v(\tau)\,d\tau\\
  &\leq \int_t^{t'}\paren{1 + \frac{2 \varphi}{1 - \varphi}} (1 + \mu) (1 + \rho)\,d\tau\\
  &=\paren{1 + \frac{2 \varphi}{1 - \varphi}} (1 + \mu) (1 + \rho) (t' - t)\,,
\end{align*}
i.e., the second claim of the lemma holds.
\end{proof}

\begin{ntn}
  In accordance with Lemma~\ref{lem:logical-clock-rate}, we denote
  \begin{equation}
    \label{eqn:gamma}
    \thmax = \paren{1 + \frac{2 \varphi}{1 - \varphi}} (1 + \mu) (1 + \rho).
  \end{equation}
\end{ntn}

\begin{lem}
  \label{lem:logical-clock-round}
  Suppose that all rounds $r \in \N$ are properly executed for $C\in \mathcal{C}$. For each $r$, define $\mathcal{T}(r) = \sum_{i=1}^{r-1} T(i)$. Then for all $v \in C$ and $r \in \N$ we have
  \[
  L_{v}(t_v(r)) =  \mathcal{T}(r) \quad\text{and}\quad L_{v}(p_v(r)) = \mathcal{T}(r) + \tau_1(r)\,.
  \]
\end{lem}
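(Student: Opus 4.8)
The plan is to prove both identities simultaneously by induction on $r$, using the fact that in a properly executed round the logical clock advances by prescribed \emph{logical} amounts in each phase, independent of the hardware clock rate and of the mode indicator $\indf_v$. The base case $r = 1$ is immediate: we have $\mathcal{T}(1) = 0$ (empty sum) and the algorithm initializes $L_v \leftarrow 0$ at the start of round $1$, so $L_v(t_v(1)) = 0 = \mathcal{T}(1)$. For the pulse identity, the at-time trigger on line~\ref{cs:phase-1} fires at logical time $L_v(t_v(r)) + \tau_1(r)$, which by definition of $p_v(r)$ (the Newtonian time at which this trigger fires) gives $L_v(p_v(r)) = L_v(t_v(r)) + \tau_1(r)$; combined with the value of $L_v(t_v(r))$ from the first identity, this yields $L_v(p_v(r)) = \mathcal{T}(r) + \tau_1(r)$ for every $r$ once the first identity is established. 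So it suffices to carry the induction on the first identity alone.

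For the inductive step, assume $L_v(t_v(r)) = \mathcal{T}(r)$. I would now track the logical clock through the three phases of round $r$. By Lemma~\ref{lem:nominal-round-length}'s underlying bookkeeping (or directly from the algorithm's at-time triggers), the phase-2 boundary $s_v(r)$ satisfies $L_v(s_v(r)) = L_v(t_v(r)) + \tau_1(r) + \tau_2(r)$, and the round-end trigger on line~\ref{cs:phase-3} fires when $L_v(t) = L_v(t_v(r)) + \tau_1(r) + \tau_2(r) + \tau_3(r)$; since $t_v(r+1)$ is precisely the Newtonian time this trigger fires, we get
\[
L_v(t_v(r+1)) = L_v(t_v(r)) + \tau_1(r) + \tau_2(r) + \tau_3(r) = L_v(t_v(r)) + T(r).
\]
By the inductive hypothesis this equals $\mathcal{T}(r) + T(r) = \mathcal{T}(r+1)$, closing the induction. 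The hypothesis that round $r$ is properly executed enters only to guarantee that the at-time triggers actually fire as intended and that $L_v$ is strictly increasing (so each trigger fires exactly once at a well-defined time); condition~3 of Definition~\ref{dfn:cluster-sync-proper-ex}, $\abs{\Delta_v(r)} \le \varphi \tau_3(r)$, is exactly what Lemma~\ref{lem:logical-clock-rate} uses to ensure $\delta_v(t) \ge 0$ throughout phase~3, hence $L_v$ does not stall and the round-end trigger is reached.

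The only subtlety I anticipate\dash---and the one point worth stating carefully\dash---is that these are identities about the \emph{logical} clock, so the hardware rate $h_v$, the indicator $\indf_v$, and the precise value of $\delta_v$ in phase~3 are all irrelevant: each phase transition is defined by an at-time trigger on $L_v$, so the logical-time length of each phase is exactly $\tau_1(r)$, $\tau_2(r)$, $\tau_3(r)$ by fiat. There is no real obstacle here; the lemma is essentially a restatement of how Algorithm~\ref{alg:cluster-sync} schedules its phases, and the induction merely accumulates the per-round increment $T(r)$. The one thing to be careful about is that the phases are scheduled relative to $L_v(t_v(r))$ rather than relative to a global logical time, which is exactly why the telescoping sum $\mathcal{T}(r) = \sum_{i=1}^{r-1} T(i)$ appears.
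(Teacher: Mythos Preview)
Your proposal is correct and follows essentially the same approach as the paper: induction on $r$ with base case $L_v(t_v(1))=0=\mathcal{T}(1)$ from initialization, inductive step using that the round-end trigger on line~\ref{cs:phase-3} fires at logical time $L_v(t_v(r))+T(r)$, and invoking Lemma~\ref{lem:logical-clock-rate} (via proper execution) to ensure the logical clock actually reaches that value so $t_v(r+1)$ is well-defined. The pulse identity is likewise derived directly from the phase-1 trigger once the first identity is in hand, just as the paper does.
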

\begin{proof}
We proof the claim by an induction on $r$. For $r=0$, $L_v(t_v(1))=0=\mathcal{T}(1)$ by initialization, cf.~Algorithm~\ref{alg:cluster-sync}. For the induction step consider $r>1$.  $L_v(t_v(r-1))=\mathcal{T}(r-1)$. The algorithm starts round $r$ at logical time $L_v(t_v(r-1))+\tau_1+\tau_2+\tau_3=L_v(t_v(r-1))+T(r-1)$. Because round $r-1$ is properly executed and $|\varphi|<1$, Lemma~\ref{lem:logical-clock-rate} guarantees that the logical clock of $v$ (continuously) increases at rate at least $1$ during round $r-1$ at $v$. Hence, it reaches this value and $t_v(r)$ is well-defined. By the induction hypothesis applied to round $r-1$, this yields 
  \begin{align*}
    L_v(t_v(r))=L_v(t_v(r-1))+T(r-1)=\mathcal{T}(r-1)+T(r-1)=\mathcal{T}(r)\,.
  \end{align*}
  The second claim follows directly from the first:
  \begin{align*}
    L_v(p_v(r)) &= L_v(t_v(r))+\tau_1(r)=\mathcal{T}(r)+\tau_1(r)\,. \qedhere
  \end{align*}
\end{proof}

\begin{dfn}
  Fix a cluster $C$ and a round $r \in \N$. We denote the multi-set of Newtonian times at which correct nodes in $C$ produce their round $r$ pulse by $\bfp_C$. That is
  \[
  \bfp_C(r) = \set{p_v(r) \sucht v \in C \setminus F}\,.
  \]
  We define the round $r$ \dft{pulse diameter} of $C$, denoted $\norm{\bfp_C(r)}$, by
  \[
  \norm{\bfp_C(r)} = \max \bfp_C(r) - \min \bfp_C(r)\,.
  \]
  If the cluster $C$ is clear from context, we will omit the subscript $C$ from the notation above.
\end{dfn}

\begin{lem}
  \label{lem:cluster-clock-skew}
  Suppose that all rounds $r \in \N$ are properly executed for $C\in \mathcal{C}$. Then for all $v, w \in C \setminus F$ and $t \in \R^+$ we have
  \[
  \abs{L_v(t) - L_w(t)} \leq \thmax \cdot \norm{\bfp(r)} + (\thmax - 1) \paren{T(r) + \tau_1(r + 1) - \tau_1(r)},
  \]
  where $r$ is the largest round such that $\max \bfp(r) \leq t$ (with the convention $\bfp(0)=\{0,\ldots,0\}$).
\end{lem}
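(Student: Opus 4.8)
The plan is to bound the skew $|L_v(t) - L_w(t)|$ by decomposing the time $t$ relative to the pulse times of the round $r$ identified in the statement, using Lemma~\ref{lem:logical-clock-round} to pin down the logical clock values at round and pulse boundaries, and Lemma~\ref{lem:logical-clock-rate} (equivalently, the rate bound $1 \le \dot L_v \le \thmax$ inside any properly executed round) to control the drift in between. The key observation is that the logical clock values at the \emph{start} of a round, and at the \emph{pulse} of a round, are identical for all correct nodes in the cluster (namely $\mathcal{T}(r)$ and $\mathcal{T}(r) + \tau_1(r)$, respectively); hence any skew at an arbitrary time $t$ is entirely attributable to nodes being at different \emph{Newtonian} times relative to these common logical milestones, combined with the fact that logical clocks can run at rates strictly between $1$ and $\thmax$.

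First I would fix $r$ to be the largest round with $\max \bfp(r) \le t$, so that every correct node has already sent its round-$r$ pulse by time $t$, but no correct node has sent its round-$(r+1)$ pulse (this needs a small argument: $t < \min \bfp(r+1)$, since otherwise $\max\bfp(r+1)\le t$ would contradict maximality — actually one should instead use $t < \max\bfp(r+1)$, and then note every correct node is somewhere between its round-$r$ pulse and its round-$(r+1)$ pulse, or at least handle the nodes individually). For each correct node $u \in \{v, w\}$, the quantity $L_u(t) - L_u(p_u(r)) = L_u(t) - (\mathcal{T}(r) + \tau_1(r))$ is nonnegative and is the logical time elapsed at $u$ since its round-$r$ pulse. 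I would bound this elapsed logical time above by relating it to the Newtonian duration $t - p_u(r)$ via the rate bound $\dot L_u \le \thmax$, and below similarly via $\dot L_u \ge 1$; crucially the Newtonian interval $[p_u(r), t]$ lies within rounds $r$ and possibly $r+1$ of $u$, all of which are properly executed, so Lemma~\ref{lem:logical-clock-rate} applies on each piece.

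Then the skew $|L_v(t) - L_w(t)| = |(L_v(t) - L_v(p_v(r))) - (L_w(t) - L_w(p_w(r)))|$ since the subtracted terms are equal. WLOG $L_v(t) \ge L_w(t)$; then this difference is at most $L_v(t) - L_v(p_v(r))$ minus (the nonnegative quantity) $L_w(t) - L_w(p_w(r))$, but a cleaner route is: $L_v(t) - L_w(t) \le \thmax(t - p_v(r)) - 1\cdot(t - p_w(r)) = \thmax(t - p_v(r)) - (t - p_w(r))$, and then bound $t - p_v(r) \le (\max\bfp(r) - \min\bfp(r)) + (t - \max\bfp(r)) = \norm{\bfp(r)} + (t - \max\bfp(r))$ while $t - p_w(r) \ge t - \max\bfp(r)$, so the $(t - \max\bfp(r))$ terms combine to give a coefficient $(\thmax - 1)(t - \max\bfp(r))$. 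It remains to bound $t - \max\bfp(r)$, i.e., how far past round $r$'s last pulse the time $t$ can be while still having $\max\bfp(r)\le t < \max\bfp(r+1)$: the node achieving $\min\bfp(r+1)$ (or the latest-pulsing node) has only advanced logically by at most $\tau_1(r+1)$ beyond the end of round $r$ minus its round-$r$ pulse offset $\tau_1(r)$ — i.e., its logical clock at $t$ is at most $\mathcal{T}(r) + T(r) + \tau_1(r+1)$, which is $\tau_1(r+1) + T(r) - \tau_1(r)$ logical units past its round-$r$ pulse, so by $\dot L \ge 1$ the Newtonian gap $t - \max\bfp(r)$ is at most $T(r) + \tau_1(r+1) - \tau_1(r)$. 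Substituting yields exactly the claimed bound $\thmax\norm{\bfp(r)} + (\thmax-1)(T(r) + \tau_1(r+1) - \tau_1(r))$.

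The main obstacle I anticipate is the bookkeeping around which round each of $v$, $w$ is in at time $t$ and the careful two-sided rate estimate on the possibly-two-round Newtonian interval $[\min\bfp(r), t]$ — in particular, making precise that the "straggler" bounding $t - \max\bfp(r)$ cannot have run much faster than its lower rate bound permits, since the relevant upper bound on $t$ comes from the condition defining $r$, which is phrased in Newtonian time via pulses, whereas round progress is measured in logical time. Once one commits to expressing everything through $L_u(p_u(r)) = \mathcal{T}(r) + \tau_1(r)$ (Lemma~\ref{lem:logical-clock-round}) and $L_u(t_u(r+1)) = \mathcal{T}(r+1)$, and applies the rate bounds of Lemma~\ref{lem:logical-clock-rate} piecewise, the computation is routine; the boundary convention $\bfp(0) = \{0, \dots, 0\}$ handles $t$ before any pulse, where $r = 0$, $\norm{\bfp(0)} = 0$, and the bound degenerates to $(\thmax - 1)(T(0) + \tau_1(1))$, which covers the initial segment.
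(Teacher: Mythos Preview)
Your proposal is correct and takes essentially the same approach as the paper: anchor at the common logical pulse value $L_v(p_v(r)) = L_w(p_w(r))$ via Lemma~\ref{lem:logical-clock-round}, apply the rate bounds of Lemma~\ref{lem:logical-clock-rate} on $[p_u(r),t]$, and bound $t-\max\bfp(r)$ through $\max\bfp(r+1)-\max\bfp(r)$ using the node $u$ attaining $\max\bfp(r+1)$ (your parenthetical ``latest-pulsing node'' is the right choice; the mention of $\min\bfp(r+1)$ is a slip). The only difference from the paper is a cosmetic rearrangement of the algebra---you bound $\thmax(t-p_v(r)) - (t-p_w(r))$ directly, whereas the paper routes through $L_w(p_w(r))-L_w(p_v(r))$---but the two computations are equivalent.
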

\begin{proof}
By definition of $r$, $t \leq \max \bfp(r+1)$. By Lemma~\ref{lem:logical-clock-rate}, logical clocks progress at rates between $1$ and $\thmax$. In particular, $\max \bfp(1)\leq \tau_1$ and $|L_v(t)-L_w(t)|\leq (\thmax-1)\cdot t + |L_v(0)-L_w(0)|=(\thmax-1)\cdot t$, because logical clocks are initialized to $0$. This shows the claim for the special case of $r=0$.

For the case $r\in \N$, in addition to the bounds on logical clock rates we use that $L_v(p_v(r))=L_w(p_w(r))$ by Lemma~\ref{lem:logical-clock-round}. Assuming w.l.o.g.\ that $L_v(t)\geq L_w(t)$, this yields that
\begin{align*}
|L_v(t)-L_w(t)|&=L_v(t)-L_v(p_v(r))+L_w(t)-L_w(p_v(r))+L_v(p_v(r))-L_w(p_v(r))\\
&\leq (\thmax-1)(t-p_v(r))+L_w(p_w(r))-L_w(p_v(r))\\
&\leq (\thmax-1)(t-p_v(r))+p_w(r)-p_v(r)\\
&= (\thmax-1)\cdot t-\thmax \cdot p_v(r)+p_w(r)\\
&\leq (\thmax-1)\cdot \max \bfp(r+1)-\thmax \cdot \min \bfp(r)+\max \bfp(r)\\
&= (\thmax-1)(\max \bfp(r+1)-\min \bfp(r))+\thmax(\max \bfp(r)-\min \bfp(r))\\
&\leq (\thmax-1)(\max \bfp(r+1)-\max \bfp(r))+\thmax \norm{\bfp(r)}\,.
\end{align*}
Therefore, the claim follows if $\max \bfp(r+1)-\max \bfp(r)\leq T(r)+\tau_1(r+1)-\tau_1(r)$. To see this, let $u\in C$ be the node such that $p_u(r+1)=\max \bfp(r+1)$. By Algorithm~\ref{alg:cluster-sync}, the logical duration $L_u(p_u(r+1))-L_u(p_u(r))=\tau_2(r)+\tau_3(r)+\tau_1(r+1)=T(r)+\tau_1(r+1)-\tau_1(r)$. Using again that logical clock rates are at least $1$ by Lemma~\ref{lem:logical-clock-rate}, we infer that
\begin{align*}
\max \bfp(r+1)-\max \bfp(r) &=p_u(r+1)-p_u(r)+p_u(r)-\max \bfp(r)\\
&\leq L_u(p_v(r+1))-L_u(p_v(r))\\
&=T(r)+\tau_1(r+1)-\tau_1(r)\,.\qedhere
\end{align*}
\end{proof}

Finally, we state a result (due to Khanchandani and Lenzen~\cite{khanchandani18self}) that gives an upper bound on the difference between the (Newtonian) times at which clocks within the same cluster pulse.

\begin{cor}[of Lenzen and Khanchandani~\cite{khanchandani18self}, Corollary~4]
  \label{cor:pulse-skew-orig}
  Fix a cluster $C$ and a round $r$, and suppose round $r$ of Algorithm~\ref{alg:cluster-sync} is properly executed. Suppose further that for each $v \in C \setminus F$, $v$'s nominal clock rate satisfies $1 \leq \hnom_v(t) \leq \vartheta$ for all $t \in [\min \bfp_C(r), p_v(r + 1)]$. Then
  \begin{equation*}
    \norm{\bfp_C(r + 1)} \leq \frac{2 \vartheta^2 + 5 \vartheta - 5}{2 (\vartheta + 1)} \norm{\bfp_C(r)} + (3 \vartheta - 1) U
     + \paren{1 - \frac 1 \vartheta} (T(r) + \tau_1(r+1) - \tau_1(r))\,.
  \end{equation*}
\end{cor}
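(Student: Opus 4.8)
The plan is to derive this as a transcription of Corollary~4 of~\cite{khanchandani18self} under two reinterpretations, using Lemma~\ref{lem:nominal-round-length} to bridge the one structural difference between our algorithm and theirs. Their corollary analyzes a single Lynch-Welch round in which each node's \emph{hardware} clock has rate in $[1,\vartheta]$ and each node applies an instantaneous phase correction $-\Delta_v(r)$ of the approximate-agreement form at the end of the round, and it outputs exactly the recursion for $\norm{\bfp_C(r+1)}$ claimed here, with round length $T(r)+\tau_1(r+1)-\tau_1(r)$. The first reinterpretation is to let the \emph{nominal} clock $\hnom_v$ of~\eqref{eqn:nominal-rate} play the role of the hardware clock: the hypothesis $1\le\hnom_v(t)\le\vartheta$ on $[\min\bfp_C(r),p_v(r+1)]$ is precisely the rate bound their proof needs, and during phases~1 and~2 (where $\delta_v=1$) the logical rate of $v$ equals $\hnom_v$, so the quantities $\tau_{wv}$ feeding the approximate-agreement step are $\hnom_v$-measures of Newtonian intervals. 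The second reinterpretation is that the relevant round length is the pulse-to-pulse logical duration $\tau_2(r)+\tau_3(r)+\tau_1(r+1)=T(r)+\tau_1(r+1)-\tau_1(r)$, which is the source of the last term of the bound.

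The one genuine discrepancy is that Algorithm~\ref{alg:cluster-sync} does not apply $-\Delta_v(r)$ as a jump but amortizes it over phase~3 by reducing $\delta_v$. Here Lemma~\ref{lem:nominal-round-length} does the work: it shows the nominal length of round $r$ for $v$ is $T(r)+\Delta_v(r)$, so the amortized and discrete variants agree on $L_v$ at every round boundary, hence on the Newtonian times $t_v(r+1)$, and (since $\delta_v=1$ again during phase~1 of round $r+1$) on the pulse times $p_v(r+1)$; thus $\bfp_C(r+1)$ is identical in both and the~\cite{khanchandani18self} recursion applies verbatim. Along the way I would invoke ``properly executed'' (Definition~\ref{dfn:cluster-sync-proper-ex}) to supply the side hypotheses of their proof: condition~2 guarantees $S_v$ records the logical receipt times of \emph{all} correct nodes' round-$r$ pulses, so the order-statistic argument sees the full correct multiset, and condition~3, $\abs{\Delta_v(r)}\le\varphi\tau_3(r)$, is exactly what makes the amortization in Lemma~\ref{lem:nominal-round-length} well-defined.

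For the reader's benefit I would also recall the three ingredients of the underlying Lynch-Welch argument that produce the stated coefficients. First, for correct $v,w$ the measurement $\tau_{wv}=L_v(t_{wv}(r))-L_v(t_{vv}(r))$ is the $\hnom_v$-measure of an interval of Newtonian length $(p_w(r)-p_v(r))$ plus a message delay in $[d-U,d]$; since $\hnom_v\in[1,\vartheta]$, this equals $(p_w(r)-p_v(r))$ plus a common offset plus an additive error controlled by $U$ and the rate spread. Second, because at most $f$ entries of $S_v$ are faulty, both order statistics $S_v^{f+1}$ and $S_v^{n-f}$ lie in $[\min_{w\in C\setminus F}\tau_{wv},\ \max_{w\in C\setminus F}\tau_{wv}]$, so $\Delta_v(r)$ lies in the convex hull of the correct measurements and each correct node's new pulse time is pulled into a slight enlargement of $\bfp_C(r)$. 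Third, substituting the error bounds from the first step into the contraction from the second, exactly as in~\cite{khanchandani18self}, yields the coefficient $\tfrac{2\vartheta^2+5\vartheta-5}{2(\vartheta+1)}$ on $\norm{\bfp_C(r)}$, the $(3\vartheta-1)U$ term, and the $(1-1/\vartheta)(T(r)+\tau_1(r+1)-\tau_1(r))$ term arising from logical clocks drifting apart over the length of the round.

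I expect the only real effort to be bookkeeping rather than insight: verifying that ``nominal clock in place of hardware clock'' and ``pulse-to-pulse duration in place of round length'' is a faithful substitution into every place the~\cite{khanchandani18self} analysis uses these objects, and confirming that ``properly executed'' for round $r$ furnishes all of its hypotheses. No portion of the contraction estimate itself needs to be reproven.
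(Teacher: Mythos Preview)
Your proposal is correct and follows essentially the same approach as the paper: reinterpret $\hnom_v$ as the ``hardware clock'' of~\cite{khanchandani18self}, note that the only structural difference is amortized versus instantaneous correction in phase~3, and use Lemma~\ref{lem:nominal-round-length} to show the two variants coincide at round boundaries and hence at the pulse times $p_v(r+1)$, so the cited recursion transfers verbatim. Your write-up is more explicit than the paper's (which dispatches this in three sentences) and adds a helpful recap of the underlying approximate-agreement contraction, but the core argument is identical.
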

\begin{proof}
Algorithm~\ref{alg:cluster-sync} differs from the one in~\cite{khanchandani18self} (for hardware clocks with rate $\hnom_v$) in that it does not instantaneously adjust its logical clocks.  Thus, logical clocks may differ in the third phase of the two algorithms. However, Lemma~\ref{lem:nominal-round-length} shows that the result at the end of each round is identical, and neither algorithm takes actions or reacts to messages received in its third phase. Accordingly, the result transfers to Algorithm~\ref{alg:cluster-sync}.
\end{proof}
While Corollary~\ref{cor:pulse-skew-orig} assumes that hardware clock rates are in the range $[1, \vartheta]$, we will later use that under certain conditions, nominal clock rates satisfy more precise bounds than $\hnom_v(t)\in [1,\vartheta_g]$. Using Lemma~\ref{lem:speedup}, we obtain a stronger bound in this setting.

\begin{cor}
  \label{cor:pulse-skew}
  Fix an execution $X$ of Algorithm~\ref{alg:cluster-sync} on a cluster $C$ and a round $r$. Suppose that for each $v \in C$, $v$'s nominal clock satisfies $\zeta \leq \hnom_v(t) \leq \zeta \cdot \vartheta$ for all $t \in [\min \bfp_C(r), p_v(r + 1)]$. Finally suppose that the round $r$ is properly executed for the $\zeta$-reduced execution $\bar{X}$ (cf. Definition~\ref{dfn:sigma-reduced}). Then
  \begin{equation*}
    \norm{\bfp_C(r + 1)} \leq \frac{2 \vartheta^2 + 5 \vartheta - 5}{2 (\vartheta + 1)} \norm{\bfp_C(r)} + (3 \vartheta - 1) U
    + \frac{1}{\zeta} \paren{1 - \frac 1 \vartheta} (T(r) + \tau_1(r+1) - \tau_1(r))\,.
  \end{equation*}
\end{cor}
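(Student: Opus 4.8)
The plan is to reduce the statement to Corollary~\ref{cor:pulse-skew-orig} by means of the speedup simulation of Lemma~\ref{lem:speedup}, applied with the nominal clocks $\hnom_v$ playing the role of hardware clocks. First I would note that, since $\hnom_v = (1+\varphi)(1+\mu\indf_v)h_v$, the hypothesis $\zeta \le \hnom_v(t) \le \zeta\vartheta$ is exactly the precondition of Lemma~\ref{lem:speedup} for these rescaled ``hardware'' clocks; where the bound is only assumed on the window $[\min\bfp_C(r), p_v(r+1)]$ one may extend it outside that window arbitrarily but consistently, which is harmless because neither Algorithm~\ref{alg:cluster-sync} nor the statement of Corollary~\ref{cor:pulse-skew-orig} depends on behaviour outside it. Passing to the $\zeta$-reduced execution $\bar X$ of Definition~\ref{dfn:sigma-reduced}, the nominal clocks now satisfy $1 \le \bar\hnom_v(t) \le \vartheta$, and the message delay and uncertainty become $\bar d = \zeta d$ and $\bar U = \zeta U$.

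The one bookkeeping point that needs a line of checking is that the \emph{logical} quantities are invariant under the reduction. The transformation is $t \mapsto \zeta t$ together with $\bar h_v(t) = \tfrac1\zeta h_v(t/\zeta)$, and the same change of variables as in the proof of Lemma~\ref{lem:speedup} gives $\bar L_v(\zeta t) = L_v(t)$, since the factor $1/\zeta$ in $\bar h_v$ exactly cancels the stretching of Newtonian time. Hence the logical phase durations $\tau_1(r),\tau_2(r),\tau_3(r)$ and $T(r)$ are identical in $X$ and $\bar X$, whereas every event occurring at Newtonian time $t$ in $X$ occurs at time $\zeta t$ in $\bar X$; in particular $\bar p_v(r) = \zeta p_v(r)$ for every correct $v$ and every round, so the pulse diameters scale: $\norm{\bar\bfp_C(r)} = \zeta\norm{\bfp_C(r)}$, and likewise for round $r+1$.

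Next I would invoke Corollary~\ref{cor:pulse-skew-orig} for $\bar X$. Its hypotheses hold because $\bar\hnom_v \in [1,\vartheta]$ on the image interval $[\min\bar\bfp_C(r), \bar p_v(r+1)] = [\zeta\min\bfp_C(r), \zeta p_v(r+1)]$, and by assumption round $r$ is properly executed for $\bar X$. This yields
\[
\norm{\bar\bfp_C(r+1)} \le \frac{2\vartheta^2+5\vartheta-5}{2(\vartheta+1)}\norm{\bar\bfp_C(r)} + (3\vartheta-1)\bar U + \paren{1-\tfrac1\vartheta}\paren{T(r)+\tau_1(r+1)-\tau_1(r)}.
\]
Substituting $\norm{\bar\bfp_C(\cdot)} = \zeta\norm{\bfp_C(\cdot)}$ and $\bar U = \zeta U$ and dividing through by $\zeta$ gives exactly the claimed inequality: the $\zeta U$ contribution reproduces the un-scaled term $(3\vartheta-1)U$, and the last term acquires the factor $1/\zeta$ precisely because $T(r)$ and the $\tau_1(\cdot)$ are invariant and hence do not scale.

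There is no hard estimate here; the only thing to get right is the rescaling bookkeeping — that $T(r)$ and $\tau_1$ are genuinely invariant (so they carry the $1/\zeta$), that the ``proper execution'' hypothesis is the one imposed on $\bar X$ as in the statement, and that the window on which the nominal-rate bound is needed transforms to the corresponding window of $\bar X$. Once these are pinned down, the conclusion is immediate.
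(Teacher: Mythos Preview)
Your proposal is correct and follows essentially the same route as the paper: pass to the $\zeta$-reduced execution $\bar X$ via Lemma~\ref{lem:speedup}, apply Corollary~\ref{cor:pulse-skew-orig} there with $\bar U=\zeta U$, and use $\norm{\bar\bfp_C(\cdot)}=\zeta\norm{\bfp_C(\cdot)}$ together with the invariance of the logical durations $T(r),\tau_1(\cdot)$ to divide through by $\zeta$. If anything, you are slightly more explicit than the paper about \emph{why} $T(r)$ and $\tau_1$ are unchanged under the rescaling, which is the one point that actually deserves a sentence.
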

\begin{proof}
  By the conclusion of Lemma~\ref{lem:speedup}, the nominal clocks satisfy $1 \leq \bar{\hnom_v}(t) \leq \vartheta$ in $\bar{X}$. Then for this execution, the $\ClusterSync$ algorithm guarantees the bound of Corollary~\ref{cor:pulse-skew-orig}, namely
  \begin{equation}
    \label{eqn:speedup-pulse}
    \norm{\bar{\bfp}_C(r + 1)} \leq \frac{2 \vartheta^2 + 5 \vartheta - 5}{2 (\vartheta + 1)} \norm{\bar{\bfp}_C(r)} + (3 \vartheta - 1) \bar{U}
    + \paren{1 - \frac 1 \vartheta} (T(r) + \tau_1(r+1) - \tau_1(r))\,,
  \end{equation}
  where $\bar{\bfp}_C(r)$ is the round $r$ pulse diameter in the transformed execution, and $\bar{U} = \zeta U$. Since (by the proof of Lemma~\ref{lem:speedup}) any event occurring at time $t$ in $X$ has a corresponding event at time $\zeta t$ in $\bar{X}$, we have
  \[
  \norm{\bar{\bfp}_C(r)} = \zeta \norm{\bfp_C(r)} \quad\text{and}\quad \norm{\bar{\bfp}_C(r+1)} = \zeta \norm{\bfp_C(r+1)}.
  \]
  Plugging these expressions into~(\ref{eqn:speedup-pulse}) and dividing by $\zeta$ gives the desired result. 
\end{proof}

\subsection{Sufficient conditions for proper execution.}
In this section, we provide sufficient conditions for a sequence of rounds to be executed properly in a cluster $C$. Given an upper bound $e(1)$ on the diameter of (Newtonian) times at which the nodes in $C$ make their first pulse, we define a sequence $e(2), e(3), \ldots$ of errors recursively (using the bound of Lemma~\ref{cor:pulse-skew-orig}) such that $e(r)$ gives an upper bound on the diameter of the pulse times in round $r$. Using this bound, we define parameters $\tau_1(r), \tau_2(r)$ and $\tau_3(r)$ such that all rounds are properly executed. Thus an upper bound on the skew within $C$ follows from Lemma~\ref{lem:cluster-clock-skew}.


\begin{dfn}
  \label{dfn:feasible}
  We say that round $r$ is \dft{feasible} if the following conditions hold:
  \begin{equation}
    \label{eqn:taus}
    \begin{split}
      \tau_1(r) &\geq \thg \cdot e(r)\\
      \tau_2(r) &\geq \thg \cdot (e(r) + d)\\
      \tau_3(r) &\geq \thg \cdot \frac{1}{\varphi} \cdot (e(r) + U),
    \end{split}
  \end{equation}
  where $\thg = (1 + \rho)(1 + \mu)$.
\end{dfn}

\begin{cor}[of~\cite{khanchandani18self}, Lemma~6]
  \label{cor:feasible-implies-proper}
  If in an execution of Algorithm~\ref{alg:cluster-sync} it holds that $\norm{\bfp(r)}\leq e(r)$ and round $r$ is feasible, then round $r$ is properly executed.
\end{cor}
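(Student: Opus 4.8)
The plan is to recognize Corollary~\ref{cor:feasible-implies-proper} as, in essence, Lemma~6 of~\cite{khanchandani18self} transported to Algorithm~\ref{alg:cluster-sync}, and to verify the three conditions of Definition~\ref{dfn:cluster-sync-proper-ex} one at a time. The first step is to dispose of the two superficial differences between our algorithm and the one analyzed in~\cite{khanchandani18self}. First, exactly as in the proof of Corollary~\ref{cor:pulse-skew-orig}, the amortization of the phase-3 correction is immaterial: no pulse is sent or received and no adjustment is computed during phase~3, and by Lemma~\ref{lem:nominal-round-length} the net effect of round $r$ on the logical clock coincides with that of the instantaneous-correction variant of~\cite{khanchandani18self}. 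Second, during phases~1 and~2 the logical clock $L_v$ advances at the nominal rate $\hnom_v \in [1+\varphi, (1+\varphi)\thg]$; applying Lemma~\ref{lem:speedup} with $\zeta = 1 + \varphi$ (as in the proof of Corollary~\ref{cor:pulse-skew}) replaces the execution by an indistinguishable one in which the driving clocks have rate in $[1, \thg]$ and delays are scaled by $1+\varphi$, which is precisely the model of~\cite{khanchandani18self} with drift bound $\thg$, so that Lemma~6 there is directly applicable. I would also invoke Lemma~\ref{lem:logical-clock-round} and Lemma~\ref{lem:logical-clock-rate} for rounds $1, \ldots, r$; this is legitimate since the corollary is used as the inductive step of an induction on $r$ in which rounds $1, \ldots, r-1$ are already known to be properly executed (for $r = 1$ the conclusions of those lemmas hold directly from the initialization $L_v(0) = 0$).

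Granting this, conditions~(1) and~(2) follow as in~\cite[Lemma~6]{khanchandani18self}. By Lemma~\ref{lem:logical-clock-round}, every correct node of $C$ enters round $r$ at logical time $\mathcal{T}(r)$ and pulses at logical time $\mathcal{T}(r) + \tau_1(r)$, and by Lemma~\ref{lem:logical-clock-rate} its logical clock advances at a rate in $[1, \thmax]$ throughout the round. Combined with $\norm{\bfp(r)} \le e(r)$ and the delay bound $d$, the feasibility inequalities $\tau_1(r) \ge \thg\, e(r)$ and $\tau_2(r) \ge \thg\,(e(r)+d)$ are exactly what is needed to bound the Newtonian drift between correct nodes so that (i) every correct $w$ has entered round $r$ and not yet left it when any correct $v$ sends its round-$r$ pulse, and (ii) $w$'s round-$r$ pulse, which reaches $v$ within Newtonian time $d$, is received before $v$'s phase~2 ends at logical time $\mathcal{T}(r) + \tau_1(r) + \tau_2(r)$. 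For condition~(3), recall $\Delta_v(r) = (S_v^{f+1} + S_v^{n-f})/2$, where $S_v$ is the sorted multiset of the values $\tau_{wv} = L_v(t_{wv}) - L_v(t_{vv})$ over $w \in C$. Since at most $f$ of the $k \ge 3f+1$ entries come from faulty nodes, both $S_v^{f+1}$ and $S_v^{n-f}$ lie between the least and the greatest of the values $\tau_{wv}$ with $w \in C \setminus F$; because $v$ itself is correct and $\tau_{vv} = 0$, this range contains $0$, so $\abs{\Delta_v(r)}$ is at most the spread of the correct values $\tau_{wv}$. By $\norm{\bfp(r)} \le e(r)$, the Newtonian reception times of correct pulses at $v$ span a window of width at most $e(r) + U$, which through $v$'s logical clock (rate at most $\thmax$) gives $\abs{\Delta_v(r)} = O(\thg(e(r)+U))$; the feasibility bound $\tau_3(r) \ge \thg(e(r)+U)/\varphi$ then yields $\abs{\Delta_v(r)} \le \varphi\,\tau_3(r)$, as required.

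I expect the main obstacle to be the bookkeeping hidden in the phrase ``through $v$'s logical clock'': one must track carefully which rate bound ($1$, $\thg$, $\thmax$, or the nominal rate $\hnom_v$) applies on which subinterval of the round, together with the factor $1+\varphi$ relating $L_v$ to the rescaled nominal clock, in order to confirm that the constants $\thg$ and $1/\varphi$ appearing in Definition~\ref{dfn:feasible} are in fact large enough. This is exactly the calculation carried out in~\cite{khanchandani18self}, so the genuine content to re-check is only (a) that our choice of phase lengths and the amortized correction do not change it, and (b) condition~(3), which is absent from~\cite{khanchandani18self} (where the correction is instantaneous) and which the midpoint argument above, combined with the $1/\varphi$ slack built into $\tau_3$, supplies.
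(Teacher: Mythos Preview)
Your proposal is correct and follows essentially the same line as the paper: both invoke Lemma~6 of~\cite{khanchandani18self} (after observing that the phase-3 amortization is irrelevant to the pulsing and reception phases) to obtain conditions~(1) and~(2), and then separately verify condition~(3) via the bound $\abs{\Delta_v(r)} \le \thg(\norm{\bfp(r)}+U) \le \varphi\,\tau_3(r)$. You are simply more explicit than the paper about the $(1+\varphi)$ speedup reduction and the midpoint/trimming argument for~(3), whereas the paper just cites the bound on $\abs{\Delta_v(r)}$ as something already established inside the proof of~\cite[Lemma~6]{khanchandani18self}.
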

\begin{proof}
As nominal clock rates are from $[1,\vartheta_g]$ and $0<\varphi<1$, feasibility of round $r$ and the assumption that $\bfp(r)\leq e(r)$ implies the preconditions of Lemma~6 from~\cite{khanchandani18self}. The lemma directly implies the first two requirements of proper execution. Its proof also establishes that for each $v\in C$, $\Delta_v(r)\leq \vartheta_g(\|\bfp(r)\|+U)\leq \vartheta_g(e(r)+U)\leq \varphi\tau_3(r)$, showing the third property.
\end{proof}

The following result gives an inductive formula for computing an upper bound $e(r+1)$ on the cluster skew given an upper bound $e(r)$.

\begin{cor}
  \label{cor:inductive-error-bound}
  Suppose $\norm{\bfp_C(r)} \leq e(r)$ and $\tau_1(r), \tau_2(r)$ and $\tau_3(r)$ are defined by taking equalities in Equations~(\ref{eqn:taus}). Then for $e(r + 1)$ satisfying
  \begin{equation}
    \label{eqn:pre-inductive-error-bound}
    e(r + 1) \geq \frac{2 \vartheta_g^2 + 5 \vartheta_g - 5}{2 (\vartheta_g + 1)} e(r) + (3 \vartheta_g - 1) U
    + \paren{1 - \frac{1}{\vartheta_g}} (T(r) + \tau_1(r+1) - \tau_1(r))\,,
  \end{equation}
  we have $\norm{\bfp_K(r+1)} \leq e(r+1)$.
\end{cor}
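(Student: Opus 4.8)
The plan is to read the recursion off directly from the Lynch-Welch pulse-skew bound, once round~$r$ has been shown to be executed properly. First I would observe that, since $\norm{\bfp_C(r)} \le e(r)$ by assumption and $\tau_1(r),\tau_2(r),\tau_3(r)$ are obtained by taking equalities in~\eqref{eqn:taus}, round~$r$ is \emph{feasible} in the sense of Definition~\ref{dfn:feasible}; Corollary~\ref{cor:feasible-implies-proper} then yields that round~$r$ is properly executed.

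The second step is to apply the pulse-skew recursion with drift parameter $\thg$. The one subtlety is that the nominal rates satisfy $\hnom_v(t) = (1+\varphi)(1+\mu\indf_v(t))h_v(t) \in [1+\varphi,\,(1+\varphi)\thg]$ rather than $[1,\thg]$, so Corollary~\ref{cor:pulse-skew-orig} is not literally applicable; instead I would invoke Corollary~\ref{cor:pulse-skew} with speedup $\zeta = 1+\varphi$ and $\vartheta = \thg$. Its hypothesis on nominal rates then holds unconditionally, using only $h_v(t) \in [1,1+\rho]$ and $\indf_v(t) \in \{0,1\}$, and its hypothesis that round~$r$ is properly executed for the $(1+\varphi)$-reduced execution (Definition~\ref{dfn:sigma-reduced}) follows from proper execution of the original execution: Definition~\ref{dfn:cluster-sync-proper-ex} refers only to logical-clock values, the round each node is in, and the values $\Delta_v(r)$, all of which are preserved under the time rescaling supplied by Lemma~\ref{lem:speedup}. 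This produces
\begin{align*}
  \norm{\bfp_C(r+1)}
  &\le \frac{2\thg^2 + 5\thg - 5}{2(\thg + 1)}\,\norm{\bfp_C(r)} + (3\thg - 1)U\\
  &\quad + \frac{1}{1+\varphi}\paren{1 - \frac{1}{\thg}}\paren{T(r) + \tau_1(r+1) - \tau_1(r)}.
\end{align*}

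The last step is to compare this with the bound~\eqref{eqn:pre-inductive-error-bound} defining $e(r+1)$. All three coefficients on the right are nonnegative for $\thg \ge 1$ (the first equals $1/2$ at $\thg = 1$ and is increasing in $\thg$), so the assumption $\norm{\bfp_C(r)} \le e(r)$ allows replacing $\norm{\bfp_C(r)}$ by $e(r)$; and $\tfrac{1}{1+\varphi}\paren{1 - \tfrac{1}{\thg}} \le 1 - \tfrac{1}{\thg}$ since $\varphi > 0$. Hence the right-hand side above is at most the right-hand side of~\eqref{eqn:pre-inductive-error-bound}, which is at most $e(r+1)$ by hypothesis, giving $\norm{\bfp_C(r+1)} \le e(r+1)$, as claimed.

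There is no genuine conceptual difficulty; the statement simply threads together Corollaries~\ref{cor:feasible-implies-proper} and~\ref{cor:pulse-skew}. The two points needing care are (i) verifying that proper execution transfers to the $\zeta$-reduced execution, so that Corollary~\ref{cor:pulse-skew} applies, and (ii) checking nonnegativity of the recursion's coefficients so that bounding $\norm{\bfp_C(r)}$ by $e(r)$ is legitimate. If one is willing to regard the nominal rates as normalized by $1+\varphi$ throughout the cluster analysis (as the preceding text tacitly does), the bound follows directly from Corollary~\ref{cor:pulse-skew-orig} applied with $\vartheta = \thg$.
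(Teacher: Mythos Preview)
Your proposal is correct and follows the same approach as the paper, which simply cites Corollaries~\ref{cor:feasible-implies-proper} and~\ref{cor:pulse-skew-orig} with $\vartheta=\vartheta_g$. You are in fact more careful than the paper: you notice that the nominal rates lie in $[1+\varphi,(1+\varphi)\vartheta_g]$ rather than $[1,\vartheta_g]$ and therefore route through Corollary~\ref{cor:pulse-skew} with $\zeta=1+\varphi$, obtaining a slightly tighter bound that you then relax to~\eqref{eqn:pre-inductive-error-bound}; the paper tacitly absorbs this normalization, exactly as you observe in your final remark.
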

\begin{proof}
  Readily follows from Corollaries~\ref{cor:feasible-implies-proper} and~\ref{cor:pulse-skew-orig} with $\vartheta=\vartheta_g$.
\end{proof}

The expression~(\ref{eqn:pre-inductive-error-bound}) is somewhat inconvenient because term $\tau_1(r+1)$ on the right side generally depends on $e(r+1)$. One can eliminate the dependence of the parameters on $r$ by fixing $e(r)=E$ for all $r\in \N$, which is feasible if and only if $E\geq e(1)$ and Inequality~\ref{eqn:pre-inductive-error-bound} holds when replacing $e(r+1)$ and $e(r)$ by $E$. We can find the minimal such $E$ by solving for the value of $E$ for which equality is attained and taking the larger of this value and the initial skew bound $e(1)$. Given that we assume perfect initialization (i.e., all logical clocks are initialized to $0$ at time $0$), this will always result in the value achieving equality in Inequality~\ref{eqn:pre-inductive-error-bound}. Note that $\tau_1(r)$, $\tau_2(r)$, $\tau_3(r)$, and $T(r)$ then all become independent of $r$, i.e.,
  \begin{equation}
    \label{eqn:taus_fixed}
    \begin{split}
      \tau_1 &= \vartheta_g \cdot E\\
      \tau_2 &= \vartheta_g \cdot (E + d)\\
      \tau_3 &= \vartheta_g \cdot \frac{1}{\varphi} \cdot (E + U)\\
      T &= \left(2+\frac{1}{\varphi}\right)\vartheta_g\cdot E + \vartheta_g \cdot d +\vartheta_g \cdot \frac{U}{\varphi}\,.
    \end{split}
  \end{equation}

However, the above procedure is only feasible if such a solution for \eqref{eqn:pre-inductive-error-bound} exists. Plugging the above values for $T$ and $\tau_1$ into \eqref{eqn:pre-inductive-error-bound}, the expression simplifies to $E = \alpha E + \beta$, where
\begin{equation}
  \label{eqn:alpha-beta}
  \begin{split}
    \alpha &= \frac{6 \vartheta^2_g \varphi + 5 \vartheta_g \varphi - 9 \varphi + 2 \vartheta_g^2 - 2}{2 \varphi (\vartheta_g + 1)}\\
    \beta &= \left(3 \vartheta_g - 1 + \frac{\vartheta_g-1}{\varphi}\right) U + (\vartheta_g - 1) d\,.
  \end{split}
\end{equation}
This equality can be solved if and only if $\alpha<1$, yielding $E = \frac{\beta}{1-\alpha}$. Observe that the $\varphi$ contributes as an additive $\Theta((\vartheta_g-1)/\varphi)$ term to $\alpha$. This indicates that we can choose $\varphi\in \Theta(1/(\vartheta_g-1))$, which we will leverage to ensure that the logical clock drift of the algorithm will be $O(\rho)$, i.e., logical clocks will not behave much worse than free-running hardware clocks, despite the achieved synchronization guarantees.

Putting it all together, we obtain the following general result.

\begin{prop}
  \label{prop:cluster-error-bound}
 Fix a cluster $C\in \mathcal{C}$, suppose that $\alpha<1$ in \eqref{eqn:alpha-beta}, let $E=\frac{\beta}{1-\alpha}$, and choose $\tau_i(r)=\tau_i$ for $i\in \{1,2,3\}$ and all $r\in \N$, where $\tau_i$ is given by \eqref{eqn:taus_fixed}. If $\norm{\bfp(1)}\leq E$ when executing Algorithm~\ref{alg:cluster-sync}, then every round $r\in \N$ is properly executed and $\norm{\bfp(r)}\leq E$ for all $r\in \N$.
\end{prop}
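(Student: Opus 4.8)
The plan is to prove Proposition~\ref{prop:cluster-error-bound} by a straightforward induction on the round number $r \in \N$, using as black boxes Corollary~\ref{cor:feasible-implies-proper} (feasibility of a round together with a bound $\norm{\bfp(r)} \leq e(r)$ on its pulse diameter implies proper execution) and Corollary~\ref{cor:inductive-error-bound} (the inductive error recursion \eqref{eqn:pre-inductive-error-bound}). Almost all of the substance has already been carried out in the discussion preceding the proposition; the proof merely packages it. The one conceptual point is that, once the round parameters are frozen at the $r$-independent values in \eqref{eqn:taus_fixed}, the recursion for the error bounds degenerates into the scalar fixed-point equation $E = \alpha E + \beta$ from \eqref{eqn:alpha-beta}, whose unique solution under the standing hypothesis $\alpha < 1$ is exactly $E = \beta/(1-\alpha)$.

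First I would record two easy preliminary observations. (a) Every round is feasible in the sense of Definition~\ref{dfn:feasible} when one takes $e(r) := E$: substituting $e(r) = E$ into the right-hand sides of \eqref{eqn:taus} and comparing with \eqref{eqn:taus_fixed} shows that the prescribed $\tau_1, \tau_2, \tau_3$ meet \eqref{eqn:taus} with equality (here $0 < \varphi < 1$ is in force). (b) Because these parameters do not depend on $r$, we have $\tau_1(r+1) - \tau_1(r) = 0$ and $T(r) = T$ for every $r$, so the ``inconvenient'' term in \eqref{eqn:pre-inductive-error-bound} disappears and the right-hand side of \eqref{eqn:pre-inductive-error-bound} with $e(r) = E$ reduces \textup{(}via exactly the algebra that produced \eqref{eqn:alpha-beta}\textup{)} to $\alpha E + \beta$, which equals $E$ since $\alpha < 1$ and $E = \beta/(1-\alpha)$.

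With these in hand the induction is immediate. Base case $r = 1$: by hypothesis $\norm{\bfp(1)} \leq E = e(1)$ and round $1$ is feasible by (a), so Corollary~\ref{cor:feasible-implies-proper} gives that round $1$ is properly executed. For the step, assume $\norm{\bfp(r)} \leq E$; then round $r$ is feasible and hence, by Corollary~\ref{cor:feasible-implies-proper}, properly executed, and Corollary~\ref{cor:inductive-error-bound} applies with $e(r) = E$. By observation (b), $e(r+1) = E$ satisfies \eqref{eqn:pre-inductive-error-bound}, so $\norm{\bfp(r+1)} \leq E$; feeding this back into Corollary~\ref{cor:feasible-implies-proper} yields proper execution of round $r+1$ as well. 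Thus every round is properly executed and $\norm{\bfp(r)} \leq E$ for all $r$, as claimed. I do not anticipate a real obstacle; the only places meriting a sentence of care are the verification that the constant-parameter choice genuinely kills the $\tau_1(r+1) - \tau_1(r)$ term, and the reminder (already noted before the proposition) that the perfect-initialization assumption is what makes $E = \beta/(1-\alpha)$ — rather than $\max\set{e(1), \beta/(1-\alpha)}$ — the operative value.
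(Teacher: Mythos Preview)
Your proposal is correct and matches the paper's approach: the paper does not give a separate proof of the proposition but treats it as the summary of the preceding discussion (``Putting it all together''), and your induction via Corollaries~\ref{cor:feasible-implies-proper} and~\ref{cor:inductive-error-bound} together with the fixed-point computation $E=\alpha E+\beta$ is precisely that discussion made explicit.
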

\begin{proof}[Proof of Corollary~\ref{cor:cluster-skew-bound}]
We apply Proposition~\ref{prop:cluster-error-bound}, Lemma~\ref{lem:cluster-clock-skew}, and use that $\vartheta_g E$ upper bounds $(\vartheta_g - 1) T$, as it satisfies \eqref{eqn:pre-inductive-error-bound} with equality.
\end{proof}

Lemma~\ref{lem:unanimous-error-gap} follows from several claims proven below.

\begin{claim}
  \label{claim:gen-error-bound}
  Consider the sequences $\tau_1, \tau_2$ and $\tau_3$ defined in~(\ref{eqn:taus-c}), and the sequence $e(r)$ is defined by $e(r + 1) = \alpha \cdot e(r) + \beta$ where
  \begin{equation}
    \label{eqn:abg}
    \begin{split}
    \alpha &= \frac{2 \vartheta^2 + 5 \vartheta - 5}{2 (\vartheta + 1) (1 - \gamma)} + \frac{\gamma}{1 - \gamma} (1 + c_1)\\
    \beta &= \frac{\gamma}{1 - \gamma} d + \frac{1}{1 - \gamma} ((3 \vartheta - 1) + \gamma \cdot c_1) U\\
    \gamma &= \frac{\zeta_\mx}{\zeta} \cdot \frac{\vartheta_g}{\vartheta} \cdot (\vartheta - 1).      
    \end{split}
  \end{equation}
  Let $\alpha_g$ ($\beta_g$), $\alpha_f$ ($\beta_f$), $\alpha_s$ ($\beta_s$) denote the values of $\alpha$ ($\beta$) when a cluster is in the general, unanimously fast, and unanimously slow state, respectively. Then
  \begin{align*}
    \alpha_g &= \frac 1 2 + (1 + c_2 ) c_1 \cdot \rho + O(\rho + c_1 \cdot \rho^2)\\
    \beta_g &= (1 + c_2 + O(\rho)) \rho \cdot d + (2 + (1 + c_2) c_1 \cdot \rho + O(c_1 \cdot \rho^2)) U\\
    \alpha_f &= \frac 1 2 + c_1 \cdot \rho + O(\rho + c_1 \cdot \rho^2)\\
    \beta_f &= (\rho + O(\rho^2)) d + (2 + c_1 \rho + O(\rho)) U.\\
    \alpha_s &= \frac 1 2 + c_1 \cdot \rho + O(\rho + c_1 \cdot \rho^2)\\
    \beta_s &= (\rho + O(\rho^2)) d + (2 + c_1 \rho + O(\rho)) U.
  \end{align*}
\end{claim}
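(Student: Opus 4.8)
The plan is to reduce all three cases to a single template and then substitute and expand. First I would note that the expressions~\eqref{eqn:abg} for $\alpha$, $\beta$, $\gamma$ are not ad hoc: they are exactly what one obtains by inserting the round lengths~\eqref{eqn:taus-c} into the pulse-skew recursion of Corollary~\ref{cor:pulse-skew} (equivalently Corollary~\ref{cor:pulse-skew-orig} after the relevant $\zeta$-reduction) and solving the resulting inequality $\norm{\bfp_C(r+1)}\le e(r+1)$ for $e(r+1)$. The coefficient $\frac{1}{\zeta}\paren{1-\frac{1}{\vartheta}}$ that multiplies $T(r)+\tau_1(r+1)-\tau_1(r)$ in that recursion, times the common prefactor $\zeta_\mx\vartheta_g$ carried through from~\eqref{eqn:taus-c}, is exactly $\gamma=\frac{\zeta_\mx}{\zeta}\cdot\frac{\vartheta_g}{\vartheta}\cdot(\vartheta-1)$; and since $T(r)$ contributes $(2+c_1)e(r)$ while $\tau_1(r+1)-\tau_1(r)$ contributes $e(r+1)-e(r)$ inside that bracket, moving the $\gamma\,e(r+1)$ term to the left produces the uniform $\frac{1}{1-\gamma}$ factors and the $(1+c_1)$ coefficient of $e(r)$. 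Hence the three cases differ only through the pair $(\vartheta,\zeta)$, and it suffices to read off $(\vartheta,\zeta)$, compute $\gamma$, and expand $\alpha,\beta$ asymptotically in $\rho$.

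For the first step, the nominal-rate definition~\eqref{eqn:nominal-rate} and the discussion of unanimous clusters in Section~\ref{sec:cluster-alg} give: a general execution has $\vartheta=\vartheta_g=(1+\mu)(1+\rho)$ (with $\zeta$ either $1$ or $1+\varphi$ depending on how the rate is normalized); a unanimously fast cluster has $\vartheta=\vartheta_u=1+\rho$ and $\zeta=\zeta_\mx=(1+\varphi)(1+\mu)$; and a unanimously slow cluster has $\vartheta=\vartheta_u=1+\rho$ and $\zeta=1+\varphi$. Substituting into $\gamma$ yields $\gamma_g=\zeta_\mx(\vartheta_g-1)$ (up to a harmless $1+\varphi$ normalization factor), $\gamma_f=\frac{\vartheta_g}{\vartheta_u}(\vartheta_u-1)=(1+\mu)\rho$, and $\gamma_s=(1+\mu)\frac{\vartheta_g}{\vartheta_u}(\vartheta_u-1)=(1+\mu)^2\rho$. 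Since $\mu=c_2\rho$ and $\varphi=1/c_1=\Theta(\rho)$, this gives $\gamma_g=(1+c_2)\rho+O(\rho^2)$ in the general case and $\gamma_f,\gamma_s=\rho+O(\rho^2)$ in the unanimous cases; the factor $1+c_2$ in $\gamma_g$ is precisely the gap that Lemma~\ref{lem:unanimous-error-gap} exploits.

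The remaining step is expanding $\alpha$ and $\beta$ from~\eqref{eqn:abg}. Two elementary facts organize this: $\frac{2\vartheta^2+5\vartheta-5}{2(\vartheta+1)}$ equals $\frac12$ at $\vartheta=1$, hence equals $\frac12+O(\vartheta-1)=\frac12+O(\rho)$ for $\vartheta\in\{\vartheta_g,\vartheta_u\}$; and $\frac{1}{1-\gamma}=1+O(\rho)$ because $\gamma=\Theta(\rho)$. So the first summand of $\alpha$ is $(\frac12+O(\rho))(1+O(\rho))=\frac12+O(\rho)$, and writing $\gamma=a\rho+O(\rho^2)$ with $a\in\{1+c_2,\,1\}$, the second summand $\frac{\gamma}{1-\gamma}(1+c_1)=\gamma(1+c_1)(1+O(\rho))$ equals $a\,c_1\rho+O(\rho+c_1\rho^2)$; adding these gives the stated $\alpha_g,\alpha_f,\alpha_s$. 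Applying the same two facts to $\beta=\frac{\gamma}{1-\gamma}d+\frac{(3\vartheta-1)+\gamma c_1}{1-\gamma}U$, and using $3\vartheta-1=2+O(\rho)$, gives a $d$-coefficient $a\rho+O(\rho^2)$ and a $U$-coefficient $2+a\,c_1\rho+O(\rho+c_1\rho^2)$; these are the stated $\beta_g,\beta_f,\beta_s$ (note $c_1\rho^2=\Theta(\rho)$, so the various $O$-notations used in the statement all coincide).

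The hard part is not conceptual but the careful tracking of error terms, which is delicate because $c_1=\Theta(1/\rho)$ is large: products such as $\gamma c_1$ and $\gamma c_1 U$ are $\Theta(1)$ and $1+c_1$ is $\Theta(1/\rho)$, so these terms must be retained — they are exactly what make $\alpha$ and $\beta$ nontrivial — while genuinely subleading products like $\gamma^2$, $(\vartheta-1)^2$, and $c_1\rho^3$ must be recognized as $O(\rho)$ and absorbed. The one convention to fix up front is how the general-case nominal rate is normalized (to $[1,\vartheta_g]$, matching Corollary~\ref{cor:pulse-skew-orig}, or to $[1+\varphi,(1+\varphi)\vartheta_g]$, handled through Corollary~\ref{cor:pulse-skew}); the two choices differ only by a factor $1+\varphi=1+O(\rho)$ and hence produce the same expansions, but one should fix the convention consistently with the pulse-skew corollary being invoked.
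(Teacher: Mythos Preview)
Your proposal is correct and takes essentially the same approach as the paper: the paper's proof simply says that ``all of the expressions above follow from~(\ref{eqn:abg}) with the appropriate setting of parameters, and using the fact that $1/(1-x)=1+x+O(x^2)$,'' which is exactly the substitute-and-expand template you spell out, only far less explicitly. Your identification of the three parameter pairs $(\vartheta,\zeta)$ and the derivation of $\gamma_g=(1+c_2)\rho+O(\rho^2)$ versus $\gamma_f,\gamma_s=\rho+O(\rho^2)$, together with your observation that $c_1\rho=\Theta(1)$ so terms like $\gamma c_1$ must be retained, are precisely the details the paper elides.
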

\begin{proof}
  All of the expressions above follow from~(\ref{eqn:abg}) with the appropriate setting of parameters, and using the fact that $1 / (1 - x) = 1 + x + O(x^2)$.
\end{proof}

\begin{claim}
  \label{claim:alpha-ratio}
  For any constants $c_2 \geq 1$ and $\e > 0$, and for sufficiently small $\rho > 0$ we have
  \[
  c_1 = \frac{\frac 1 2 - \e}{1 + c_2} \cdot \frac 1 \rho \implies \frac{1 - \alpha_g}{1 - \alpha_f} \leq 4 \e.
  \]
\end{claim}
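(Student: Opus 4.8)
The plan is a direct substitution into the asymptotic expansions of Claim~\ref{claim:gen-error-bound}, followed by a continuity-in-$\rho$ argument. First I would note that with the prescribed $c_1 = \frac{1/2 - \e}{1 + c_2}\cdot\frac1\rho$ we have $(1 + c_2)\,c_1\,\rho = \tfrac12 - \e$ exactly, while $c_1\,\rho = \frac{1/2 - \e}{1+c_2}$ is a constant. The observation that makes the error terms harmless is that $c_1\,\rho^2 = \frac{1/2-\e}{1+c_2}\,\rho = O(\rho)$, so that the remainders $O(\rho + c_1\rho^2)$ appearing in the expansions of $\alpha_g$ and $\alpha_f$ are genuinely $O(\rho)$, with a hidden constant depending only on $c_2$ and $\e$. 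Plugging in then yields $1 - \alpha_g = \e + O(\rho)$ and $1 - \alpha_f = \tfrac12 - \frac{1/2 - \e}{1+c_2} + O(\rho) = \frac{c_2/2 + \e}{1 + c_2} + O(\rho)$.

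Since $c_2 \ge 1$ and $\e > 0$, the limiting value of $1 - \alpha_f$ is a strictly positive constant, so $1 - \alpha_f$ is bounded away from $0$ for all sufficiently small $\rho$; in particular $\alpha_g, \alpha_f < 1$ there, and $(1-\alpha_g)/(1-\alpha_f)$ is a continuous function of $\rho$ on a neighbourhood of $0$ with $\lim_{\rho\to0}\frac{1-\alpha_g}{1-\alpha_f} = \frac{\e(1+c_2)}{c_2/2 + \e}$. I would then check that this limit is strictly below $4\e$: the inequality $\frac{\e(1+c_2)}{c_2/2+\e} < 4\e$ rearranges to $1 + c_2 < 2c_2 + 4\e$, i.e.\ $c_2 > 1 - 4\e$, which holds because $c_2 \ge 1 > 1 - 4\e$. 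By continuity, $(1-\alpha_g)/(1-\alpha_f) \le 4\e$ for all sufficiently small $\rho$, which is the claim.

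The computation itself is routine; the two points requiring care are (i) recognising that, despite $c_1 = \Theta(1/\rho)$ being large, the combination $c_1\rho^2$ is still $O(\rho)$, so the $O(\cdot)$ remainders of Claim~\ref{claim:gen-error-bound} vanish as $\rho \to 0$ — this is precisely where the scaling $c_1 = \Theta(1/\rho)$ pays off — and (ii) verifying that the limiting ratio is \emph{strictly} less than $4\e$, using the hypothesis $c_2 \ge 1$, so that there is a positive margin to absorb the $O(\rho)$ error. I expect (i) to be the conceptually important step; once it is in place, everything else is bookkeeping.
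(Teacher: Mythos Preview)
Your proposal is correct and follows essentially the same approach as the paper's own proof: substitute the prescribed $c_1$ into the expansions from Claim~\ref{claim:gen-error-bound}, obtain $1-\alpha_g = \e + O(\rho)$ and $1-\alpha_f = \frac{c_2/2+\e}{1+c_2} + O(\rho)$, and then bound the ratio by $4\e$ using $c_2\ge 1$. The only cosmetic difference is that the paper absorbs the $O(\rho)$ terms directly into the chain of inequalities, whereas you phrase this as a limit-plus-continuity argument; your explicit remark that $c_1\rho^2 = O(\rho)$ is precisely the point the paper uses implicitly.
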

\begin{proof}
  By plugging in the value of $c_1$ above into the expression for $\alpha_g$ in Claim~\ref{claim:gen-error-bound}, we get $1 - \alpha_g = \e + O(\rho)$. Therefore,
  \begin{align*}
    \frac{1 - \alpha_g}{1 - \alpha_f} &= \frac{\e + O(\rho)}{\frac 1 2 - ((1/2) - \e)(1 + c_1) - O(\rho)}\\
    &= \frac{(1 + c_1) \e + O(\rho)}{(1/2) (1 + c_1) - ((1/2) - \e - O(\rho)}\\
    &= \frac{1 + c_1}{(c_1 / 2) + \e}\e + O(\rho)\\
    &\leq 2 \frac{1 + c_1}{c_1} \e\\
    &\leq 4 \e.
  \end{align*}
\end{proof}

\begin{claim}
  \label{claim:steady-state-ratio}
  For any constant $c_2 \geq 1$ and sufficiently small $\rho > 0$, setting $c_1$ as in Claim~\ref{claim:alpha-ratio} with $\e = 1 / 4096$ we obtain
  \[
  \frac{\essf}{\essg} \leq \frac{c_1 \cdot c_2}{128} \rho.
  \]
\end{claim}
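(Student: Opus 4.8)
The plan is to factor
\[
\frac{\essf}{\essg}=\frac{\beta_f}{\beta_g}\cdot\frac{1-\alpha_g}{1-\alpha_f},
\]
which is legitimate because in both the general and the unanimously fast case the recursion $e(r+1)=\alpha e(r)+\beta$ of Claim~\ref{claim:gen-error-bound} has $0<\alpha<1$ for sufficiently small $\rho$ — with the chosen $c_1$ one has $1-\alpha_g=\e+O(\rho)>0$ and $1-\alpha_f=\tfrac12-c_1\rho+O(\rho)\ge\tfrac14>0$, using $c_1\rho=\tfrac{1/2-\e}{1+c_2}\le\tfrac14$ for $c_2\ge1$ — so the steady states are the fixed points $\essg=\beta_g/(1-\alpha_g)$ and $\essf=\beta_f/(1-\alpha_f)$.

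For the second factor I would invoke Claim~\ref{claim:alpha-ratio} with $\e=1/4096$ to obtain $\tfrac{1-\alpha_g}{1-\alpha_f}\le4\e=\tfrac1{1024}$ once $\rho$ is small enough. For the first factor I would prove $\beta_f\le\beta_g$. The crucial point is that, with $c_1=\tfrac{1/2-\e}{1+c_2}\cdot\tfrac1\rho$, the products $c_1\rho$ and $(1+c_2)c_1\rho$ are exactly the constants $\tfrac{1/2-\e}{1+c_2}$ and $\tfrac12-\e$, so that the $O(c_1\rho^2)$ remainder appearing in $\beta_g$ is in fact $O(\rho)$. Comparing the $d$- and $U$-coefficients of $\beta_g$ and $\beta_f$ from~\eqref{eqn:abg} term by term and using $c_2\ge1$ (hence $\tfrac{1/2-\e}{1+c_2}\le\tfrac14<\tfrac12-\e$ and $1\le1+c_2$), the $O(\rho)$ slack absorbs all discrepancies and $\beta_f\le\beta_g$ for small $\rho$. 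Combining the two bounds yields $\essf/\essg\le4\e=\tfrac1{1024}$.

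It then remains to verify $\tfrac1{1024}\le\tfrac{c_1c_2}{128}\rho$. Substituting $c_1\rho=\tfrac{1/2-\e}{1+c_2}$, the right-hand side equals $\tfrac{1/2-\e}{128}\cdot\tfrac{c_2}{1+c_2}\ge\tfrac{1/2-\e}{256}$ since $\tfrac{c_2}{1+c_2}\ge\tfrac12$ for $c_2\ge1$, and $\tfrac{1/2-\e}{256}\ge\tfrac1{1024}$ because $\e=1/4096\le\tfrac14$. Chaining $\essf/\essg\le\tfrac1{1024}\le\tfrac{c_1c_2}{128}\rho$ completes the proof.

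All of these computations are routine; the one thing requiring care is tracking which quantities are genuinely $\Theta(1)$ — namely $c_1\rho$, $1-\alpha_g$, and $1-\alpha_f$ — so that the $O(\rho)$ remainders from Claim~\ref{claim:gen-error-bound} are honestly lower order and may be swallowed by ``for sufficiently small $\rho$''. I expect the only (minor) obstacle to be the term-by-term comparison of $\beta_f$ and $\beta_g$ after substituting $c_1$: one must confirm that each coefficient of $\beta_f$ is at most the corresponding coefficient of $\beta_g$ up to an $O(\rho)$ slack, which is immediate from~\eqref{eqn:abg} but does rely on $c_2\ge1$.
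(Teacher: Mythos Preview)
Your proposal is correct and follows essentially the same approach as the paper's proof: factor $\essf/\essg=(\beta_f/\beta_g)\cdot(1-\alpha_g)/(1-\alpha_f)$, bound the first factor by $1$ and the second by $4\e=1/1024$ via Claim~\ref{claim:alpha-ratio}, and then verify $1/1024\le\tfrac{c_1c_2}{128}\rho$ by the same substitution $c_1\rho=\tfrac{1/2-\e}{1+c_2}$ and the bound $\tfrac{c_2}{1+c_2}\ge\tfrac12$. Your write-up is in fact slightly more careful than the paper's in that you explicitly justify well-definedness of the steady states and spell out the $\beta_f\le\beta_g$ comparison.
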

\begin{proof}
  Recall that $\essf = \beta_f / (1 - \alpha_f)$, and similarly for $\essg$. By Claim~\ref{claim:gen-error-bound}, we have $\beta_f / \beta_g \leq 1$ for all $c_1 \geq 1$ and sufficiently small $\rho$. Thus, it suffices to show that
  \begin{equation}
    \label{eqn:alpha-ratio}
    \frac{1 - \alpha_g}{1 - \alpha_f} \leq \frac{c_1 \cdot c_2}{128} \rho.    
  \end{equation}
  Plugging in $c_1$ as in Claim~\ref{claim:alpha-ratio}, we compute
  \[
  \frac{c_1 \cdot c_2}{128} \rho \geq \frac{1}{512} \frac{c_2}{1 + c_2} \geq \frac{1}{1024},
  \]
  where the first inequality holds for any $\e \leq 1/4$. Thus, it suffices to show that $(1 - \alpha_g) / (1 - \alpha_f) \leq 1 / 1024$. By Claim~\ref{claim:alpha-ratio}, setting $\e = 1 / 4096$ suffices.
\end{proof}

\begin{proof}[Proof of Lemma~\ref{lem:unanimous-error-gap}, Part~1]
  Here, we show that.
  \begin{equation}
    \label{eqn:u-error-gap}
    \frac{L_v(t_v(r + 1)) - L_v(t_v(r))}{t_v(r + 1) - t_v(r)} \geq (1 + \varphi)\paren{1 + \frac 7 8 \mu}.
  \end{equation}
  For a clock $v$ in fast mode, $\hnom_v(t) \geq (1 + \varphi)(1 + \mu)$. The nominal length of a round $r$ for $v$ is at most the logical round length, plus the Lynch-Welsh correction: $T(r) + \Delta_v(r)$. Therefore, we have
  \[
  t_v(r + 1) - t_v(r) \leq \frac{T(r) + \Delta_v(r)}{(1 + \varphi)(1 + \mu)}. 
  \]
  Thus, it suffices to prove that
  \[
  T(r) \paren{\frac{T(r) + \Delta_v(r)}{(1 + \varphi)(1 + \mu)}}^{-1} \geq (1 + \varphi)\paren{1 + \frac 7 8 \mu}. 
  \]
  Rearranging this expression gives the equivalent expression
  \[
  \frac{\Delta_v(r)}{T(r) + \Delta_v(r)} \leq \frac{\mu}{8(1 + \mu)}.
  \]
  Observing that $\Delta_v(r) \leq \frac 1 2 T(r)$ and assuming $1 + \mu \leq 2$, it suffices to show that
  \begin{equation}
    \label{eqn:delta-bound}
    \Delta_v(r) \leq \frac{\mu}{32} T(r).
  \end{equation}
  We can bound the correction $\Delta_v(r)$---assuming that $e_{f, k}(r)$ is sufficiently close to $\essf$\footnote{We will argue that $k = O(1)$ suffices later.}---as follows:
  \[
  \Delta_v(r) \leq \underset{\leq 2}{\underbrace{(1 + \varphi)(1 + \mu)(1 + \rho)}} \cdot \underset{\leq 2 \essf}{\underbrace{(\norm{\bfp(r)} + U)}} \leq 4 \essf.
  \]
  Again, the bound on the second term holds for sufficiently large $k$. Since $T(r) = \tau_1(r) + \tau_2(r) + \tau_3(r)$ are taken as in~(\ref{eqn:taus-c}), we have $T(r) \geq c_1 \essg$. Further, by assumption, we take $\mu = c_2 \cdot \rho$.  Thus to show~(\ref{eqn:delta-bound}) it suffices to show that
  \[
  \frac{\essf}{\essg} \leq \frac{c_1 \cdot c_2}{128} \rho.
  \]
  By Claim~\ref{claim:steady-state-ratio}, this final equation is satisfied for any $c_2 \geq 1$ and $c_1 = (1/2 - \e) / (1 + c_2) \rho$ for $\e = 1/4096$. Thus~(\ref{eqn:u-error-gap}) holds for every node $v \in C \setminus F$ individually. Since $F_C(t)$ is defined to be the median value $\set{L_v(t) \sucht v \in C \setminus F}$, $F_C(t)$ must also increase at (at least) the minimum rate of the individual $L_v(t)$s (assuming, of course, that $F$ is static). 

  All that remains is to bound the number of rounds, $k$, until $\norm{\bfp(r)} + U \leq 2_f^\infty$. To this end, we apply Claims~\ref{claim:gen-error-bound} with the setting of parameters as in Claims~\ref{claim:alpha-ratio} and~\ref{claim:steady-state-ratio}. Recall that the steady state errors are defined by $\essf = \beta_f / (1 - \alpha_f)$ and $\essg = \beta_g / (1 - \alpha_g)$. From the choice of $c_1$ (Claim~\ref{claim:alpha-ratio}), we get
  \[
  1 - \alpha_g = \e - O(\rho).
  \]
  Thus, for sufficiently small $\rho$, we have
  \[
  \essg = \frac{\beta_g}{1 - \alpha_g} = \frac{2}{\e} \rho (1 + c_2 + O(\rho)) \cdot d + \frac{2}{\e}(2 + \e + O(\rho)) \cdot U.
  \]
  Similarly, 
  \[
  1 - \alpha_f = \frac 1 2 - \rho \cdot c_1 - O(\rho) = \frac 1 2 - \frac{\frac 1 2 - \e}{1 + c_2} - O(\rho) \geq 1/4.
  \]
  Therefore
  \[
  \essf = \frac{\beta_g}{1 - \alpha_g} \leq 4 \rho (1 + c_2 + O(\rho)) \cdot d + 4 (2 + \e + O(\rho)) \cdot U.
  \]
  Thus, $\essg / \essf = \Theta(1)$. Since $e(r-k) \leq 2 \essg$ by assumption and the convergence of the error to $\essf$ is exponential, the number of rounds until the error is at most $c \cdot \essf$ for any constant $c > 1$ is constant. Thus $k = O(1)$ suffices.
\end{proof}

\begin{proof}[Proof of Lemma~\ref{lem:unanimous-error-gap}, Part~2]
  Here we show:
  \[
  (1 + \varphi)\paren{1 - \frac 1 8 \mu} \underset{\text{(a)}}{\leq} \frac{L_v(t_v(r + 1) - L_v(t_v(r))}{t_v(r + 1) - t_v(r)} \underset{\text{(b)}}{\leq} (1 + \varphi)\paren{1 + \frac 1 8 \mu}
  \]
  We first consider inequality~(a). Arguing as in the proof of Part~1, it suffices to show that $\abs{\Delta_v(r)} \leq \frac 1 8 \mu T(r)$. As before, we can bound $\abs{\Delta_v(r)} \leq 4 \esss$, so it suffices to have $\esss \leq \frac{1}{32} \mu T(r)$. Since $T(r) \geq c_1 \essg$, it suffices to have
  \begin{equation}
    \label{eqn:esss-sufficiency-a}
    \frac{\esss}{\essg} \leq \frac{c_1 \cdot c_2}{32} \rho.
  \end{equation}
  Before proving that~(\ref{eqn:esss-sufficiency-a}) is satified, we will show that inequality~(b) requires a more stringent condition (which is also satisfied for appropriate choices of parameters).

  For (b), since each $v \in C \setminus F$ runs in slow mode, we have
  \[
  \frac{T(r) - \abs{\Delta_v(r)}}{(1 + \varphi)(1 + \rho)} \leq t_v(r+1) - t_v(r).
  \]
  Rearranging terms gives
  \[
  \frac{T(r)}{\frac{T(r) - \abs{\Delta_v(r)}}{(1 + \varphi)(1 + \rho)}} \leq (1 + \varphi)\paren{1 + \frac 1 8 \mu} \iff \frac{\Delta_v(r) + \rho T(r)}{T(r) - \abs{\Delta_v(r)}} \leq 1 - \frac 1 8 \mu.
  \]
  Using the fact that $\abs{\Delta_v(r)} \leq T(r) / 2$, it is sufficient to show that
  \[
  \frac{\abs{\Delta_v(r)}}{T(r)} + \rho \leq \frac 1 {16} \mu.
  \]
  Taking $c_2 \geq 32$---i.e., $\rho \leq (1/32) \mu$---we find that it suffices to show
  \[
  \frac{\abs{\Delta_v(r)}}{T(r)} \leq \frac 1 {32} \mu.
  \]
  Thus, in this case we require
  \begin{equation}
    \label{eqn:esss-sufficiency-b}
    \frac{\esss}{\essg} \leq \frac{c_1 \cdot c_2}{128} \rho \quad\text{and}\quad c_2 \geq 32.
  \end{equation}
  Since condition~(eqn:sss-sufficiency-b) implies~(eqn:sss-sufficiency-a), we must only prove the latter. The argument is nearly identical to the one for Part~1 of the lemma.
\end{proof}

\section{Analysis of the Intercluster Algorithm}\label{app:gcs}

\subsection{Missing proofs from Section~\ref{sec:intercluster-alg}}

\begin{proof}[Proof of Lemma~\ref{lem:faithful-delta}]
  We consider the case where $C$ satisfies $\FC$ at time $t$. The case where $C$ satisfies $\SC$ is analogous. We must show that for all $v \in C \setminus F$, $v$ satisfies $\FT$ at every time in the interval $[t_v(r_t - k), t_v(r_t)]$. First consider the case where $C$ satisfies $\FCone$ at time $t$. That is, there exists a neighboring cluster $A$ of $C$ and an integer $s$ such that $L_A(t) - L_C(t) \geq 2 s \kappa$. Suppose towards a contradiction that $v \in C \setminus F$ does not satisfy $\FTone$ at time $t' = t_v(r_t - k)$ (for the same value of $s$ for which $C$ satisfies $\FCone$ at time $t$). That is,
  \begin{equation}
    \label{eqn:not-ftone}
    \tilL_A^v(t') - L_v(t') < 2 s \kappa - \delta.
  \end{equation}
  Then we infer:
  \begin{align*}
    (\ref{eqn:not-ftone}) &\implies \tilL_A^v(t') - L_A(t') + L_A(t') - L_v(t') < 2 s \kappa - \delta\\
    &\implies - 2 \Err +  L_A(t') - L_v(t') < 2 s \kappa - \delta\\
    &\implies L_A(t') - L_v(t') < 2 s \kappa - \delta + 2 \Err\\
    &\implies L_A(t') - L_A(t) + L_C(t) - L_v(t') < - \delta + 2 \Err\\
    &\implies \delta < 2 \Err + (L_A(t) - L_A(t')) - (L_C(t) - L_C(t')) + (L_v(t') - L_C(t'))\\
    &\implies \delta < 2 \Err + \thmax (t - t') - (t - t') + 2 \Err\\
    &\implies \delta < 4 \Err + (\thmax - 1)(t - t')\\
    &\implies \delta < 4 \Err + (\thmax - 1)(k + 1) T\\
    &\implies \delta < 4 \Err + (k + 1) \Err = (k + 5) \Err.
  \end{align*}
  The final expression contradicts the choice of $\delta$. Thus every $v$ satisfies $\FTone$ at the beginning of round $r_t - k$. A nearly identical argument shows that each $v$ satisfies $\FTtwo$ at the beginning of round $r_t - k$ as well. Finally, since $\kappa = 3 \delta$, $\FT$ and $\ST$ are mutually exclusive. Therefore, every execution of $X$ is faithful for $C$, as desired.
\end{proof}

\begin{proof}[Proof of Proposition~\ref{prop:is-simulates-gcs}]
  Fix a cluster $C \in \calC$, we show step by step that all four axioms are satisfied starting with axiom (A1).
  From definition of $L_C$ we get $1 \leq \frac{\mathrm{d}}{\mathrm{dt}} L_C(t)$ for free. Hence, we need to
  show that $\frac{\mathrm{d}}{\mathrm{dt}} L_C(t) \leq (1 + \bar{\rho})(1 + \bar{\mu})$.
  We examine both sides of the inequality starting with the LHS.
  \begin{align*}
    \frac{\mathrm{d}}{\mathrm{dt}} L_C(t) & \leq (1+\frac{2\varphi}{1-\varphi})(1+\rho)(1+\mu)\\
    &= (1+2\varphi+O(\rho^2))(1+\rho)(1+\mu)\\
    &= 1 + 2\varphi + \rho + \mu + O(\rho^2)\\
    &= 1 + 2\varphi + (c_2 + 1)\rho + O(\rho^2)
  \end{align*}
  The RHS gives us:
  \begin{align*}
    (1 + \bar{\rho})(1 + \bar{\mu}) &= (1+\varphi)^2(1+\frac{1}{4}\mu)(1+\frac{7}{8}\mu)\\
    &=(1+2\varphi+O(\rho^2))(1+\frac{9}{8}\mu+O(\rho^2))\\
    &= 1 + 2\varphi + \frac{9}{8}c_2\rho + O(\rho^2)
  \end{align*}
  Putting both sides together the first axiom is satisfied  for sufficiently small $\rho$, when
  $$1 + 2\varphi + (c_2 + 1)\rho + \rho \leq 1 + 2\varphi + \frac{9}{8}c_2\rho\,,$$
  i.e., if $c_2 \geq 16$. Which is already implied by condition~(\ref{eqn:esss-sufficiency-b}).

  By Corollary~\ref{cor:faithful-error-gap} we know that if $C$ satisfies $\SC$
  at time $t$, then $\frac{\mathrm{d}}{\mathrm{dt}} L_v(t)$ is bounded above by
  $(1+\varphi)(1+\frac{1}{8}\mu)$. Similarly, if $C$ satisfies $\FC$ at time
  $t$, then $\frac{\mathrm{d}}{\mathrm{dt}} L_v(t)$ is bounded below by
  $(1+\varphi)(1+\frac{7}{8}\mu)$. Which gives us that the following two axioms,
  i.e., axiom (A2) $ (1+\varphi)(1+\frac{1}{8}\mu) \leq 1+\bar{\rho}$ and axiom
  (A3) $1+\bar{\mu} \leq (1+\varphi)(1+\frac{7}{8}\mu)$ are readily satisfied by
  the choice of $\bar{\rho}$ and $\bar{\mu}$.

  We are left with proving the fourth axiom (A4) $\bar{\mu} / \bar{\rho} > 1$, for sufficiently small $\rho$.
  \begin{align*}
    \bar{\mu}/\bar{\rho}=\frac{(1 + \varphi)(1 + (7/8)\mu) - 1}{(1 + \varphi)(1 + (1/4) \mu) - 1} &=
    \frac{\varphi + (7/8)\mu + O(\rho^2)}{\varphi + (1/4)\mu + O(\rho^2)} \\
    &\geq \frac{\varphi + (7/8)\mu + \rho}{\varphi + (1/4)\mu} \\
    &= 1 + \frac{(5/8)\mu + \rho}{\varphi + (1/4)\mu}
  \end{align*}
  Hence, $\bar{\mu}/\bar{\rho} \geq 1+\eta$ for some $\eta\leq\frac{(5/8)\mu + \rho}{\varphi + (1/4)\mu}$
  and thus axiom (A4) is satisfied.
\end{proof}

\subsection{Bounding the global skew}

So far, we have neglected the global skew. Bounding it provides, essentially, the induction base for proving a small local skew. As no new techniques are required for ensuring a global skew of $O(\delta D)$, where $D$ is the hop diameter of both $G$ and $\cal{G}$, we just sketch a feasible construction here. A standard approach from the literature~(see, e.g., \cite{kuhn18gradient,lenzen10tight}) is to ensure that (i) all nodes maintain a conservative estimate of the maximum clock value in the system that is at most by the target bound $\mathcal{S}$ behind the actual value, (ii) nodes attaining the maximum increase their clocks slowly, and (iii) nodes whose clocks are by the target bound behind the maximum are guaranteed to run faster than those to which (ii) applies. From this is immediate that the target bound can never be surpassed. Naturally, the rules ensuring these properties must not come into conflict with the fast and slow mode triggers, as the algorithm must still obey them.

We sketch how to simultaneously achieve these properties for our setting and an asymptotically optimal skew bound of $O(\delta D)$. We start by defining $L^{\max}(t):=\max_{v\in V\setminus F}\{L_v(t)\}$ and bounding the rate at which it increases.
\begin{lem}\label{lem:Lmax}
Suppose nodes pick slow mode whenever the fast mode trigger is not satisfied at the beginning of a round. Then $L^{\max}$ increases at rate at least $1$ and at most $1+O(\rho)$.
\end{lem}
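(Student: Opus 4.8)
The plan is to bound the rate of $L^{\max}$ from below and above separately. The lower bound is immediate: by axiom (A1) (equivalently, Lemma~\ref{lem:logical-clock-rate}), every $L_v$ with $v \in V \setminus F$ increases at rate at least $1$, and the pointwise maximum of functions each increasing at rate at least $1$ also increases at rate at least $1$. So the only real work is the upper bound, and the only way a violation could occur is if some node whose clock currently attains (or is close to) the maximum is running in fast mode for too long.

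First I would fix a time $t$ and let $v \in V \setminus F$ be a node with $L_v(t) = L^{\max}(t)$; it suffices to show $\tfrac{\mathrm d}{\mathrm dt} L_v(t) \le 1 + O(\rho)$ at such points (more precisely, to bound $L^{\max}(t') - L^{\max}(t)$ over a short interval by $(1+O(\rho))(t'-t)$, since $L^{\max}$ need not be differentiable; one picks for each subinterval a node attaining the max). The key claim is that $v$ cannot be in fast mode: if it were, then by the definition of the fast trigger $\FT$ together with $\FTone$, there is a neighbouring cluster $A$ of $C$ (where $v \in C$) and an $s \in \N$ with $\tilL_A^v(t) - L_v(t) \ge 2s\kappa - \delta \ge \kappa - \delta > 0$ (using $\kappa = 3\delta > \delta$ and $s \ge 1$). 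By Corollary~\ref{cor:cluster-clock-estimate}, $\tilL_A^v(t) \le L_A(t) + \calE/2 \le L_A(t) + O(\Err)$, and $L_A(t) \le L^{\max}(t)$ by definition of $L^{\max}$ together with $L_A = (L_A^+ + L_A^-)/2 \le L_A^+ \le L^{\max}$. But $v$ was chosen so that $L_v(t) = L^{\max}(t) \ge L_A(t)$, so $\tilL_A^v(t) - L_v(t) \le O(\Err)$. Choosing $\delta = (k+5)\Err$ and $\kappa = 3\delta$ as in Lemma~\ref{lem:faithful-delta}, the lower bound $\kappa - \delta = 2\delta = 2(k+5)\Err$ exceeds this $O(\Err)$ slack for the constants in play, a contradiction. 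Hence $v$ does not satisfy $\FT$ at the start of its current round.

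By the hypothesis of the lemma, a node that does not satisfy $\FT$ at the start of a round picks slow mode for that round, so $\indf_v \equiv 0$ throughout the round containing $t$. Then by Lemma~\ref{lem:logical-clock-rate} (or directly from~\eqref{eqn:logical-clock} with $\indf_v = 0$ and $0 \le \delta_v \le \tfrac{2}{1-\varphi}$), we have $\tfrac{\mathrm d}{\mathrm dt} L_v(t) \le \bigl(1 + \tfrac{2\varphi}{1-\varphi}\bigr)(1+\rho) = 1 + O(\varphi) + O(\rho)$, and since $\varphi = 1/c_1 = \Theta(\rho)$ this is $1 + O(\rho)$. Assembling: for every $t$, the node attaining $L^{\max}(t)$ runs slow and hence at rate $1 + O(\rho)$, so $L^{\max}$ increases at rate at most $1 + O(\rho)$ (a standard argument passing from pointwise rate bounds at the maximizer to a Lipschitz bound on the upper envelope). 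Combined with the lower bound above, this proves the lemma.

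The main obstacle is the claim that the maximum-attaining node cannot be in fast mode — everything hinges on the estimate $\tilL_A^v(t) \le L^{\max}(t) + O(\Err)$ beating the fast-trigger threshold $2s\kappa - \delta \ge \kappa - \delta = 2\delta$. This is exactly where the slack built into $\delta$ and $\kappa$ (Lemma~\ref{lem:faithful-delta}) is consumed, and one must be careful that the estimation error $\calE/2$ from Corollary~\ref{cor:cluster-clock-estimate}, plus any contribution from the half-round over which the trigger was evaluated versus the current instant, stays comfortably below $2\delta$; since $\calE, \Err = O(\rho d + U)$ and $\delta = (k+5)\Err$ with $k = O(1)$, the constants work out, but this is the delicate step to get right.
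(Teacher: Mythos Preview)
Your argument is essentially correct and follows the paper's structure for the key step: a node in fast mode cannot attain $L^{\max}$, because the fast trigger witnesses a neighbouring cluster whose (estimated) clock exceeds $L_v$ by a margin large enough to survive both the estimation error and the drift accumulated over one round. The paper argues this the same way (its proof sketch states that ``there was some correct neighbor $w$ \dots whose logical clock was by at least $2\kappa-\delta$ ahead''). Two minor slips: you evaluate the trigger at $t$ rather than at $t_v(r_t)$ (you flag this at the end, and the slack in $\delta$ indeed absorbs the $(\thmax-1)T$ drift), and $2s\kappa-\delta$ with $s\ge 1$ gives $\ge 2\kappa-\delta$, not $\kappa-\delta$.

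Where you genuinely diverge from the paper is the slow-mode case. You simply observe that with $\indf_v=0$ the instantaneous rate is at most $(1+\tfrac{2\varphi}{1-\varphi})(1+\rho)$ by Lemma~\ref{lem:logical-clock-rate}, and since the paper fixes $\varphi=1/c_1=\Theta(\rho)$ this is $1+O(\rho)$. The paper instead bounds the Lynch--Welch correction $\Delta_v$ of the max-attaining node against $L^{\max}$ itself, obtaining an amortized bound $L_v(t)\le L^{\max}(t_v(r))+O(\rho)(t-t_v(r))+U$. Your route is shorter and perfectly valid under the paper's parameter choice $\varphi=\Theta(\rho)$; the paper's route does not rely on this relation and would still give $1+O(\rho)$ even if $\varphi$ were only assumed small enough for proper execution, at the cost of a more delicate computation. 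Either way the conclusion is the same.
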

\begin{proof}[Proof Sketch.]
The lower bound is immediate from Lemma~\ref{lem:logical-clock-rate}.

Consider some correct node $v\in C$ and round $r$. We make a case distinction. If $v$ is in fast mode in round $r$, then by assumption $v$ satisfied the fast trigger and there was some correct neighbor $w$ of $v$ whose logical clock was by at least $2\kappa-\delta > \delta$ than that of $v$ at time $t_v(r)$. However, by the choice of $\delta \geq 4 \Err \geq (\thmax - 1) T + \thmax \cdot \Err$, this implies that $L_w(t) > L_v(t)$, so that $L_v(t) \leq L^{\max}(t)$ for all $t \in [t_v(r), t_v(r+1)]$.

Now suppose $v$ is in slow mode during round $r$. Consider in Algorithm~\ref{alg:cluster-sync} the values $L_v(t_{wv})$ stored for node $w\in C$ in some round $r$ at time $t_{wv}$ due to receiving a message from $w$. The message was sent at logical time $(r-1)T+\tau_1$; denote by $t^{\max}$ the time when the first correct clock reached this value. Thus, $t_{wv}\geq t^{\max}+d-U$ for all $w\in C$. Node $v$ compares its logical clock value at reception times of these messages to the value for its own message, which satisfies $t_{vv}\leq p_v(r)+d$, where $p_v(r)$ is the time when $v$ broadcast its own pulse in round $r$. Note that the logical clock of nodes run at least at rate $1+\varphi$ between sending their pulses and them being received, as they are in phase 2 during this time. Moreover, as $v$ is in slow mode, it $L_v$ runs at rate $(1+\varphi)h_v(\tau)\leq (1+\varphi)(1+\rho)$ for $\tau\in [p_v(r),t_{vv}]$. Therefore, we get that
\begin{align*}
L_v(t_{wv})-L_v(t_{vv})&\geq L_v(t^{\max}+d-U)-L_v(p_v(r)+d)\\
&\geq L_v(t^{\max})+(1+\varphi)(d-U)-L_v(p_v(r))-(1+\varphi)(1+\rho)d\\
&\geq L_v(t^{\max})-L^{\max}(t^{\max}) - U - O(\rho d)\,.
\end{align*}
This bounds $\Delta_v$ such that $v$ will not increase its logical clock by more than this value compared to its nominal rate. We get that $L_v(t)\leq L^{\max}(t_r(v))+O(\rho)(t-t_v(r))+U$ for all $t\in [t_v(r),t_v(r+1)]$.
\end{proof}
Next, we discuss how to maintain the desired estimate of $L^{\max}$. For simplicity, we make no attempt to keep the message complexity low.
\begin{lem}\label{lem:estimate_global}
Suppose nodes pick slow mode whenever the fast mode trigger is not satisfied at the beginning of a round. Then we can maintain estimates $M_v(t)$ such that $L^{\max}(t)\geq M_v(t)\in L^{\max}(t)-O(\delta D)$.
\end{lem}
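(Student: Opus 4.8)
The plan is to piggy-back the estimates on the $\ClusterSync$ pulses and maintain them by a flooding rule with two ingredients: a Byzantine-robust per-cluster aggregation, and a conservative but nearly tight aging of received values. Concretely, every correct $v\in C$ attaches its current value $M_v$ to each pulse it broadcasts in $\ClusterSync$, and for every cluster $B\in N_C\cup\set{C}$ it stores the most recent value received from each node of $B$ together with the hardware-clock reading at reception. At any time $t$ it re-ages each stored value by adding the minimum message transit time $d-U$ plus the hardware-measured time since reception divided by $1+\rho$; this sum is a provable lower bound on the Newtonian time elapsed since that value was a valid estimate of $L^{\max}$. It then lets $\widehat{M}_B^v(t)$ be the $(f+1)$-th largest of the aged values stemming from nodes of $B$, and sets $M_v(t)=\max\set{L_v(t),\,\max_B\widehat{M}_B^v(t)}$. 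Conservativeness, $M_v(t)\le L^{\max}(t)$, then follows by induction over the causal order of messages: $L_v(t)\le L^{\max}(t)$ trivially, and among the $f+1$ largest values feeding $\widehat{M}_B^v(t)$ at least one comes from a correct $w^*\in B$ (as $\abs{F\cap B}\le f$), whose reported value was $\le L^{\max}$ at its send time by the inductive hypothesis; since $L^{\max}$ rises at rate $\ge 1$ (Lemma~\ref{lem:logical-clock-rate}, Lemma~\ref{lem:Lmax}) while our aging credits at most the true elapsed time, the aged value is $\le L^{\max}(t)$, and hence so is $\widehat{M}_B^v(t)$.

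For the lower bound I would use a spatial induction along cluster paths. Fix $t$ and a correct $u$ with $L_u(t)=L^{\max}(t)$; let $C_0\ni u$ and pick a shortest path $C_0,C_1,\ldots,C_\ell$ in $\calG$ with $v\in C_\ell$, so $\ell\le D$. The claim is that for every $j$, all correct nodes of $C_j$ satisfy $M(\cdot)\ge L^{\max}(\cdot)-j\cdot c'\delta$ at all sufficiently late times, for a suitable constant $c'$. The base case $j=0$ is just the intra-cluster skew bound: by Corollary~\ref{cor:cluster-skew-bound} every correct $u'\in C_0$ has $L_{u'}(t)\ge L^{\max}(t)-O(\Err)=L^{\max}(t)-O(\delta)$, and $M_{u'}\ge L_{u'}$. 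For the step, a correct $w\in C_{j+1}$ hears, over the complete bipartite connection to $C_j$, the current values of all $k-f\ge 2f+1$ correct nodes of $C_j$; by the inductive hypothesis each of these, after re-aging, is at least $L^{\max}(t)-jc'\delta-O(\delta)$, where the $O(\delta)$ absorbs the $O(U)$ slack of using the minimum transit time and the $O(\rho T)=O(\delta)$ slack of hardware-clock aging over a staleness window of $O(T)$ (using $\rho T=O(\delta)$ for the parameters fixed in Section~\ref{sec:intercluster-alg}). Since more than $f$ of the aged values from $C_j$ clear this threshold, the $(f+1)$-th largest does too, so $\widehat{M}_{C_j}^w(t)\ge L^{\max}(t)-(j+1)c'\delta$, and therefore so does $M_w(t)$. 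Unrolling over $\ell\le D$ hops gives $M_v(t)\ge L^{\max}(t)-O(\delta D)$.

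The delicate point is the aging rule. Aging a received value only by the hardware-measured time since its reception would leave a full message transit time, $\Theta(d)$, uncredited at every hop and hence give only the useless bound $O(dD)$; it is essential to additionally credit the provable minimum transit $d-U$, after which only the $O(U)$ uncertainty and the $O(\rho T)=O(\delta)$ clock imprecision remain per hop. The second point is that the $(f+1)$-th-largest aggregation must serve two opposite purposes at once --- it is $\le$ the largest correct report (one of the top $f+1$ is correct), which gives conservativeness, and it is $\ge$ every threshold met by more than $f$ correct reports, which drives the propagation; both rely precisely on there being $\ge 2f+1$ correct nodes per cluster, i.e.\ on $k\ge 3f+1$. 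Finally, the invariant in fact holds from time $0$: all clocks start at $0$, and since $L^{\max}$ and $L^{\min}$ drift apart at rate $O(\rho)$, during the $O(DT)$ warmup the global skew --- and a fortiori $L^{\max}(t)-M_v(t)\le L^{\max}(t)-L_v(t)$ --- never exceeds $O(\rho\cdot DT)=O(\delta D)$.
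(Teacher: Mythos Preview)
Your mechanism is sound and your conservativeness argument is correct. The construction differs from the paper's: the paper sends \emph{separate} threshold pulses whenever $M_v$ crosses a multiple of $d-U$ and raises $M_v$ upon hearing $f{+}1$ matching pulses from any neighboring cluster, whereas you piggyback $M_v$ on the existing $\ClusterSync$ pulses and aggregate via the $(f{+}1)$-th largest aged value. Both achieve the same fault-tolerant flooding with per-hop loss $O(\delta)$; your variant avoids extra messages, the paper's keeps the analysis closer to a vanilla flood.

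There is, however, a real gap in your lower-bound induction. You state the hypothesis as ``for every $j$, all correct nodes of $C_j$ satisfy $M(\cdot)\ge L^{\max}(\cdot)-jc'\delta$ at all sufficiently late times,'' but your base case only establishes this at the \emph{single} time $t$: you use that $u\in C_0$ attains $L^{\max}(t)$, which says nothing about whether $C_0$ holds the maximum at any $t'<t$. Yet your inductive step needs the hypothesis precisely at the earlier send time $t_s\approx t-O(T)$, since bounding the aged value for $w\in C_{j+1}$ at time $t$ requires a bound on $M_{u'}(t_s)$ for $u'\in C_j$. As the cluster holding $L^{\max}$ may move between $t_s$ and $t$, the induction as written is not well-founded.

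The fix is exactly what the paper does: anchor the base case backward in time rather than at $t$. Either (i) pick the node holding $L^{\max}$ at time $t-O(DT)$, apply the intra-cluster bound \emph{there}, and flood forward hop by hop; or equivalently (ii) drop the ``for all times'' claim and explicitly unroll the recursion $M_w(t)\ge M_{u'}(t_s)+(t-t_s)-O(\delta)$ for $\ell$ steps back to some correct $u''\in C_0$ at time $t_s^{(\ell)}=t-\ell\cdot O(T)$, then use $M_{u''}(t_s^{(\ell)})\ge L_{u''}(t_s^{(\ell)})\ge L_u(t_s^{(\ell)})-O(\Err)$ and the single-clock rate bound $L_u(t)-L_u(t_s^{(\ell)})\le (1+O(\rho))(t-t_s^{(\ell)})$ to compare with $L^{\max}(t)=L_u(t)$. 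The $(\thmax-1)\cdot\ell\cdot O(T)=\ell\cdot O(\rho T)=\ell\cdot O(\delta)$ loss from this last step is absorbed into $c'$. Your warmup paragraph already covers the initial $O(DT)$ window needed for (i).
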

\begin{proof}[Proof sketch.]
Each node initializes $M_v(0):=0$ and increases $M_v$ at rate $h_v(t)/(1+\rho)\leq 1$. By Lemma~\ref{lem:Lmax} this can never cause $M_v$ to exceed $L^{\max}$. In addition, nodes transmit pulses (distinguishable from the ones for providing their actual clock values to neighbors) whenever $M_v(t)$ reaches a multiple of $d-U$. Receivers memorize these pulses. When a node registered $k$ such pulses each from $f+1$ nodes in any adjacent cluster, it sets $M_v(t)$ to $(k+1)(d-U)$, provided this increases $M_v$, and sends out the respective pulses. As there are at most $f$ faults in each cluster and messages are under way for at least $d-U$ time, it is still guaranteed that $M_v(t)\leq L^{\max}(t)$.

From the upper bound provided by Lemma~\ref{lem:Lmax}, it follows that $L^{\max}(t)\leq M_v(t)+O(\rho t)$ for all $t$, showing the claim for all times $t\in O(\delta D/\rho)\subseteq O(d D)$. Hence, assume that $t\geq (d+1)D$. Choose $v$ such that $L_v(t-d(D+1))=L_v^{\max}(t-d(D+1))=:L$. As clusters are synchronized up to uncertainty $\delta$, all correct nodes $w$ in $v$'s cluster $C$ satisfy that
\begin{equation*}
M_w(t-d(D+1))\geq L_w(t-d(D+1))\geq L-\delta\,.
\end{equation*}
Note that within $d-U$ time, these nodes will send a pulse indicating some threshold $k(d-U)$ for their $M_w$ values, which increase at least at rate $1-\rho$. These messages are under way for at most $d$ time, and the recipients will set their estimates to at least $(k+1)(d-U)$ upon receiving these values, causing them to send pulses themselves (if they did not do so yet), and so on. Note that this always holds for all nodes in a cluster, hence the result is a fault-tolerant flooding. By time $t$, each correct node $w$ in the graph will have an estimate of $M_w(t)\in L+(d-U)(D+1)-O(\rho d D)$: Nodes locally increase estimates at rate $1-\rho$, while estimates that are communicated are under way for at most $d$ time and cause, in essence, their recipients to add $d-U$ to the received value. Using the bound of $1+O(\rho)$ on the rate at which $L^{\max}$ increases one final time, we conclude that for all $w\in V\setminus F$, it holds that
\begin{equation*}
L^{\max}(t)\in L+(1+O(\rho))d(D+1)\subseteq M_w(t) - U(D+1)-O(\rho d D)\subseteq M_w(t)-O(\delta D)\,.\qedhere
\end{equation*}
\end{proof}
Finally, we need to ensure that nodes that are far behind can indeed use the information gleaned from $M_v$ suffices to catch up when they threaten to fall to far behind.
\begin{thm}\label{thm:global_skew}
Suppose nodes maintain estimates as in Lemma~\ref{lem:estimate_global} choose modes as follows:
\begin{itemize}
  \item They follow the fast and slow mode triggers of Algorithm~\ref{alg:intercluster}.
  \item If neither holds and $L_v(t_v(r))\leq M_v(t_v(r))-c\delta$ (for a sufficiently large constant $c$) at the beginning of round $r$, $v$ switches to fast mode for this round.
  \item If neither of the above triggers holds at time $t_v(r)$, $v$ chooses slow mode in round $r$.
\end{itemize}
Then the global skew is bounded by $O(\delta D)$.
\end{thm}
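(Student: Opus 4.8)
The target is to show that no correct clock ever lags $L^{\max}(t):=\max_{v\in V\setminus F}L_v(t)$ by more than $O(\delta D)$; since $L^{\max}$ dominates every correct clock, this yields a global skew of $O(\delta D)$. The plan is to run the standard ``the maximum moves slowly, nodes far behind catch up'' argument, with the twist that catching up is driven by the estimates $M_v$ rather than by neighbours' clocks. First I would check that Lemmas~\ref{lem:Lmax} and~\ref{lem:estimate_global} survive the refined mode rule: its only difference from the rule assumed there is that a node which fails the fast trigger but has $L_v(t_v(r))\le M_v(t_v(r))-c\delta$ is put into fast mode instead of slow mode. For such a node $L_v(t_v(r))\le L^{\max}(t_v(r))-c\delta$, and over a round $L_v$ changes relative to this value by at most $(\thmax-1)T=O(\delta)$, so for $c$ a large enough constant we still have $L_v(t)\le L^{\max}(t)$ throughout the round, exactly as in the fast-trigger case of Lemma~\ref{lem:Lmax}; hence this node never accelerates $L^{\max}$, Lemma~\ref{lem:Lmax} still gives $L^{\max}$ a rate in $[1,1+O(\rho)]$, and Lemma~\ref{lem:estimate_global} (which uses only this) still yields $L^{\max}(t)-O(\delta D)\le M_v(t)\le L^{\max}(t)$.

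Assuming $\calG$ connected (otherwise argue per component; a single cluster has skew $O(\delta)$ by Corollary~\ref{cor:cluster-skew-bound}), fix $G^\ast=\Theta(\delta D)$ and argue by contradiction that no correct clock's lag behind $L^{\max}$ ever exceeds $G^\ast$. Consider a first time $t^\ast$ at which a correct node's lag reaches $G^\ast$, and let $w$ be a correct node of minimum clock value at $t^\ast$, so $w$'s lag at $t^\ast$ equals $G^\ast$; let $r$ be $w$'s round at $t^\ast$. Since the lag of any correct node can grow by at most $O(\delta)$ over one round ($L_w$ runs at rate $\ge 1$, $L^{\max}$ at rate $\le 1+O(\rho)$, and $\rho T=O(\delta)$) and all lags were below $G^\ast$ before $t^\ast$, the lag of $w$ at $t_w(r)$ lies in $[G^\ast-O(\delta),G^\ast]$. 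Together with $M_w(t_w(r))\ge L^{\max}(t_w(r))-O(\delta D)$ this forces $L_w(t_w(r))\le M_w(t_w(r))-c\delta$ once $G^\ast$ is a sufficiently large multiple of $\delta D$, so the mode rule puts $w$ into fast mode for round $r$ unless $w$'s slow trigger $\ST$ fires at $t_w(r)$.

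Ruling out $\ST$ is the crux. If $\ST$ holds at $w$ at $t_w(r)$ with parameter $s\ge 1$ then, as $\kappa=3\delta$, some neighbouring cluster $A$ has $\tilL_A^w(t_w(r))\le L_w(t_w(r))-2\delta$, hence $L_A(t_w(r))\le L_w(t_w(r))-2\delta+O(\Err)$ by Corollary~\ref{cor:cluster-clock-estimate}, and therefore, by the intra-cluster bound of Corollary~\ref{cor:cluster-skew-bound}, $A$ contains a correct node $u$ with $L_u(t_w(r))\le L_w(t_w(r))-2\delta+O(\Err)$. Since $\delta=(k+5)\Err$ and both $L_u$ and $L_w$ have rates in $[1,\thmax]$, for $k$ a large enough constant the margin $2\delta-O(\Err)$ exceeds $(\thmax-1)T$, the largest amount by which $|L_u-L_w|$ can move over the interval $[t_w(r),t^\ast]$ (which spans less than one round); hence $L_u(t)<L_w(t)$ on all of $[t_w(r),t^\ast]$, and in particular $L_u(t^\ast)<L_w(t^\ast)$, contradicting $w$ having the minimum clock at $t^\ast$. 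So $w$ is in fast mode throughout round $r$.

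It remains to see that a fast $w$ closes its gap. In fast mode $\indf_w\equiv 1$ and $\delta_w\ge 0$ (Lemma~\ref{lem:logical-clock-rate}), so $L_w$ runs at rate at least $1+\mu=1+c_2\rho$ at every instant, whereas $L^{\max}$ grows at amortized rate $1+O(\rho)$ up to an additive $O(U)=O(\delta)$ per round (Lemma~\ref{lem:Lmax}). Since a round has length $\Theta(1/\rho)$ times $\Err$ and $c_2=32$ exceeds the implied constant, over one fast round $w$'s lag drops by $\Omega(\delta)$, which dominates the additive slack; feeding this back into the first-violation argument (and, if needed, iterating it over the preceding round to absorb the slack) shows that $w$'s lag cannot in fact have reached $G^\ast$ at $t^\ast$, a contradiction. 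Hence every correct clock lags $L^{\max}$ by $O(\delta D)$ at all times, so the global skew is $O(\delta D)$. Beyond the routine rate estimates, the main difficulty is precisely this interface: the GCS-style fast/slow triggers are evaluated only at round boundaries and from noisy cluster-clock estimates, and are tuned for the local-skew argument, whereas the flooding-based minimum-clock rule must be strong enough that the minimum-clock node is provably never blocked into slow mode yet weak enough not to disturb those triggers; choosing $G^\ast$, $\kappa$, $\delta$, and $k$ to reconcile all of this (and to make the per-round catch-up of a fast node beat the $O(U)$ per-round slack in the rate bound for $L^{\max}$) is the delicate part, and the parameter choices $c_1=\Theta(1/\rho)$, $c_2=32$, $\delta=(k+5)\Err$, $\kappa=3\delta$ are exactly what make these estimates work out.
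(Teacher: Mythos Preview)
Your proposal is correct and follows essentially the same approach as the paper's proof sketch: first verify that the extra ``catch-up'' rule does not disturb Lemmas~\ref{lem:Lmax} and~\ref{lem:estimate_global}, then argue that the minimum-clock node cannot satisfy $\ST$ (since that would exhibit a correct node with an even smaller clock) and must therefore be in fast mode, so the gap to $L^{\max}$ cannot grow beyond $O(\delta D)$. Your first-violation framing and your explicit handling of the additive $O(U)$ slack in the amortized rate bound for $L^{\max}$ are slightly more careful than the paper's sketch, but the structure and all the key steps coincide.
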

\begin{proof}[Proof Sketch.]
First, we argue that the second rule does not invalidate Lemmas~\ref{lem:Lmax} and~\ref{lem:estimate_global}. To see this, note that so long as $M_v(t_v(r))\leq L^{\max}(t_v(r))$, the second rule being applied implies for $t\in [t_v(r),t_v(r+1)]$ by Lemma~\ref{lem:logical-clock-rate} that
\begin{align*}
L^{\max}(t)-L_v(t)&\geq L^{\max}(t_v(r))-L_v(t_v(r)+(1-\vartheta_g)(t-t_v(r))\\
&\in L^{\max}(t_v(r))-(M_v(t_v(r)-c\delta) - O(\rho T)\\
&\subseteq (c-O(1))\delta >0\,,
\end{align*}
where the last step uses that $c$ is sufficiently large. In other words, we have that $L^{\max}$ cannot increase faster than it would without the second rule due to $v$ in round $r$, provided that $M_v(t_v(r))\leq L^{\max}(t_v(r))$. Using this argument whenever the second rule is applied, we can repeat the proofs of the two lemmas simultaneously and inductively over time, showing both their statements still apply.

Now assume that at time $t$, $L_v(t)=\min_{w\in V\setminus F}\{L_w(t)\}<L^{\max}(t)-c\delta(D+1)$. As $c$ is sufficiently large, by Lemma~\ref{lem:estimate_global}, we have that $L_v(t)\leq M_v(t_v(r))-c\delta$. Moreover, the slow mode trigger cannot be satisfied by $v$: this would imply that some $w$ satisfied $\tilL_w^v(t_v(r)) \leq L_v(t_v(r)) - \kappa + \delta \leq L_v(t_v(r)) - 4 \delta$. Since $\abs{\tilL_w^v(t_v(r)) - L_w(t_v(r))} \ll \delta$ and $(\thmax - 1) T \ll \delta$, this contradicts the minimality of $L_v(t)$. Thus the clock satisfying $L_v(t) = L^{\min}(t)$ will be in fast mode, as will any clock satisfying $L_v(t) - L^{\min}(t) \leq \delta / 2$. Since the fastest clock is in slow mode, and the slowest clock runs fast, the global skew cannot increase beyond $O(\delta D)$.
\end{proof}

\end{document}